\newcommand{\eqdef}{\stackrel{\text{\tiny def}}{=}} 
\newcommand{\from}{\colon} 
\DeclarePairedDelimiter{\xpar}{\lparen}{\rparen} 
\newcommand{\aset}[1]{\mathcal{#1}}
\providecommand{\given}{}
\DeclarePairedDelimiterX{\set}[1]{\{}{\}}{%
  \renewcommand{\given}{\nonscript\;\delimsize\vert\nonscript\;\mathopen{}} #1%
} 
\DeclareMathOperator{\interior}{int} 
\providecommand{\given}{}
\DeclarePairedDelimiterX{\meanbrackets}[1]{\lbrack}{\rbrack}{%
  \renewcommand{\given}{\nonscript\,\delimsize\vert\nonscript\,\mathopen{}} #1%
}
\DeclareMathOperator*{\meansymbol}{\mathbb{E}}
\NewDocumentCommand{\mean}{e{_}}{%
  \meansymbol%
  \IfNoValueF{#1}{\sb{#1}}%
  \meanbrackets%
} 
\newcommand{\param}[1]{%
  \texorpdfstring{\ensuremath{#1}\protect\nobreakdash}{#1}%
} 
\newcommand{\gradient}{\nabla} 
\DeclarePairedDelimiter{\abs}{\lvert}{\rvert} 
\newcommand{\bigoh}{\mathcal{O}}       
\newcommand{\amatrix}[1]{\mathbf{#1}}
\newcommand*{\T}{\mkern-1.5mu\mathsf{T}} 
\newcommand*{\inv}{-1} 
\DeclarePairedDelimiterX{\inner}[2]{\langle}{\rangle}{#1,\,#2} 
\DeclarePairedDelimiter{\norm}{\lVert}{\rVert} 
\DeclareMathOperator{\coeffs}{vec}
\newcommand\todo[1]{\noindent\textcolor{red}{#1}}
\DeclareMathOperator{\degree}{deg} 
\DeclarePairedDelimiterX{\braket}[3]{\langle}{\rangle}{%
  #1\,\delimsize\vert\,\mathopen{}%
  \ifblank{#2}{}{#2\,\delimsize\vert\,\mathopen{}}%
  #3%
} 
\DeclareMathOperator{\closure}{cl} 
\DeclareMathOperator{\project}{proj} 
\DeclarePairedDelimiter{\range}{\llbracket}{\rrbracket} 
\newcommand{\sosconcavegames}{%
  \mathrm{SOSCG}}                  
\newcommand{%
  \naturals}{\mathbb{N}}            
\newcommand{%
  \integers}{\mathbb{Z}}    
\newcommand{%
  \rationals}{\mathbb{Q}} 
\newcommand{%
  \nonnegativeintegers
}{%
  \integers_{\geq 0}}
\newcommand{%
  \reals}{\mathbb{R}}               
\newcommand{%
  \nonnegativereals
}{%
  \reals_{\geq 0}}
\newcommand{%
  \complex}{\mathbb{C}}             
\newcommand{%
  \polynomials}{\reals}
\let\polys\polynomials              
\newcommand{%
  \apolynomial}{p}
\newcommand{%
  \sumofsquares}{\Sigma}
\let\sos\sumofsquares               
\newcommand{%
  \quadraticmodule}{Q}
\let\quadmod\quadraticmodule        
\newcommand{%
  \asumofsquares}{\sigma}
\let\asos\asumofsquares             
\newcommand{%
  \ball}{\aset{B}}                  
\newcommand{%
  \unitvector}{e}                   
\newcommand{%
  \identitymatrix}{\amatrix{I}}                     
\let\identity\identitymatrix        
\newcommand{%
  \hessian}{\amatrix{H}}            
\newcommand{%
  \jacobian}{\amatrix{J}}           
\newcommand{%
  \sjacobian}{\amatrix{SJ}}
\newcommand{%
  \mineigenvalue%
}{%
  \lambda_{\text{min}}%
}                                   
\newcommand{%
  \maxeigenvalue%
}{%
  \lambda_{\text{max}}%
}                                   
\newcommand{%
  \im}{i}                           
\newcommand{%
  \quantumstates}{\mathbb{Q}} 
\newcommand{%
  \quantumstate}{\rho}
\newcommand{%
  \quasiquantumstates%
}[1]{%
  \quantumstates^{(#1)}}
\newcommand{%
  \quasiquantumstate}{\amatrix{X}}
\newcommand{%
  \game}{\mathscr{G}}               
\newcommand{%
  \numberofplayers}{n}   
\let\noplayers\numberofplayers      
\newcommand{%
  \pureactions}{\aset{A}}
\let\pures\pureactions              
\newcommand{%
  \numberofpureactions}{m}
\let\nopures\numberofpureactions    
\newcommand{%
  \pureaction}{s}                  
\newcommand{%
  \payofffunction}{u} 
\let\payoff\payofffunction          
\newcommand{%
  \payoffgradient}{v}               
\newcommand{%
  \payoffmatrix}{\amatrix{A}}       
\newcommand{%
  \actions}{\aset{X}}               
\newcommand{%
  \action}{x}                       
\newcommand{%
  \updatepolicy}{f}                
\newcommand{%
  \timehorizon}{\reals}             
\newcommand{%
  \initialaction}{\action_{0}}      
\newcommand{%
  \flow}{\phi}                      
\newcommand{%
  \numberofdata}{K}  
\theoremstyle{plain}
\newtheorem{theorem}{Theorem}[section]
\newtheorem{lemma}[theorem]{Lemma}
\newtheorem{proposition}[theorem]{Proposition}
\newtheorem{corollary}[theorem]{Corollary}
\theoremstyle{definition}
\newtheorem{definition}[theorem]{Definition}
\newtheorem{assumption}{Assumption}
\newtheorem{example}[theorem]{Example}
\newtheorem{remark}[theorem]{Remark}
\setlist[enumerate,1]{label={\arabic*)}}
\newcommand\blfootnote[1]{%
  \begingroup
  \renewcommand\thefootnote{}\footnote{#1}%
  \addtocounter{footnote}{-1}%
  \endgroup
}
\newabbreviation{CS}{CS}{%
  \initial{C}auchy–\initial{S}chwarz inequality
}
\newabbreviation[
  description = {\initial{S}um-\initial{o}f-\initial{S}quares}
]{SOS}{SOS}{%
  sum-of-squares
}
\newabbreviation[
  description = {\initial{N}on-Deterministic \initial{P}olynomial-Time},
  category = acronym
]{NP}{NP}{%
  non-deterministic polynomial-time
}
\newabbreviation[
  plural = {Nash equilibria},
  shortplural = {NE},
  description = {\initial{N}ash \initial{E}quilibrium},
  descriptionplural = {\initial{N}ash \initial{E}quilibria}
]{NE}{NE}{%
  Nash equilibrium
}
\newabbreviation[
  description = {\initial{S}emi-\initial{D}efinite \initial{P}rogram}
]{SDP}{SDP}{%
  semidefinite program
}
\title{Certifying Concavity and Monotonicity in Games via Sum-of-Squares Hierarchies}
\author[1]{Vincent Leon}
\author[2]{Iosif Sakos}
\author[2]{Ryann Sim}
\author[2,3,4]{Antonios Varvitsiotis}
\affil[1]{University of Illinois Urbana-Champaign, USA}
\affil[2]{Singapore University of Technology and Design, Singapore}
\affil[3]{Centre for Quantum Technologies, National University of Singapore, Singapore}
\affil[4]{Archimedes/Athena Research Center, Greece}
\date{\vspace{-5ex}}
\begin{document}

\maketitle
\blfootnote{Contact: \texttt{leon18@illinois.edu}, \{\texttt{iosif\_sakos, ryann\_sim, antonios}\}\texttt{@sutd.edu.sg}}
\begin{abstract}
Concavity and its refinements underpin tractability in multiplayer games, where
players independently choose actions to maximize their own payoffs which depend on other players’ actions.  
In \emph{concave} games, where players' strategy sets are compact and convex, and their payoffs are concave in their own actions, strong guarantees follow: Nash equilibria always exist and decentralized algorithms converge to equilibria. If the game is furthermore \emph{monotone}, an even stronger guarantee holds: Nash equilibria are unique under strictness assumptions.
Unfortunately, we show that \emph{certifying} concavity or monotonicity is NP-hard, already for games where utilities are multivariate polynomials and compact, convex basic semialgebraic strategy sets -- an expressive class that captures extensive-form games with imperfect recall.  
On the positive side, we develop two hierarchies of sum-of-squares programs that certify concavity and monotonicity of a given game, and each level of the hierarchies can be solved in polynomial time.
We show that almost all concave/monotone games are certified at some finite level of the hierarchies.
Subsequently, we introduce  SOS-concave/monotone games, which  globally approximate concave/monotone games, and show that for any given game we can compute the closest SOS-concave/monotone game in polynomial time. Finally, we apply our techniques to canonical examples of imperfect recall extensive-form games.

\end{abstract}

\glsresetall 
\section{Introduction}
\label{sec:Introduction}
Game theory models  settings where multiple decision-makers independently maximize personal objectives that depend on the actions of others. Formally, a game with \( n \) players is modeled by assigning to each player \( i \) a strategy set \( \mathcal{X}_i \subset \mathbb{R}^{m_i} \) and a utility function \( u_i(x_i, x_{-i}) \), where \( x_i \in \mathcal{X}_i \) is player \( i \)'s action  and \( x_{-i} \) denotes the actions of all other players. The interdependence of players' utilities makes analyzing the collective behavior of such systems both rich and challenging.

 The canonical solution concept in game theory is the \emph{Nash equilibrium}~\cite{Nash1951}, a product distribution over strategies in which no player can unilaterally deviate to improve their utility, given the strategies of the other players. 
 While Nash equilibria are guaranteed to exist in finite normal-form games, several key questions must be addressed in games with \textit{continuous, infinite} action spaces: Do Nash equilibria always exist? If one exists, is it unique (thereby avoiding the equilibrium selection problem)? And crucially, can it be computed efficiently using distributed algorithms?

Extensive research has identified \emph{concavity}, and its refinements, as key enablers in addressing these fundamental questions. In the setting of games, this entails assuming that each strategy set \( \mathcal{X}_i \) is compact and convex. Furthermore, concavity of  players' utilities  can manifest in at least two distinct forms. First, we have the class of \emph{concave games}, where for each player \( i \), the function \( x_i \mapsto u_i(x_i, x_{-i}) \) is continuous and concave for every fixed \( x_{-i} \).  Second, we have the more restrictive  class of \emph{monotone games}, where utility functions are smooth and the (negative) concatenated gradient~map
\[
x \mapsto \left( - \nabla_{x_1} u_1(x_1, x_{-1}), \dots, - \nabla_{x_n} u_n(x_n, x_{-n}) \right)
\]
is a \emph{monotone operator}.
Every monotone game is concave, but the converse does not necessarily hold. Concave and monotone games were first studied in the seminal work of Rosen~\cite{rosen1965existence}, who established that Nash equilibria always exist in these class of games,  significantly  extending the guarantees of classical results such as von Neumann’s minimax theorem~\cite{von1928theorie} for two-player zero-sum (normal-form) games, Nash's aforementioned result for finite normal-form games, and Sion's minimax theorem for two-player convex-concave games~\cite{sion1958minimax}. Moreover, Rosen showed  that  \textit{strictly} monotone games have a \textit{unique} Nash equilibrium.
At the same time, concave/monotone games have been extensively studied due to their inherent expressibility -- they have been used to model various fundamental settings in economics and optimization, including but not limited to resource allocation~\cite{roughgarden2015local}, Cournot competition~\cite{even2009convergence} and robust power management~\cite{zhou2021robust}.

Finally, concave and monotone games have also received considerable attention in the context of equilibrium computation. A substantial body of work has analyzed decentralized dynamics that achieve strong performance guarantees in concave games~\cite{even2009convergence,stein2011correlated,hsieh2021adaptive,stoltz2007learning}. Most recently,~\cite{farina2022near} established a \( \bigoh(\mathrm{polylog}\, T) \) regret bound for uncoupled learning dynamics in general convex games, extending classical results beyond structured settings such as normal-form and extensive-form games.
In monotone games, decentralized dynamics have also been shown to converge in the \emph{last-iterate sense} to Nash equilibria~\cite{mertikopoulos2019learning,cai2022finite,cai2023doubly,golowich2020tight,facchinei2010generalized}. Moreover, in strictly monotone games, one can guarantee last-iterate convergence to the \emph{unique} Nash equilibrium~\cite{zhou2021robust,ba2025doubly,sandholm2015population,sorin2015finite}, further underscoring the computational tractability of this class.

However, despite their favorable properties, it is not clear how to efficiently verify concavity and monotonicity. For instance, establishing that \( u_i \) is concave over \( \mathcal{X}_i \) requires checking that the Hessian \( \nabla^2_{x_i} u_i(x_i, x_{-i}) \) is negative semidefinite for every \( x_i \in \mathcal{X}_i \) and every \( x_{-i} \in \mathcal{X}_{-i} \), an infinite family of conditions. In view of this, a fundamental computational challenge arises:

\begin{quote}
\centering
\emph{Is it possible to efficiently verify that a game is concave or monotone?}
\end{quote}

\paragraph{Our Techniques and Contributions.} 
 Our starting point is to demonstrate that deciding whether a game is concave or monotone is computationally hard, cf. Theorem \ref{thm:GameConcavityHardness}. We establish this hardness result  for the class of 
 polynomial games~\cite{dresher1950polynomial,karlin1959mathematical,parrilo2006polynomial} in which each player's utility is a multivariate polynomial and players' strategy sets are  compact convex  basic semialgebraic sets -- that is, sets defined by polynomial equality and inequality constraints.  This class is highly expressive, capturing  for instance extensive-form games with imperfect recall~\cite{piccione1997interpretation}. Our hardness result builds on recent advances in polynomial optimization~\cite{ahmadi2013np,ahmadi2020complexity}, which show  that unless \( \mathrm{P} = \mathrm{NP} \), there is no polynomial-time (or even pseudo-polynomial-time) algorithm that can decide whether a multivariate polynomial of degree four (or any higher even degree) is globally convex. This result presents a challenge for game theorists. On one hand, concave/monotone games are expressive classes of games that capture many applications and have desirable equilibriation properties. However, verifying their concavity/monotonicity is  hard for the class of polynomial games over convex compact  basic semialgebraic sets.

Motivated by this, we next seek to identify tractable sufficient conditions for concavity and monotonicity, as well as special classes of games for which these properties can be efficiently certified. Our approach is based on the observation that, since polynomial games are smooth, these properties can be verified via the positive semidefiniteness of the Hessian and the symmetrized Jacobian, respectively. As a concrete example, a polynomial game is concave if, for each player \( i \), the (negative) Hessian of the utility function is positive semidefinite for all \( x_i \in \mathcal{X}_i \) and \( x_{-i} \in \mathcal{X}_{-i} \). By the variational characterization of positive semidefiniteness, this  is equivalent to requiring that
$p_i(x, y) \coloneqq - y^\top \nabla^2_{x_i} u_i(x_i, x_{-i}) y \geq 0$,
for all \( x \in \times_{i=1}^n \mathcal{X}_i \) and \( y \in \mathcal{B} \), where \( \mathcal{B} \subset \mathbb{R}^{m_i} \) is the unit ball. Since \( u_i \) is a polynomial and \( \mathcal{X} \) is a closed basic semialgebraic set, the function \( p_i(x, y) \) is a polynomial  over a semialgebraic domain.

Although testing nonnegativity of polynomials is, in general, computationally hard~\cite{murty1987some}, a powerful approach from polynomial optimization, pioneered in~\cite{parrilo2006polynomial,lasserre2009moments-book}, is to seek a \emph{sum-of-squares (SOS)} decomposition  that certifies  nonnegativity. This  idea has  also been recently used to develop  certificates for the global convexity of polynomials~\cite{ahmadi2010equivalence}.  Searching for an SOS   decomposition of bounded degree can be done in polynomial time via semidefinite programming. 

In our setting, the application of the SOS framework leads to a hierarchy of increasingly stronger sufficient conditions for certifying concavity or monotonicity, each of which can be checked in polynomial time via semidefinite programming. At the \( \ell \)-th level of the hierarchy, we check whether \( p_i(x, y) \) admits a degree-\( \ell \) SOS decomposition over \( \mathcal{X} \times \mathcal{B} \). While the SOS framework does not eliminate the inherent hardness of the problem, it offers a practical trade-off: by relaxing the problem into a sequence of SDPs, one obtains a hierarchy of increasingly tight sufficient conditions with provable convergence in the limit. The main limitation is that the size of the resulting SDPs grows with the level of the hierarchy. 

Leveraging these ideas, our main contributions are summarized below:

\begin{itemize}
    
    \item We construct a hierarchy of optimization problems that provide increasingly strong certificates of monotonicity/concavity for \gls{Polynomial} games over compact, convex \glspl{BasicSemialgebraicSet} (cf.~\cref{thm:SOSMaximumEigenvalueConvergence}). Furthermore, each level of the hierarchy can be solved in polynomial time via semidefinite programming.

    \item We show that for every \gls{StrictlyMonotone}/\gls{StrictlyConcave} game, a certificate is always found at a \emph{finite} level of the hierarchy (cf.~Statement~\ref{thm:SOSMaximumEigenvalueConvergence_4} in \cref{thm:SOSMaximumEigenvalueConvergence}). 
    More importantly, we show that for \emph{almost all} \gls{Monotone}/\gls{Concave} games, such a certificate can be obtained at some finite level of the hierarchy (cf.~\cref{thm:StrictlyMonotoneGamesFullMeasure}).
     
    \item We define subclasses of \gls{Monotone}/\gls{Concave} \gls{Polynomial} games over compact, convex \glspl{BasicSemialgebraicSet}, called \param{\ell}-\gls{SOSMonotone} (resp.~\param{\ell}-\gls{SOSConcave}) games, for which monotonicity (resp.~concavity) can be certified by the \param{\ell}-th level of the hierarchy (cf.~\cref{def:SOSMonotoneGame}).
    We show that this class of games globally approximates the class of \gls{Monotone} (resp.~\gls{Concave}) games, and importantly, given any \gls{Polynomial} game, the closest \param{\ell}-\gls{SOSMonotone} (resp.~\param{\ell}-\gls{SOSConcave}) game can be computed by solving a \emph{single} SDP    (cf.~\cref{thm:SOSMonotoneGamesDenseInMonotoneGames}). 

    \item We apply our proposed methods to several canonical and new examples of  extensive-form games with imperfect recall (cf.~\cref{sec:Applications}). We show examples of how our hierarchies can be used to verify monotonicity/concavity in these games, as well as to find the closest \param{\ell}-\gls{SOSMonotone} (resp.~\param{\ell}-\gls{SOSConcave}) game with respect to an appropriate norm.

\end{itemize}


\section{Preliminaries}
\label{sec:PolynomialGamesOverBasicSemialgebraicSets}

\subsection{Polynomial Games over Semialgebraic Sets}
We consider an \param{\noplayers}-player continuous game denoted by $\game = \game(\range{\noplayers}, \actions, \payoff)$.
For each player $i \in \range{\noplayers}$, we denote their set of actions by $\actions_{i} \subseteq \reals^{\nopures_{i}}$ and their payoff function by $\payoff_{i} \from \actions \to \reals$, where the set of joint actions $\actions \eqdef \actions_{1} \times \dots \times \actions_{\noplayers}$ is a \emph{compact, convex set}. Each player $i$ selects an action  $\action_{i} \in \actions_{i}$. We denote by $\action \eqdef \xpar[\big]{\action_{1}, \dots, \action_{\noplayers}}$ the joint action profile of all players, and by $\actions \eqdef \actions_{1} \times \dots \times \actions_{\noplayers} \subseteq \reals^{m}$ their joint action space, where $\nopures \eqdef \nopures_{1} + \dots + \nopures_{\noplayers}$. We also denote by $\payoff \eqdef \xpar[\big]{\payoff_{1}, \dots, \payoff_{\noplayers}}$ the ensemble of the players' payoff functions.

In this work, we focus on games $\game$ where $\payoff_{1}, \dots, \payoff_{\noplayers}$ are \emph{polynomial} functions, and $\actions$ is a \gls{BasicSemialgebraicSet}.
In particular, we assume that
\begin{equation}
\label{eq:BasicSemialgebraicSet}
  \actions 
    \equiv \set*{
      \action \in \reals^{\nopures_{1}} \times \dots \times \reals^{\nopures_{\noplayers}}
      \given
      \begin{aligned}
        &g_{j}(\action) \geq 0, \quad \ j \in \range{m_{g}}, \\ 
        &h_{j}(\action) = 0, \quad \ j \in \range{m_{h}}
    \end{aligned}
    },
    \end{equation}
   where  
    $g_{1}, \dots, g_{m_{g}}, h_{1}, \dots, h_{m_{h}} \in \polys[\action].$

%

We refer to
%
  $d
    = \max\set{
      \degree(\payoff_{1}), \dots, \degree(\payoff_{\noplayers}),
      \degree(g_{1}), \dots, \degree(g_{m_{g}}),
      \degree(h_{1}), \dots, \degree(h_{m_{h}})
    }$
%
as the degree of the game.
For each $\noplayers, d \in \naturals$, we use $\gls{PolynomialGames}_{(\noplayers, d)}$ to denote the set of \param{\noplayers}-player, \param{d}-degree \emph{\gls{Polynomial}} games over $\actions$.

$\gls{PolynomialGames}_{(\noplayers, d)}$ is isomorphic to $\reals^{M}$, where $M \eqdef n \cdot \binom{\nopures + d}{d}$.
%
%
In particular, we define the isomorphism 
%
%
\begin{equation}
  \game 
    \mapsto \xpar[\big]{
      \coeffs(\payoff_{1}),
      \dots, 
      \coeffs(\payoff_{\noplayers})
    }^{\T},
      \qquad \forall \game \in \gls{PolynomialGames}_{(\noplayers, d)},
\end{equation}
where $\coeffs(\payoff_{i})$ is the coefficient vector of $\payoff_{i}$  for each $i \in \range{\noplayers}$.
Throughout the paper, we also consider the topology on $\gls{PolynomialGames}_{(\noplayers, d)}$ induced by the norm
\begin{equation}
\label{eq:topologynorm}
  \norm{\game}
    = \max_{i \in \range{\noplayers}} \norm{\coeffs(\payoff_{i})}_{\infty}.
\end{equation}

When necessary, we use the convention $\action = (\action_{i}, \action_{-i})$ to distinguish the action $\action_{i}$ of player~$i$ in a joint action $\action \in \actions$ from the actions of the rest of the players. In a similar vein, we use $\actions_{-i}$ to denote the joint action space of all players except player~$i$.

A fundamental equilibrium concept in game theory is the \gls{NE}~\cite{Nash1951}, which are strategy profiles from which players have no incentive to unilaterally deviate.
%
Concretely, a joint action profile $\action^{*} \in \actions$ is a \gls{NE} of a game $\game$ if
  \begin{equation}
  \label{eq:NashEquilibrium}
    \payoff_{i}(\action^{*})
      \geq \payoff_{i}(\action_{i}, \action^{*}_{-i}),
      \qquad \forall \action_{i} \in \actions_{i},
      \quad i \in \range{\noplayers}.
  \end{equation}

\subsection{Sum-of-Squares Optimization} 

Given a closed \gls{BasicSemialgebraicSet} $\actions$ as in \eqref{eq:BasicSemialgebraicSet}, the \gls{QuadraticModule} $\quadmod(\actions)$ of $\actions$ is a set of functions defined as
\begin{equation}
  \quadmod(\actions)
    \eqdef \set*{
      \asos_{0} + \sum_{j = 1}^{m_{g}} g_{j} \asos_{j} + \sum_{j = 1}^{m_{h}} h_{j} p_{j} 
      \given \begin{aligned}
        &\asos_{0}, \dots, \asos_{m_{g}} \in \sos[\action], \\
        &p_{1}, \dots, p_{m_{h}} \in \polys[\action]
      \end{aligned}
    },
\end{equation}
where $\sos[\action] \subset \polys[\action]$ is the set of \gls{SOS} polynomials on variables $\action$, i.e., the set of all polynomials of the form
\begin{equation}
  \asos(\action)
    = \sum_{k = 1}^{K} q_{k}^{2}(\action),
      \qquad \forall \action \in \reals^{\nopures},
  \qquad \text{where} \qquad
  q_{1}, \dots, q_{K} \in \polys[\action].
\end{equation}
Furthermore, for all $d \geq 0$, we define $\quadmod_{d}(\actions)$ as the restriction of $\quadmod(\actions)$ to Putinar-type decompositions of degree at most $2d$ given by $\degree(\asos_{0}), \dots, \degree(\asos_{m_{g}}), \degree(p_{1}), \dots, \degree(p_{m_{h}}) \leq 2d$.

As part of the analysis in \cref{sec:SemidefiniteProgramming}, we require that the \gls{QuadraticModule} $\quadmod(\actions)$ is Archimedean, a property formally given for completeness in the following \lcnamecref{def:ArchimedeanProperty}.

\begin{definition}
\label{def:ArchimedeanProperty}
  A \gls{QuadraticModule} $\quadmod(\actions)$ is called Archimedean if there exists $N \in \naturals$ such that
  \begin{equation}
    N - \sum_{i = 1}^{\nopures} \action_{i}^{2} \in \quadmod(\actions).
  \end{equation}
\end{definition}

\subsection{Concave \& Monotone Games}
In this section, we introduce two important subclasses of continuous games, concave games and monotone games, both defined in~\cite{rosen1965existence}. These classes are particularly significant due to their implications for the existence and uniqueness of Nash equilibria.

%
\begin{definition}[Concave Games]
\label{def:ConcaveGame}
  A game $\game$ is concave if, for all players $i \in \range{\noplayers}$, the function 
   \( x_i \mapsto u_i(x_i, x_{-i}) \) is concave for every fixed $x_{-i}\in \actions_{-i}$. Furthermore,  if $\game$ is polynomial  then it is  concave 
    if and only if the Hessian matrices of the payoff functions $\payoff_{1}$, \dots, $\payoff_{\noplayers}$ with respect to $\action_{1}$, \dots, $\action_{\noplayers}$, respectively, are negative semidefinite, i.e., 
\begin{equation}
\label{eq:ConcaveGameSecondOrderTest}
  \hessian_{\payoff_{i}}(\action)
    \eqdef \gradient_{\action_{i}}^{2} \payoff_{i}(x)
    \preceq 0,
    \qquad \forall \action \in \actions,
    \quad i \in \range{\noplayers}.
\end{equation}
\end{definition}
Rosen~\cite{rosen1965existence} proved that a \gls{NashEquilibrium} exists in every \gls{Concave} game, thereby extending Nash's equilibrium existence result to a broad class of continuous games. He also identified an important subclass of concave games with additional structural properties, which are now typically referred to as monotone games~\cite{mertikopoulos2019learning}.

%
%
\begin{definition}[Monotone Games]
\label{def:MonotoneGame}
  A game $\game$ is \gls{Monotone} if the negative of its  concatenated gradient mapping,    referred to by Rosen as the  pseudogradient,
   \begin{equation}
    \label{eq:PayoffPseudoGradient}
     \payoffgradient(\action)
      \eqdef \xpar[\big]{
        \gradient_{\action_{1}}^{\T} \payoff_{1}(\action), 
        \dots, 
         \gradient_{\action_{\noplayers}}^{\T} \payoff_{\noplayers}(\action)
      }^{\T}
  \end{equation}
  is a \gls{Monotone} operator on $\actions$, 
  i.e., 
  \begin{equation}
  \label{eq:MonotoneGame}
    \inner[\big]{
      \payoffgradient(\action) - \payoffgradient(\action') 
    }{
      \action - \action'
    } \leq 0,
      \qquad \forall \action, \,\action' \in \actions.
  \end{equation}
  Furthermore,  if $\game$ is polynomial,  it is well-known that   (cf. \cite[Proposition 12.3]{rockafellar_variational_1998})
   it is  monotone if and only if the symmetrized Jacobian matrix with respect to $\payoffgradient(\action)$ is negative semidefinite, i.e., 
\begin{equation}
\sjacobian(\action)\eqdef
\frac{1}{2}\xpar[\big]{\jacobian(\action)+  \jacobian(\action)^{\T}} \preceq 0, \qquad \forall \action \in \actions,
\end{equation}
where, for all $\action \in \actions$, $\jacobian(\action)$ is the Jacobian matrix of $\payoffgradient(\action)$ (see Appendix~\ref{app:additionalprelims} for a definition of $\jacobian(\action)$).

  %
\end{definition}

It is easy to verify that if a game \( \game \) is monotone, then it is also concave; however, the converse does not hold. We now turn our attention to the strict versions of these definitions.

\begin{definition}[Strictly Concave/Monotone Games]
\label{def:StrictlyConcaveGame}
Consider   a polynomial game $\game$ over a \gls{BasicSemialgebraicSet} $\actions$. Then, $\game$ is \gls{StrictlyConcave}  over  $\actions$ if 
\begin{equation}
    \hessian_{\payoff_{i}}(\action)
        \prec 0, 
            \qquad  \forall \action \in \actions,
            \quad i \in \range{\noplayers}.
\end{equation}
  Furthermore, $\game$ is \gls{StrictlyMonotone} over $\actions$  
  if  $\sjacobian$ is negative definite on $\actions$, i.e., 
  \begin{equation}\label{sm}
      \sjacobian(\action)
        \prec 0,
            \qquad \forall \action \in \actions.
  \end{equation}
Finally, Rosen \cite{rosen1965existence} also studied the class of \gls{DiagonallyStrictlyConcave} games, defined as those for which equality in \eqref{eq:MonotoneGame} holds if and only if \( x = x' \).

\end{definition}

Several important connections and inclusions between the aforementioned game classes we study are summarized in \cref{fig:inclusions}. The proofs for these inclusions follow directly from the definitions of the games and standard results from \cite{rockafellar_variational_1998}. 
  Of particular interest to us is the fact that, if $\game$ is strictly monotone (i.e., it satisfies \eqref{sm}), then $\game$ is both \gls{DiagonallyStrictlyConcave}   and strictly concave.   
  Moreover, \cite{rosen1965existence} also proved that \gls{DiagonallyStrictlyConcave} games admit a \emph{unique} \gls{NashEquilibrium}.

\begin{figure}[ht]
    \centering
    \includegraphics[width=.5\linewidth]{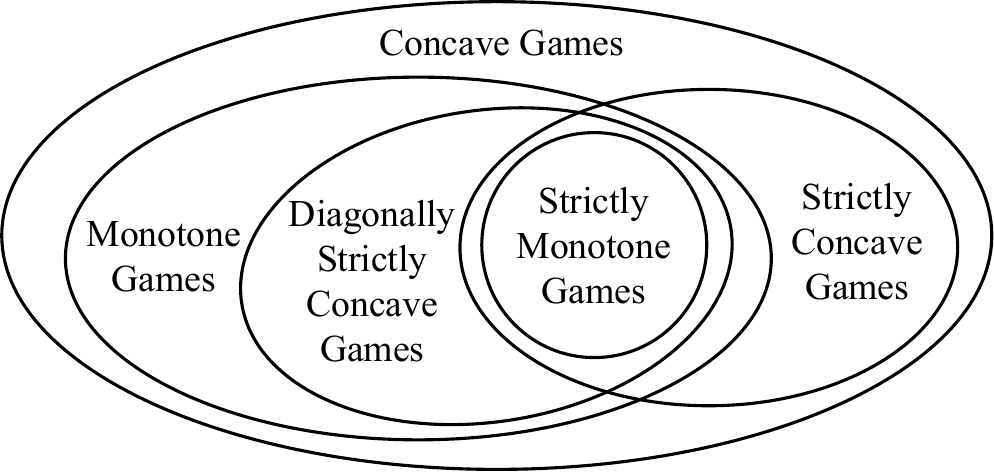}
    \caption{Connections and inclusions among the game classes we study.}
    \label{fig:inclusions}
\end{figure}

\section{Certifying Concavity and Monotonicity in Polynomial Games}

\label{sec:hardnessverifying}

As discussed in the introduction, concave and monotone games are highly expressive and have strong theoretical properties, including the existence of Nash equilibria, uniqueness under strictness conditions, and convergence of distributed dynamics to equilibrium. Given these favorable features, a natural question arises: can concavity/monotonicity be efficiently certified? 
 In this section, we investigate this question in the setting of polynomial games with semialgebraic strategy sets.

To investigate hardness of deciding  concavity/monotonicity, we leverage     recent breakthroughs in polynomial optimization, particularly recent works on the complexity of certifying convexity of polynomials. Specifically, it has been shown in \cite{ahmadi2013np} that deciding whether a quartic (multivariate) polynomial is globally convex is NP-hard. Subsequently, \cite{ahmadi2020complexity} demonstrated that determining whether a cubic polynomial is convex over a box is also NP-hard. Building on these results, the starting point of this work is the observation that verifying whether a polynomial game belongs to the class of concave or monotone games is also \gls{NP}\nobreakdash-hard. This result is given below, and proven in Appendix~\ref{app:proof:hardness}.
%

%
%

\begin{restatable}{theorem}{gameconcavityhardness}
\label{thm:GameConcavityHardness}
  Let $\game(\range{\noplayers}, \actions, \payoff)$ be a polynomial game over a compact convex \gls{BasicSemialgebraicSet}.
  If for some player $i$, $\payoff_{i}$ is a polynomial of degree at least $3$ with respect to $\action_{i} \in \actions_{i}$, verifying whether $\game$ is concave or monotone is strongly \gls{NP}\nobreakdash-hard.
  %
\end{restatable}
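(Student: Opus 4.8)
The plan is a polynomial-time many-one reduction from deciding convexity of a cubic polynomial over a box, which is strongly \gls{NP}-hard by \cite{ahmadi2020complexity}. Given a cubic polynomial $p$ in $k$ variables, consider the box $\actions_{1} \eqdef [-1,1]^{k}$, a compact, convex \gls{BasicSemialgebraicSet} (e.g.\ cut out by the constraints $1 - \action_{j}^{2} \geq 0$, $j \in \range{k}$), and build the one-player polynomial game $\game$ with strategy set $\actions_{1}$ and payoff $\payoff_{1}(\action_{1}) \eqdef - p(\action_{1})$, which has degree $3$ in $\action_{1}$. If one insists on $\noplayers \geq 2$, one can adjoin a dummy player with a single-point strategy set and constant payoff $\payoff_{2} \equiv 0$; this affects nothing below.

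The heart of the argument is that, for this game, concavity and monotonicity reduce to the \emph{same} condition, namely convexity of $p$ over the box. By \eqref{eq:ConcaveGameSecondOrderTest}, $\game$ is concave iff $\hessian_{\payoff_{1}}(\action_{1}) = -\gradient^{2} p(\action_{1}) \preceq 0$ for all $\action_{1} \in \actions_{1}$, i.e.\ iff $\gradient^{2} p \succeq 0$ on $\actions_{1}$, which (as $\actions_{1}$ is convex and $p$ is $C^{2}$) is exactly convexity of $p$ over the box. For monotonicity, the pseudo-gradient is $\payoffgradient(\action_{1}) = \gradient \payoff_{1}(\action_{1}) = -\gradient p(\action_{1})$, so its Jacobian $\jacobian(\action_{1}) = -\gradient^{2} p(\action_{1})$ is already symmetric; hence $\sjacobian(\action_{1}) = \jacobian(\action_{1})$, and $\sjacobian \preceq 0$ on $\actions_{1}$ is again convexity of $p$ over the box (with the dummy player the Jacobian merely gains a zero block, which does not change semidefiniteness). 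Thus any algorithm deciding concavity, or monotonicity, of $\game$ decides convexity of $p$ over $[-1,1]^{k}$; since $p \mapsto \game$ is computable in polynomial time without enlarging coefficients, both decision problems inherit strong \gls{NP}-hardness.

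To obtain hard instances of every degree $d \geq 3$ (not only $d = 3$), append a decoupled auxiliary coordinate $t \in [0,1]$ and set $\payoff_{1}(\action_{1}, t) \eqdef - p(\action_{1}) - t^{d}$ over the convex set $\actions_{1} \times [0,1]$. The extra diagonal block of $\hessian_{\payoff_{1}}$ equals $-d(d-1)t^{d-2} \leq 0$ on $[0,1]$, so it is harmless; the degree becomes $\max(3,d) = d$; and the equivalence ``concave $\iff$ monotone $\iff$ $p$ convex over the box'' is unchanged.

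The only step that is more than bookkeeping is the collapse of concavity and monotonicity on the constructed instance — i.e.\ checking that a monotonicity oracle is here no weaker than a concavity oracle — since that is what lets a single reduction settle both halves of the theorem. Everything else is routine verification that the construction meets the theorem's hypotheses (polynomial payoffs, compact convex basic semialgebraic strategy set, some player of degree $\geq 3$) and that strong \gls{NP}-hardness of the cubic-over-box problem is preserved under this reduction.
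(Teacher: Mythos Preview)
Your proof is correct and rests on the same reduction as the paper's, namely from the strongly \gls{NP}-hard problem of deciding convexity of a cubic polynomial over a box \cite{ahmadi2020complexity}. The difference is packaging: the paper gives two separate two-player constructions --- for concavity, $\payoff_{1}(\action)=p(\action_{1})$ and $\payoff_{2}\equiv 0$; for monotonicity, it splits the box coordinates between two players and sets $\payoff_{1}=\payoff_{2}=p$ so that the pseudo-gradient is $-\gradient p$ --- whereas you use a single one-player instance $\payoff_{1}=-p$ (plus optional dummy) and observe that, because the Jacobian is already the Hessian, the concavity and monotonicity conditions collapse to the same inequality. Your route is slightly cleaner in that one reduction settles both halves at once; the paper's version for monotonicity has the minor advantage that the hard instance involves genuine interaction between two nontrivial players. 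The extra paragraph you give on padding to arbitrary degree $d\geq 3$ is not needed for the theorem as stated (degree-$3$ instances already satisfy the hypothesis), but it does no harm.
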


Motivated by the hardness result, it is crucial to identify tractable sufficient conditions for concavity and monotonicity, which gives rise to non-trivial subclasses of concave and monotone games. 
This can be achieved by using the technique of sum-of-squares optimization, together with the positive semidefiniteness of the Hessian or the symmetrized Jacobian matrix of the game. Throughout the remainder of the paper, for brevity  {\em we  focus only  on the class of monotone games}. Analogous results hold for concave games with minor modifications, and we describe them in \cref{sec:ModificationsCertificationConcavity}.


%

%



\subsection{Sum-of-Squares Certificates for Concavity \& Monotonicity}
\label{sec:SemidefiniteProgramming}




We introduce a hierarchy of increasingly strong sufficient conditions for certifying concavity and monotonicity, based on \gls{SOS} certificates for the associated quadratic forms defined by the Hessian and the symmetrized Jacobian matrices of $\game$.
The starting point for this observation is that, for any fixed \( \action \in \actions \), and considering the symmetrized Jacobian, we have
\begin{equation} 
    \sjacobian(\action)
        \preceq 0 
    \qquad \text{if and only if} \qquad
      \maxeigenvalue\xpar[\big]{\sjacobian(\action)} 
        \leq 0.
\end{equation}
Consequently, using the \emph{Rayleigh–Ritz theorem}, it follows that $\game$ is \gls{Monotone} if and only if
\begin{equation}
  \max_{\action \in \actions} \maxeigenvalue\xpar[\big]{\sjacobian(\action)}
    = \max_{\substack{\action \in \actions \\ y \in \ball}} y^{\T} \sjacobian(\action) y
    \leq 0.
\end{equation}
%
where
$\ball \eqdef \set[\big]{y \in \reals^{\nopures} \given y^{\T} y = 1}$.
%
The crucial observation here is  that the function $(\action, y) \mapsto y^{\T} \sjacobian(\action) y$ is a \gls{Polynomial} in $x,y$, since the Jacobian matrix $\sjacobian(\action)$ is \gls{Polynomial} in $\action$.
Moreover, $\actions$ and $\mathcal{B}$ are  compact  \glspl{BasicSemialgebraicSet}.
Therefore, $\max_{\action \in \actions} \maxeigenvalue\xpar[\big]{\sjacobian(\action)}$ can be written as the solution to the following \gls{Polynomial} maximization problem:
%
\begin{equation}
  \max_{\action \in \actions} \maxeigenvalue\xpar[\big]{\sjacobian(\action)}
    = \ \begin{aligned}[t]
      &\underset{\action, y}{\text{maximize}}
        \quad && y^{\T} \sjacobian(\action) y \\
      &\text{subject to}
        \quad && \action \in \actions, y \in \ball.
    \end{aligned}
\end{equation}
Finally, although \gls{Polynomial} optimization is in general \gls{NP}-hard, the solution to a \gls{Polynomial} optimization problem, i.e., $\max_{\action \in \actions} \maxeigenvalue\xpar[\big]{\sjacobian(\action)}$ can be approximated via the \gls{SOS} framework.
This is formally stated in the main \lcnamecref{thm:SOSMaximumEigenvalueConvergence} of this section, the proof of which is given in Appendix~\ref{app:proof:3.2}:

\begin{restatable}{theorem}{SOSMaximumEigenvalueConvergence}
\label{thm:SOSMaximumEigenvalueConvergence}
  Let $\game(\range{\noplayers}, \actions, \payoff)$ be a \gls{Polynomial} game over a compact, convex \gls{BasicSemialgebraicSet} $\actions$.
  Assume the \gls{QuadraticModule} $\quadmod(\actions)$ is Archimedean.
  For any $\ell \in \naturals$ consider the hierarchy of \gls{SOS} optimization problems:
  \begin{equation}
  \label{eq:SOSMaximumEigennvalue}
  \begin{aligned}
    \gls{SOSHierarchy}_{\ell}(\game)
      \eqdef \ \begin{aligned}[t]
        &\underset{\lambda \in \reals}{\textnormal{minimize}}
          &&\lambda \\
        &\textnormal{subject to}
          &&\lambda - y^{\T} \sjacobian(\action) y
              \in \quadmod_{\ell}(\actions \times \ball),
      \end{aligned}
  \end{aligned}
  \end{equation}
  where $\quadmod_{\ell}(\actions \times \ball)$ denotes the restriction of $\quadmod(\actions \times \ball)$ to \glspl{Polynomial} of degree at most $\ell$.
  Then, the following statements are true:
  \begin{enumerate}[ref=\arabic*]

      \item\label{thm:SOSMaximumEigenvalueConvergence_1} For all $\ell$, we have that  $\gls{SOSHierarchy}_{\ell}(\game) \geq \max_{\action \in \actions} \maxeigenvalue\xpar[\big]{\sjacobian(\action)}$.

      \item\label{thm:SOSMaximumEigenvalueConvergence_2}
      The sequence $\xpar[\big]{\gls{SOSHierarchy}_{\ell}(\game)}_{\ell \geq 0}$ is nonincreasing.
  
      \item\label{thm:SOSMaximumEigenvalueConvergence_3} $\lim_{\ell \to \infty} \gls{SOSHierarchy}_{\ell}(\game) = \max_{\action \in \actions} \maxeigenvalue\xpar[\big]{\sjacobian(\action)}$.
      
      \item\label{thm:SOSMaximumEigenvalueConvergence_4} $\game$ is \gls{StrictlyMonotone} if, and only if, there exists some finite level $\ell$ such that $\gls{SOSHierarchy}_{\ell}(\game) <~0$.
      
      \item\label{thm:SOSMaximumEigenvalueConvergence_5} For any level $\ell$, the program in \eqref{eq:SOSMaximumEigennvalue} can be formulated as an \gls{SDP} and solved in polynomial time. 
      
    \end{enumerate}

\end{restatable}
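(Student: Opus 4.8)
The plan is to prove the five statements in the stated order, since each relies on those before it. Throughout I abbreviate $\mu^{\star} \eqdef \max_{\action \in \actions} \maxeigenvalue(\sjacobian(\action))$, which by the Rayleigh--Ritz theorem coincides with $\max_{(\action,y) \in \actions \times \ball} y^{\T}\sjacobian(\action)y$ because $\ball$ is the unit sphere. For Statement~\ref{thm:SOSMaximumEigenvalueConvergence_1}: any $\lambda$ feasible for $\gls{SOSHierarchy}_{\ell}(\game)$ satisfies $\lambda - y^{\T}\sjacobian(\action)y \in \quadmod_{\ell}(\actions \times \ball) \subseteq \quadmod(\actions \times \ball)$, and every member of a quadratic module is pointwise nonnegative on its defining set; hence $\lambda \geq y^{\T}\sjacobian(\action)y$ for all $(\action,y) \in \actions \times \ball$, i.e., $\lambda \geq \mu^{\star}$, and taking the infimum over feasible $\lambda$ gives $\gls{SOSHierarchy}_{\ell}(\game) \geq \mu^{\star}$. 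Statement~\ref{thm:SOSMaximumEigenvalueConvergence_2} is immediate from $\quadmod_{\ell}(\actions \times \ball) \subseteq \quadmod_{\ell+1}(\actions \times \ball)$: raising the degree cap can only enlarge the feasible set of the minimization, so its optimal value is nonincreasing in $\ell$.

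For Statement~\ref{thm:SOSMaximumEigenvalueConvergence_3} the key tool is Putinar's Positivstellensatz, which requires $\quadmod(\actions \times \ball)$ to be Archimedean. I would first verify this: since $\ball$ is cut out by the equality $y^{\T}y = 1$, the polynomial $1 - \norm{y}^{2}$ lies in $\quadmod(\actions \times \ball)$ (it is an equality constraint times the constant $-1$), and combining it with the witness $N - \norm{\action}^{2} \in \quadmod(\actions)$ of the assumed Archimedean property of $\quadmod(\actions)$ shows $N + 1 - \norm{\action}^{2} - \norm{y}^{2} \in \quadmod(\actions \times \ball)$. Then, for any $\varepsilon > 0$, the polynomial $(\mu^{\star} + \varepsilon) - y^{\T}\sjacobian(\action)y$ is strictly positive on the compact basic semialgebraic set $\actions \times \ball$, so Putinar's theorem places it in $\quadmod(\actions \times \ball)$, hence in $\quadmod_{\ell}(\actions \times \ball)$ for all sufficiently large $\ell$; this gives $\gls{SOSHierarchy}_{\ell}(\game) \leq \mu^{\star} + \varepsilon$ eventually, and combined with Statements~\ref{thm:SOSMaximumEigenvalueConvergence_1}--\ref{thm:SOSMaximumEigenvalueConvergence_2} yields $\lim_{\ell \to \infty} \gls{SOSHierarchy}_{\ell}(\game) = \mu^{\star}$.

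Statement~\ref{thm:SOSMaximumEigenvalueConvergence_4} then follows from what we already have. If $\gls{SOSHierarchy}_{\ell}(\game) < 0$ for some finite $\ell$, then Statement~\ref{thm:SOSMaximumEigenvalueConvergence_1} gives $\mu^{\star} \leq \gls{SOSHierarchy}_{\ell}(\game) < 0$, so $\maxeigenvalue(\sjacobian(\action)) < 0$ and thus $\sjacobian(\action) \prec 0$ for every $\action \in \actions$, i.e., $\game$ is \gls{StrictlyMonotone}. Conversely, if $\game$ is \gls{StrictlyMonotone} then $\action \mapsto \maxeigenvalue(\sjacobian(\action))$ is continuous and strictly negative on the compact set $\actions$, hence $\mu^{\star} < 0$; choosing $\varepsilon > 0$ with $\mu^{\star} + \varepsilon < 0$ and re-running the Putinar argument of the previous paragraph produces a finite $\ell$ with $\gls{SOSHierarchy}_{\ell}(\game) \leq \mu^{\star} + \varepsilon < 0$. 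For Statement~\ref{thm:SOSMaximumEigenvalueConvergence_5}, fix $\ell$ and expand each degree-bounded \gls{SOS} multiplier in the Putinar template as $z^{\T}Gz$ with $G \succeq 0$ and $z$ the vector of monomials in $(\action,y)$ up to the relevant degree; matching coefficients in $(\action,y)$ between $\lambda - y^{\T}\sjacobian(\action)y$ and $\asos_{0} + \sum_{j} g_{j}\asos_{j} + (\text{equality terms})$ produces a finite system of linear equations in the entries of the Gram matrices, the coefficients of the polynomial multipliers, and $\lambda$, so minimizing the linear objective $\lambda$ over this spectrahedron is an \gls{SDP}. For fixed $\ell$ the number of monomials (on the order of $\binom{\nopures + 1 + \ell}{\ell}$), and hence all matrix dimensions, are polynomial in the description length of $\game$, so the \gls{SDP} can be solved to any prescribed accuracy in polynomial time by interior-point methods.

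I expect the crux to be Statement~\ref{thm:SOSMaximumEigenvalueConvergence_3} together with the forward implication of Statement~\ref{thm:SOSMaximumEigenvalueConvergence_4}: both reduce to applying Putinar's Positivstellensatz correctly, which requires checking that the product quadratic module $\quadmod(\actions \times \ball)$ inherits the Archimedean property and tracking the degree truncation carefully enough to conclude membership in $\quadmod_{\ell}$ at a \emph{finite} level. The remaining statements are routine consequences of the definition of the hierarchy and the standard semidefinite representability of bounded-degree sum-of-squares cones.
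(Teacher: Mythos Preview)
Your proposal is correct and follows essentially the same route as the paper: Statement~\ref{thm:SOSMaximumEigenvalueConvergence_1} by nonnegativity of quadratic-module elements on the defining set, Statement~\ref{thm:SOSMaximumEigenvalueConvergence_2} by inclusion of truncated modules, Statement~\ref{thm:SOSMaximumEigenvalueConvergence_3} by verifying that $\quadmod(\actions\times\ball)$ inherits the Archimedean property and then invoking Putinar on $\mu^\star+\varepsilon - y^\T\sjacobian(\action)y$, and Statement~\ref{thm:SOSMaximumEigenvalueConvergence_4} by combining \ref{thm:SOSMaximumEigenvalueConvergence_1} and the Putinar argument. The only difference is cosmetic: you spell out the Gram-matrix SDP formulation for Statement~\ref{thm:SOSMaximumEigenvalueConvergence_5}, whereas the paper simply appeals to standard results (your monomial count should be in $2\nopures$ variables rather than $\nopures+1$, but this does not affect the polynomial-size conclusion).
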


\Cref{thm:SOSMaximumEigenvalueConvergence} shows how a sequence of \glspl{SDP}, which can be solved efficiently (Statement~\ref{thm:SOSMaximumEigenvalueConvergence_5}), can be used to approximate $\max_{\action \in \actions} \maxeigenvalue\xpar[\big]{\sjacobian(\action)}$, and therefore certify whether $\game$ is \gls{Monotone}.
In particular, Statements~\ref{thm:SOSMaximumEigenvalueConvergence_1} to~\ref{thm:SOSMaximumEigenvalueConvergence_3} guarantee that $\gls{SOSHierarchy}_{\ell}(\sjacobian)$, for $\ell \geq 0$, gives progressively tighter upper bounds for $\max_{\action \in \actions} \maxeigenvalue\xpar[\big]{\sjacobian(\action)}$.
If for any finite $\ell$ we obtain $\gls{SOSHierarchy}_{\ell}(\sjacobian) \leq 0$, it follows that $\max_{\action \in \actions} \maxeigenvalue\xpar[\big]{\sjacobian(\action)} \leq 0$, and therefore $\game$ is \gls{Monotone}.
Additionally, if at some $\ell$ we get $\gls{SOSHierarchy}_{\ell}(\sjacobian) < 0$, it follows that $\max_{\action \in \actions} \maxeigenvalue\xpar[\big]{\sjacobian(\action)} < 0$, and therefore $\game$ is \gls{StrictlyMonotone}.

Importantly, Statement~\ref{thm:SOSMaximumEigenvalueConvergence_3} guarantees that whenever $\game$ is \gls{Monotone}, even if no finite $\ell$ exists such that $\gls{SOSHierarchy}_{\ell}(\sjacobian) \leq 0$, the sequence $\xpar[\big]{\gls{SOSHierarchy}_{\ell}(\sjacobian)}_{\ell \geq 0}$ nonetheless converges (asymptotically) to a non-positive value.
Moreover, whenever $\game$ is not only \gls{Monotone} but also \gls{StrictlyMonotone}, by Statement~\ref{thm:SOSMaximumEigenvalueConvergence_4} we are guaranteed the existence of a finite $\ell$.
In fact, it turns out that generic  \gls{Monotone} \gls{Polynomial} games over compact, convex \glspl{SemialgebraicSet} are \emph{almost always} \gls{StrictlyMonotone}.
In particular, in the following \lcnamecref{thm:StrictlyMonotoneGamesFullMeasure} we show that for all $\game$ of degree at least $2$, the set of \gls{Polynomial} \gls{Monotone} games that are not \gls{StrictlyMonotone} form a set with zero \gls{LebesgueMeasure}.

\begin{restatable}{theorem}{StrictlyMonotoneGamesFullMeasure}
\label{thm:StrictlyMonotoneGamesFullMeasure}
For almost all monotone games, monotonicity can be certified at a finite level $\ell$
of the SOS hierarchy \eqref{eq:SOSMaximumEigennvalue}, i.e.,  $\gls{SOSHierarchy}_{\ell}(\game)\leq 0$. 
Concretely, 
  for all $d \geq 2$, the set of \gls{Monotone} \gls{Polynomial} games of degree $d$ over a compact \gls{BasicSemialgebraicSet} $\actions$ that are not \gls{StrictlyMonotone} has zero \gls{LebesgueMeasure}. 
\end{restatable}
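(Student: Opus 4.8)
The plan is to reduce the claim to the classical fact that the topological boundary of a convex body in a finite-dimensional space has zero \gls{LebesgueMeasure}. Identify $\gls{PolynomialGames}_{(\noplayers,d)}$ with $\reals^{M}$ via the coefficient isomorphism, and write $\mathcal{G}_{\mathrm{m}}, \mathcal{G}_{\mathrm{sm}} \subseteq \reals^{M}$ for the sets of monotone and of strictly monotone polynomial games of degree at most $d$ over $\actions$, in the sense of the semidefinite characterizations of \cref{def:StrictlyConcaveGame}. The first step is a linearity observation: the entries of the symmetrized Jacobian $\sjacobian(\action)$ are second-order partial derivatives of $\payoff_{1}, \dots, \payoff_{\noplayers}$, so for every fixed $\action$ the matrix $\sjacobian(\action)$ depends \emph{linearly} on $\game \in \reals^{M}$ and polynomially on $\action$; consequently $p_{\game}(\action, y) \eqdef -\,y^{\T}\sjacobian(\action)\,y$ is, for each fixed $(\action, y) \in \actions \times \ball$, a linear functional of $\game$, and it is jointly polynomial in $(\action, y)$. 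Since $\ball$ is the unit sphere, \cref{def:StrictlyConcaveGame} says $\mathcal{G}_{\mathrm{m}} = \bigcap_{(\action, y)\in\actions\times\ball}\set{\game \given p_{\game}(\action, y)\ge 0}$ and $\mathcal{G}_{\mathrm{sm}} = \bigcap_{(\action,y)\in\actions\times\ball}\set{\game \given p_{\game}(\action, y) > 0}$; in particular $\mathcal{G}_{\mathrm{m}}$ is an intersection of closed half-spaces through the origin, hence a closed convex cone in $\reals^{M}$, and $\mathcal{G}_{\mathrm{sm}}\subseteq\mathcal{G}_{\mathrm{m}}$.

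The core of the argument will be the identity $\mathcal{G}_{\mathrm{sm}} = \interior(\mathcal{G}_{\mathrm{m}})$. For ``$\subseteq$'', I would show $\mathcal{G}_{\mathrm{sm}}$ is open: if $\game\in\mathcal{G}_{\mathrm{sm}}$, the continuous function $(\action,y)\mapsto p_{\game}(\action, y)$ attains a strictly positive minimum $c > 0$ on the compact set $\actions\times\ball$, and $\abs{p_{\game'}(\action, y) - p_{\game}(\action, y)}\le C\norm{\game - \game'}$ for a constant $C$ depending only on $\actions\times\ball$, so every $\game'$ with $\norm{\game - \game'} < c/C$ still lies in $\mathcal{G}_{\mathrm{sm}}$; combined with $\mathcal{G}_{\mathrm{sm}}\subseteq\mathcal{G}_{\mathrm{m}}$ this gives $\mathcal{G}_{\mathrm{sm}}\subseteq\interior(\mathcal{G}_{\mathrm{m}})$. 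For ``$\supseteq$'', suppose $\game\in\mathcal{G}_{\mathrm{m}}\setminus\mathcal{G}_{\mathrm{sm}}$; then there is $(\action_{0}, y_{0})\in\actions\times\ball$ with $p_{\game}(\action_{0}, y_{0}) = 0$. Let $\game_{0}$ be the degree-$2$ game with $\payoff_{i}(\action) = \tfrac12\norm{\action_{i}}^{2}$ for all $i$, for which $\sjacobian\equiv\identity$ on $\reals^{\nopures}$ and thus $p_{\game_{0}}\equiv -1$. By linearity, $p_{\game + t\game_{0}}(\action_{0}, y_{0}) = p_{\game}(\action_{0}, y_{0}) + t\,p_{\game_{0}}(\action_{0}, y_{0}) = -t < 0$ for every $t > 0$, so $\game + t\game_{0}\notin\mathcal{G}_{\mathrm{m}}$; since $t$ may be taken arbitrarily small, $\game\notin\interior(\mathcal{G}_{\mathrm{m}})$. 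As $\mathcal{G}_{\mathrm{m}}$ is closed, this yields $\mathcal{G}_{\mathrm{m}}\setminus\mathcal{G}_{\mathrm{sm}} = \mathcal{G}_{\mathrm{m}}\setminus\interior(\mathcal{G}_{\mathrm{m}}) = \partial\mathcal{G}_{\mathrm{m}}$.

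To conclude, note that for $d\ge 2$ the game with $\payoff_{i}(\action) = -\norm{\action_{i}}^{2}$ has $\sjacobian\equiv -2\identity\prec 0$, hence lies in $\mathcal{G}_{\mathrm{sm}} = \interior(\mathcal{G}_{\mathrm{m}})$, so $\mathcal{G}_{\mathrm{m}}$ is a closed convex set with nonempty interior. The boundary of such a set has zero \gls{LebesgueMeasure}: for any $z_{0}\in\interior(\mathcal{G}_{\mathrm{m}})$ one has $z_{0} + t(\closure\mathcal{G}_{\mathrm{m}} - z_{0})\subseteq\interior(\mathcal{G}_{\mathrm{m}})$ for $t\in(0,1)$, so letting $t\uparrow 1$ shows $\interior(\mathcal{G}_{\mathrm{m}})$ and $\closure\mathcal{G}_{\mathrm{m}} = \mathcal{G}_{\mathrm{m}}$ have equal volume, whence $\partial\mathcal{G}_{\mathrm{m}}$ is null. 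Therefore $\mathcal{G}_{\mathrm{m}}\setminus\mathcal{G}_{\mathrm{sm}}$ has zero \gls{LebesgueMeasure}, which is the second assertion. The first assertion then follows by combining this with Statement~\ref{thm:SOSMaximumEigenvalueConvergence_4} of \cref{thm:SOSMaximumEigenvalueConvergence}: every monotone game outside the null set $\mathcal{G}_{\mathrm{m}}\setminus\mathcal{G}_{\mathrm{sm}}$ is strictly monotone, so there is a finite level $\ell$ with $\gls{SOSHierarchy}_{\ell}(\game) < 0 \le 0$ in the hierarchy~\eqref{eq:SOSMaximumEigennvalue}.

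The step I expect to be the main obstacle is the reverse inclusion $\interior(\mathcal{G}_{\mathrm{m}})\subseteq\mathcal{G}_{\mathrm{sm}}$: it requires exhibiting a concrete ``uniformly positive-definite'' perturbation direction — here the game $\game_{0}$ with $\sjacobian\equiv\identity$ — together with a compactness argument locating a point $(\action_{0}, y_{0})$ where the quadratic form $p_{\game}$ vanishes. The remaining ingredients — linearity of $p_{\game}$ in $\game$, closedness and convexity of $\mathcal{G}_{\mathrm{m}}$, nonemptiness of its interior for $d\ge 2$, and the null-boundary fact for convex bodies — are routine. I would also note that the argument never uses convexity of $\actions$: it runs verbatim on the pointwise semidefinite conditions $\sjacobian(\action)\preceq 0$ and $\sjacobian(\action)\prec 0$ of \cref{def:StrictlyConcaveGame}, which is precisely how (strict) monotonicity of a polynomial game is defined there.
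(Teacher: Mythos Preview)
Your proof is correct and follows the same overarching strategy as the paper: show $\mathcal{G}_{\mathrm{m}}\setminus\mathcal{G}_{\mathrm{sm}}$ is contained in the boundary of the closed convex set $\mathcal{G}_{\mathrm{m}}$, then argue that boundary is Lebesgue-null. The implementations differ in two places. For the inclusion $\interior(\mathcal{G}_{\mathrm{m}})\subseteq\mathcal{G}_{\mathrm{sm}}$, the paper invokes the supporting hyperplane theorem (Rockafellar) at a point where $\sjacobian_{\game}$ has a zero eigenvalue; your version is the same argument made concrete by exhibiting the escape direction $\game_{0}$ with $\sjacobian\equiv\identity$ rather than citing the hyperplane theorem. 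For the null-boundary step, the paper first establishes that $\mathcal{G}_{\mathrm{m}}$ is semialgebraic (Sylvester's criterion plus Tarski--Seidenberg) and then appeals to o-minimal dimension theory \`a la van den Dries to get $\dim\partial\mathcal{G}_{\mathrm{m}}<\dim\mathcal{G}_{\mathrm{m}}$; your convex-body scaling argument is more elementary and sidesteps all of that machinery, at the price of needing the full interior in $\reals^{M}$ to be nonempty---which you verify directly via the game $-\norm{\action_{i}}^{2}$. One small point to tighten: since $\mathcal{G}_{\mathrm{m}}$ is an unbounded cone, the scaling inequality $t^{M}\mu(\closure\mathcal{G}_{\mathrm{m}})\le\mu(\interior\mathcal{G}_{\mathrm{m}})$ reads $\infty\le\infty$; localize by intersecting with large balls before applying it.
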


The proof of this result is given in Appendix~\ref{app:proof:3.3}. At this point, we have shown that the monotonicity of \emph{almost all} \gls{Polynomial} \gls{Monotone} games $\game$ over a compact, convex \gls{SemialgebraicSet} can be certified by a solution $\gls{SOSHierarchy}_{\ell_{\game}}(\sjacobian)$ at some finite level $\ell_{\game}$ of the \gls{SOS} hierarchy in \eqref{eq:SOSMaximumEigennvalue}.
However, for an arbitrary game $\game$, the required level $\ell_{\game}$ may be \emph{large}. 
Thus, in practice, certifying the monotonicity of $\game$ via the \gls{SOS} hierarchy in \eqref{eq:SOSMaximumEigennvalue} may be computationally infeasible.
To reflect this limitation, in the following section, we introduce and study a subclass of \gls{Monotone} games called \param{\ell}-\gls{SOSMonotone} games, for which monotonicity can be certified in \gls{Polynomial} time via semidefinite programming.

\section{SOS-Concave \& SOS-Monotone Games}
\label{sec:CharacteristicsConcaveGames}

Motivated by the convergence guarantees of the \gls{SOS} hierarchy established in \cref{thm:SOSMaximumEigenvalueConvergence},
in this section, we define and analyze a subclass of \gls{Polynomial} \gls{Monotone} games over a compact, convex \gls{BasicSemialgebraicSet} for which monotonicity can be certified at some fixed level $\ell$ of the \gls{SOS} hierarchy.
These are games whose monotonicity can be verified in \gls{Polynomial} time with respect to the level $\ell$. 
We refer to such games as \param{\ell}-\gls{SOSMonotone}.

\begin{definition}[\param{\ell}-\glsentryshort{SOS}-Monotone Game]
\label{def:SOSMonotoneGame}
Consider a \gls{Polynomial} game $\game \in \gls{PolynomialGames}_{(\noplayers, d)}$ over a compact, convex \gls{BasicSemialgebraicSet} $\actions$.  
For all $\ell \geq 0$, we say that $\game$ is \param{\ell}-\gls{SOSMonotone} if
\begin{equation}
  - y^{\T} \sjacobian(\action) y
    \in \quadmod_{\ell}(\actions \times \ball).
\end{equation}
We denote the set of \param{\ell}-\gls{SOSMonotone} games by $\gls{SOSMonotoneGames}_{(\noplayers, d, \ell)}$.  
Furthermore, we say that $\game$ is \gls{SOSMonotone} if there exists $\ell \in \naturals$ such that $\game$ is \param{\ell}-\gls{SOSMonotone}.
\end{definition}



The following \lcnamecref{thm:SOSMonotoneGamesFullMeasure} is an immediate consequence of Statement~\ref{thm:SOSMaximumEigenvalueConvergence_4} in \cref{thm:SOSMaximumEigenvalueConvergence} and the measure-theoretic result in \cref{thm:StrictlyMonotoneGamesFullMeasure}:

\begin{theorem}
\label{thm:SOSMonotoneGamesFullMeasure}
  For all $d \geq 2$, the set of \gls{Monotone} \gls{Polynomial} games of degree $d$ over a compact, convex \gls{BasicSemialgebraicSet} $\actions$ that are not \gls{SOSMonotone} has zero \gls{LebesgueMeasure}.
\end{theorem}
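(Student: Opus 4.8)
The strategy is to reduce the statement to the measure-theoretic result already recorded in \cref{thm:StrictlyMonotoneGamesFullMeasure}. The only new ingredient needed is the implication ``\gls{StrictlyMonotone} $\Rightarrow$ \gls{SOSMonotone}''. Granting this, every \gls{Monotone} \gls{Polynomial} game of degree $d$ over $\actions$ that fails to be \gls{SOSMonotone} is, in particular, not \gls{StrictlyMonotone}; hence the set of such games is contained in the set of \gls{Monotone} degree-$d$ games over $\actions$ that are not \gls{StrictlyMonotone}, which has zero \gls{LebesgueMeasure} by \cref{thm:StrictlyMonotoneGamesFullMeasure}. Since every subset of a null set is null, the theorem follows.

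The key step is therefore to show that every \gls{StrictlyMonotone} \gls{Polynomial} game $\game$ over a compact, convex \gls{BasicSemialgebraicSet} $\actions$ is \gls{SOSMonotone}. Here the Archimedean hypothesis of \cref{thm:SOSMaximumEigenvalueConvergence} is in force (and may be assumed without loss of generality, since $\actions$ is compact and a redundant ball constraint can always be appended to the presentation of $\actions$ without changing the game class, and likewise for $\actions \times \ball$). By Statement~\ref{thm:SOSMaximumEigenvalueConvergence_4} of \cref{thm:SOSMaximumEigenvalueConvergence}, strict monotonicity of $\game$ yields a finite level $\ell$ with $\gls{SOSHierarchy}_{\ell}(\game) < 0$; denote this optimal value by $\lambda^{\ast} < 0$. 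Feasibility of $\lambda^{\ast}$ in the program~\eqref{eq:SOSMaximumEigennvalue} means $\lambda^{\ast} - y^{\T} \sjacobian(\action) y \in \quadmod_{\ell}(\actions \times \ball)$. Now $-\lambda^{\ast}$ is a strictly positive constant, hence a degree-$0$ \gls{SOS} polynomial, and so an element of $\quadmod_{\ell}(\actions \times \ball)$; since quadratic modules are closed under addition and adding a degree-$0$ term only modifies the leading \gls{SOS} summand $\asos_{0}$ without increasing its degree, we remain inside the level-$\ell$ truncation:
\begin{equation}
  - y^{\T} \sjacobian(\action) y
    = \bigl( \lambda^{\ast} - y^{\T} \sjacobian(\action) y \bigr) + (-\lambda^{\ast})
    \in \quadmod_{\ell}(\actions \times \ball).
\end{equation}
This is precisely the defining condition (cf.~\cref{def:SOSMonotoneGame}) for $\game$ to be \param{\ell}-\gls{SOSMonotone}, and hence \gls{SOSMonotone}.

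Putting the two pieces together completes the argument. I do not expect a genuine obstacle in this proof: the entire analytic content — the argument that the non-strict \gls{Monotone} games form a negligible slice of the parameter space $\gls{PolynomialGames}_{(\noplayers, d)} \cong \reals^{M}$ — is carried by \cref{thm:StrictlyMonotoneGamesFullMeasure}, and the passage from ``strict, as detected by the eigenvalue program'' to ``\gls{SOSMonotone}'' is immediate from Statement~\ref{thm:SOSMaximumEigenvalueConvergence_4}. The only points requiring a touch of care are the degree bookkeeping in the quadratic module above (so that adding the constant does not push us to a higher level of the hierarchy) and flagging that the Archimedean assumption is being used.
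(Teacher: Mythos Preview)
Your proposal is correct and follows precisely the route the paper indicates: the theorem is stated there as ``an immediate consequence of Statement~\ref{thm:SOSMaximumEigenvalueConvergence_4} in \cref{thm:SOSMaximumEigenvalueConvergence} and the measure-theoretic result in \cref{thm:StrictlyMonotoneGamesFullMeasure}'', with no further details given. The one cosmetic point is that you invoke feasibility of the \emph{optimal} value $\lambda^\ast$, whereas all you need (and all that is guaranteed) is some feasible $\lambda<0$, which exists since the infimum is negative; the rest of your constant-shift and degree-bookkeeping argument then goes through verbatim.
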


Next, we show that for every $\ell \geq 0$, the set of \param{\ell}-\gls{SOSMonotone} games is a \emph{global approximator} to the set of \gls{Monotone} games, i.e., \gls{SOSMonotone} games are dense in \gls{Monotone} games.
In particular, given some \gls{Polynomial} game $\game^{*}$ over a convex, compact \gls{BasicSemialgebraicSet}, we can compute the closest \param{\ell}-\gls{SOSMonotone} game $\game$ in polynomial time.
Moreover, since \gls{SOSMonotone} games are dense in \gls{Monotone} games, as $\ell \to \infty$, the projections $\game$ of $\game^{*}$ in the set of \param{\ell}-\gls{SOSMonotone} games converge to the closest \gls{Monotone} game to $\game^{*}$; not just the closest \gls{SOSMonotone} game.
The proof of the following \lcnamecref{thm:SOSMonotoneGamesDenseInMonotoneGames} can be found in Appendix~\ref{appsec:proof:sos}.

\begin{restatable}{theorem}{SOSMonotoneGamesDenseInMonotoneGames}
\label{thm:SOSMonotoneGamesDenseInMonotoneGames}
  For all $d \geq 2$, the set of \gls{SOSMonotone} games of degree $d$ over a compact \gls{BasicSemialgebraicSet} $\actions$ is dense in the set of \gls{Monotone} games of degree $d$ over $\actions$.
  Furthermore, given any \gls{Polynomial} game $\game^{*} \in \gls{PolynomialGames}_{(\noplayers, d)}$ over $\actions$, and any fixed $\ell \geq 0$, 
 we can compute the closest \param{\ell}-\gls{SOSMonotone} game to $\game^{*}$ by the  program

  \begin{equation}
  \label{eq:ClosestSOSMonotoneGameProgram}
  \begin{aligned}
    &\underset{\game \in \gls{PolynomialGames}_{(\noplayers, d)}}{\textnormal{minimize}}
      &&\norm{\game - \game^{*}} \\
    &\textnormal{subject to}
      &&\game \in \gls{SOSMonotoneGames}_{(\noplayers, d, \ell)},
  \end{aligned}
  \end{equation}
  which can be formulated as an \gls{SDP}.
\end{restatable}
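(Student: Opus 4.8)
The plan is to establish density by a perturbation argument that reduces to the finite-level certification guaranteed by \cref{thm:SOSMaximumEigenvalueConvergence}, and then to read off the \gls{SDP} formulation directly from \cref{def:SOSMonotoneGame}.

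\emph{Density.} Given a monotone game $\game \in \gls{MonotoneGames}_{(\noplayers, d)}$ with payoffs $\payoff_{1}, \dots, \payoff_{\noplayers}$, I would, for each $k \in \naturals$, form the game $\game_{k}$ over the same $\actions$ with payoffs $\payoff_{k, i}(\action) \eqdef \payoff_{i}(\action) - \frac{1}{2k}\norm{\action_{i}}^{2}$; since $d \geq 2$ this keeps each payoff of degree at most $d$, so $\game_{k} \in \gls{PolynomialGames}_{(\noplayers, d)}$. A direct computation gives $\payoffgradient_{k}(\action) = \payoffgradient(\action) - \frac{1}{k}\action$, hence $\jacobian_{k}(\action) = \jacobian(\action) - \frac{1}{k}\identity$ and $\sjacobian_{k}(\action) = \sjacobian(\action) - \frac{1}{k}\identity$ on $\actions$. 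Since $\game$ is monotone, $\maxeigenvalue\xpar[\big]{\sjacobian(\action)} \leq 0$ on $\actions$, so $\max_{\action \in \actions}\maxeigenvalue\xpar[\big]{\sjacobian_{k}(\action)} \leq -\frac{1}{k} < 0$; thus $\game_{k}$ is strictly monotone, and by Statement~\ref{thm:SOSMaximumEigenvalueConvergence_4} of \cref{thm:SOSMaximumEigenvalueConvergence} there is a finite level $\ell_{k}$ and a scalar $\lambda_{k} < 0$ with $\lambda_{k} - y^{\T}\sjacobian_{k}(\action)y \in \quadmod_{\ell_{k}}(\actions \times \ball)$. Adding the nonnegative constant $-\lambda_{k}$ (a degree-$0$ \gls{SOS} term) to the $\asos_{0}$-part of this representation gives $-y^{\T}\sjacobian_{k}(\action)y \in \quadmod_{\ell_{k}}(\actions \times \ball)$, i.e.\ $\game_{k}$ is $\ell_{k}$-\gls{SOSMonotone}. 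Finally $\norm{\game - \game_{k}} = \frac{1}{2k} \to 0$, so $\game$ lies in the closure of $\gls{SOSMonotoneGames}_{(\noplayers, d)}$; as $\game$ was arbitrary and every \gls{SOSMonotone} game is monotone (an \gls{SOS} certificate forces $\sjacobian(\action) \preceq 0$ on $\actions$), \gls{SOSMonotone} games are dense in \gls{Monotone} games.

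\emph{The program is an \gls{SDP}.} The structural observation is that $\sjacobian(\action)$ is obtained from the payoffs by (linear) differentiation, so each entry of $\sjacobian(\action)$ -- and hence the polynomial $-y^{\T}\sjacobian(\action)y$ in $(\action, y)$ -- has coefficients that are \emph{affine} in the coefficient vector $\coeffs(\game)$. By \cref{def:SOSMonotoneGame}, the constraint $\game \in \gls{SOSMonotoneGames}_{(\noplayers, d, \ell)}$ reads $-y^{\T}\sjacobian(\action)y = \asos_{0} + \sum_{j} g_{j}\asos_{j} + \sum_{j} h_{j}p_{j}$ for \gls{SOS} multipliers $\asos_{j}$ and polynomials $p_{j}$ of degree bounded in terms of $\ell$ (the data of $\actions \times \ball$ being the $g_{j}, h_{j}$ of $\actions$ together with the extra equality $\norm{y}^{2} = 1$). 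Writing each $\asos_{j} = z^{\T}\amatrix{G}_{j}z$ for an appropriate monomial vector $z$ and $\amatrix{G}_{j} \succeq 0$, and equating monomial coefficients, this becomes a finite linear system in $\xpar[\big]{\coeffs(\game),\, \{\amatrix{G}_{j}\},\, \{\coeffs(p_{j})\}}$ together with the constraints $\amatrix{G}_{j} \succeq 0$. The objective $\norm{\game - \game^{*}} = \max_{i}\norm{\coeffs(\payoff_{i}) - \coeffs(\payoff_{i}^{*})}_{\infty}$ is convex piecewise-linear in $\coeffs(\game)$, so an epigraph variable $t$ and the linear inequalities $-t \leq \coeffs(\payoff_{i}) - \coeffs(\payoff_{i}^{*}) \leq t$ (coordinatewise, over all $i$) reduce it to $\min t$. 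Hence \eqref{eq:ClosestSOSMonotoneGameProgram} is a linear objective over an affine slice of a product of positive-semidefinite cones, i.e.\ an \gls{SDP}, and it is feasible for every $\ell$: the game $\payoff_{i}(\action) = -\norm{\action_{i}}^{2}$ has $\sjacobian \equiv -2\identity$, and on $\ball$ one has $-y^{\T}\sjacobian(\action)y = 2\norm{y}^{2} = 2 + (-2)\xpar[\big]{1 - \norm{y}^{2}}$, a degree-$0$ Putinar certificate, so it is $0$-\gls{SOSMonotone} and a fortiori $\ell$-\gls{SOSMonotone}. An optimal solution of \eqref{eq:ClosestSOSMonotoneGameProgram} is therefore a closest $\ell$-\gls{SOSMonotone} game to $\game^{*}$, and since the sets $\gls{SOSMonotoneGames}_{(\noplayers, d, \ell)}$ are nested in $\ell$ with union dense in the closed set $\gls{MonotoneGames}_{(\noplayers, d)}$, letting $\ell \to \infty$ makes these projections converge to the closest \gls{Monotone} game to $\game^{*}$.

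\emph{Anticipated obstacle.} There is no single deep step; the work is in combining \cref{thm:SOSMaximumEigenvalueConvergence} (to certify the perturbations $\game_{k}$) with the affine dependence of $\sjacobian$ on $\coeffs(\game)$ (to see the certification constraint is spectrahedral). The one delicate point is that the infimum in \eqref{eq:ClosestSOSMonotoneGameProgram} should be \emph{attained}, so that a closest game genuinely exists: I would deduce this from the standing Archimedean assumption on $\quadmod(\actions)$ (hence on $\quadmod(\actions \times \ball)$), which provides an a priori norm bound on the \gls{SOS} multipliers needed in a degree-$\ell$ representation of a polynomial with bounded coefficients; intersecting the \gls{SDP} feasible region with the resulting compact set does not change the optimal value, and a continuous objective attains its minimum on a nonempty compact set.
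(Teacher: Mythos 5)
Your proposal is correct and follows essentially the same route as the paper: the same $-\tfrac{1}{2k}\norm{\action_{i}}^{2}$ perturbation to obtain strictly monotone games certified at a finite level via Statement~\ref{thm:SOSMaximumEigenvalueConvergence_4} of \cref{thm:SOSMaximumEigenvalueConvergence}, and the same SDP reformulation using the affine dependence of $\sjacobian$ on the payoff coefficients together with the epigraph trick for the max-$\ell_\infty$ norm. Your added touches (explicitly absorbing the negative constant $\lambda_{k}$ into $\asos_{0}$, the feasibility witness, and the remark on attainment) are refinements the paper leaves implicit, not a different argument.
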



In \cref{thm:SOSMonotoneGamesDenseInMonotoneGames} and throughout our experiments, distance between games is measured via the norm $\norm{\cdot}$ in Eq.~\eqref{eq:topologynorm}.
Beyond the aforementioned norm, the optimization framework in \cref{thm:SOSMonotoneGamesDenseInMonotoneGames} extends naturally to any function $\norm{\cdot}$ that measures deviations on $\gls{PolynomialGames}_{(\noplayers, d)}$, whose epigraph is semidefinite representable and for which $\norm{\game_{k} - \game} \to 0$ as $k \to \infty$, for all \gls{Monotone} games $\game$ and some sequence of \gls{SOSMonotone} games.

For example, we give another example of a valid deviation operator. Let $\game^{\mathrm{quad}}(\range{\noplayers}, \actions, \payoff^{\mathrm{quad}})$ be the \gls{SOSMonotone} game with the payoff functions $\payoff^{\mathrm{quad}}_{i}(\action) = - \norm{\action_{i}}_{2}^{2}$, for all $i \in \range{\noplayers}$.
The \emph{gauge} is given by
\begin{equation}
  \gamma_{\ell}\xpar[\big]{\game(\range{\noplayers}, \actions, \payoff)} 
    \eqdef \min \set[\big]{
      \varepsilon \geq 0 \given 
      \game + \varepsilon \cdot \game^{\mathrm{quad}}
        \in \gls{SOSMonotoneGames}_{(\noplayers, d, \ell)}
    },
\end{equation}
where for all $\varepsilon \geq 0$,  $\game + \varepsilon \cdot \game^{\mathrm{quad}}$ denotes the \gls{Polynomial} game $\game'(\range{\noplayers}, \actions, \payoff')$ with the payoff functions $\payoff'_{i}(\action) = \payoff_{i}(\action) + \varepsilon \cdot \payoff^{\mathrm{quad}}_{i}(\action)$.
Furthermore, the corresponding \gls{SDP} is given by the \param{\ell}-th level of the \gls{SOS} hierarchy in \cref{thm:SOSMaximumEigenvalueConvergence}.


\section{Modifications for the Certification of Concavity}
\label{sec:ModificationsCertificationConcavity}

The results in \cref{sec:SemidefiniteProgramming,sec:CharacteristicsConcaveGames} can be equivalently stated in relation to \gls{Concave} \gls{Polynomial} games over a compact, convex \gls{BasicSemialgebraicSet}, subject to minor modifications.
By definition, a \gls{Polynomial} game $\game$ over a compact, convex \gls{BasicSemialgebraicSet} $\actions$ is \gls{Concave} if and only if the Hessian matrices $\hessian_{u_i}(\action)$ are negative semidefinite, for all $\action \in \actions$ and $i \in \range{\noplayers}$.
Furthermore, $\game$ is \gls{StrictlyConcave} if and only if $\hessian_{u_i}(\action)$ are all negative definite.
For each $i$, consider the \gls{SOS} hierarchy $\xpar[\big]{\gls{SOSHierarchy}_{i, \ell}(\game)}_{\ell \geq 0}$ given in Eq. \eqref{eq:SOSMaximumEigennvalue}, where we substitute $\sjacobian(\action)$ with $\hessian_{u_i}(\action)$.
Then, the \gls{SOS}-based hierarchy
\begin{equation}
\label{eq:SOSConcaveHierarchy}
    \xpar[\big]{\textrm{max}_{i \in \range{\noplayers}} \gls{SOSHierarchy}_{i, \ell}(\game)}_{\ell \geq 0}
\end{equation}
provides analogous guarantees as in the case of \gls{Monotone} games.
In particular, \cref{thm:SOSMaximumEigenvalueConvergence} and \cref{thm:SOSMonotoneGamesDenseInMonotoneGames}, as well as \cref{def:SOSMonotoneGame} can be written analogously with respect to the \gls{SOS}-based hierarchy in Eq.~\eqref{eq:SOSConcaveHierarchy}.
Meanwhile, \cref{thm:StrictlyMonotoneGamesFullMeasure,thm:SOSMonotoneGamesFullMeasure} can be written for concave games directly without further modifications. For completeness, we provide the definition of $\ell$-SOS-Concave games here:
\begin{definition}[\param{\ell}-\glsentryshort{SOS}-Concave Game]
\label{def:SOSConcaveGame}
Consider a \gls{Polynomial} game $\game \in \gls{PolynomialGames}_{(\noplayers, d)}$ over a compact, convex \gls{BasicSemialgebraicSet} $\actions$.  
For all $\ell \geq 0$, we say that $\game$ is \param{\ell}-\gls{SOSConcave} if
\begin{equation}
  - y^{\T} \hessian_{u_i}(\action) y
    \in \quadmod_{\ell}(\actions \times \ball),\quad \forall i\in\range{n}.
\end{equation}
%
\end{definition}

\section{Application: Extensive-Form Games with Imperfect Recall}
\label{sec:Applications}

As described concisely in~\cite{farina2022near}, the class of \emph{concave} games has many modern applications. 
Similarly, \emph{monotone} games have been studied extensively due to their desirable equilibrium properties (see e.g.~\cite{mertikopoulos2019learning,facchinei2003finite,cai2023doubly} and references therein). 
In this section, we highlight extensive-form games (EFGs) with imperfect recall, leveraging the fact that they can be viewed as polynomial games over compact, convex basic semialgebraic sets.
We also utilize our theoretical results and proposed game classes to study canonical examples of these games. 
We will defer further discussion on applications to \emph{economic markets} to Appendix~\ref{app:markets}.

The study of extensive-form or sequential games is arguably as classical as that of normal-form games. The reader is referred to~\cite[Sections~II and III]{osborne1994course} for a review of standard concepts. 
Moreover, for the sake of notational brevity and readability, we defer formal definitions of EFGs and related concepts to Appendix~\ref{app:efgs}. 
One of the most important results in extensive-form games is Kuhn's theorem~\cite{kuhn1953extensive}, which establishes a connection between mixed strategies and behavioral strategies in EFGs with perfect recall (wherein players effectively never forget the history of information sets visited and actions played). Relaxing the perfect recall assumption results in games where players can forget prior information, which introduces additional computational challenges.

\begin{figure}[!htb]
    \centering
    \begin{minipage}{.5\textwidth}
        \centering
        \begin{tikzpicture}[scale=0.73,font=\footnotesize]
		\tikzstyle{solid node}=[circle,draw,inner sep=1.2,fill=black];
		\tikzstyle{hollow node}=[circle,draw,inner sep=1.2];
		\tikzstyle{level 1}=[level distance=15mm,sibling distance=50mm]
		\tikzstyle{level 2}=[level distance=15mm,sibling distance=25mm]
		\tikzstyle{level 3}=[level distance=15mm,sibling distance=15mm]
		\node(0)[solid node,label=above:{P1}, label=left:{$a_1$}]{}
		child{node[solid node,label=above:{P1}, label=left:{$a_2$}]{}
			child{node[solid node,label=above left:{P2}]{}
				child{node[hollow node, label=below:{$1,-1$}]{} edge from parent node[left]{$l$}}
				child{node[hollow node, label=below:{$-1,1$}]{} edge from parent node[right]{$r$}}
				edge from parent node[above left]{$L$}
			}
			child{node[solid node,label=above right:{P2}]{}
				child{node[hollow node, label=below:{$-5,5$}]{} edge from parent node(s)[left]{$l$}}
				child{node[hollow node, label=below:{$-5,5$}]{} edge from parent node(t)[right]{$r$}}
				edge from parent node[above right]{$R$}
			}
			edge from parent node[above left]{$L$}
		}
		child{node[solid node,label=above:{P1}, label=right:{$a_3$}]{}
			child{node[solid node,label=above left:{P2}]{}
				child{node[hollow node, label=below:{$-5,5$}]{} edge from parent node(m)[left]{$l$}}
				child{node[hollow node, label=below:{$-5,5$}]{} edge from parent node(n)[right]{$r$}}
				edge from parent node[above left]{$L$}
			}
			child{node[solid node,label=above right:{P2}]{}
				child{node[hollow node, label=below:{$-1,1$}]{} edge from parent node[left]{$l$}}
				child{node[hollow node, label=below:{$1,-1$}]{} edge from parent node[right]{$r$}}
				edge from parent node[above right]{$R$}
			}
			edge from parent node[above right]{$R$}
		};
		\draw[loosely dotted,very thick](0-1-1)to(0-2-2);
		\draw[loosely dotted,very thick](0-1)to(0-2);

	\end{tikzpicture}
        \caption{A Game with No Nash Equilibria}
        \label{fig:forgetfulpenaltyshootout}
    \end{minipage}%
    \begin{minipage}{0.5\textwidth}
        \centering
        \tikzset{
		solid node/.style={circle,draw,inner sep=1.5,fill=black},
		hollow node/.style={circle,draw,inner sep=1.5}
	}
	\begin{tikzpicture}[scale=1.09,font=\footnotesize]
		\tikzstyle{level 1}=[level distance=15mm,sibling distance=15mm]
		\tikzstyle{level 2}=[level distance=15mm,sibling distance=15mm]
		\node(0)[solid node,label=above:{P1}]{}
		child{node(1)[solid node,label=left:{P1}]{}
			child{node[hollow node,label=below:{$1$}]{} edge from parent node[left]{$C$}}
			child{node[hollow node,label=below:{$4$}]{} edge from parent node[right]{$E$}}
			edge from parent node[left,xshift=-3]{$C$}
		}
		child{node(2)[hollow node,label=below:{$0$}]{}
			edge from parent node[right,xshift=3]{$E$}
		};
		\draw[loosely dotted,very thick](1)to[out=135,in=180](0);
	\end{tikzpicture}
        \caption{The Absent-minded Taxi Driver}
        \label{fig:taxidriver}
    \end{minipage}
\end{figure}

The canonical example of an imperfect recall game is that of the absent-minded taxi driver (Figure~\ref{fig:taxidriver}), introduced in~\cite{piccione1997interpretation}. 
Furthermore,~\cite{piccione1997interpretation} showed that the expected utility of any player in an EFG with imperfect recall can be written as a polynomial, where each variable is associated with an information set (i.e.,  a collection of decision nodes which a player cannot distinguish between). In particular, these utilities define an $n$-variable polynomial game $\game(\range{\noplayers}, \actions, \payoff)$ over the simplex. 
For clarity, we derive the corresponding polynomial utility function of the game in Figure~\ref{fig:taxidriver}. Since the player has imperfect recall and cannot remember if they are in the first or second decision node, they will select a distribution over $\{C,E\}$ to be applied to both decision nodes. If the player selects $C$ with probability $x_1$ and $E$ with probability $x_2$, then their expected payoff is given by  $x_1^2+4x_1x_2$. 

In general, though, a Nash equilibrium might not exist in EFGs with imperfect recall. For instance, the game in Figure~\ref{fig:forgetfulpenaltyshootout}
was introduced by~\cite{wichardt2008existence} and does \emph{not} have a Nash equilibrium. Several recent works have further established hardness results for deciding the existence of or computing NE in EFGs with imperfect recall~\cite{koller1992complexity,tewolde2023computational,tewolde2024imperfect,gimbert2020bridge}. Theorem~\ref{thm:PolynomialConvexityOverSetHardness} additionally guarantees the hardness of \emph{verifying} concavity/monotonicity of EFGs with imperfect recall over simplex action sets.

\subsection{Experimental Methodology \& Results}\label{sec:experiments}

\paragraph{Example 1: The absent-minded taxi driver in Figure \ref{fig:taxidriver}.}
Our results in Sections~\ref{sec:SemidefiniteProgramming} and~\ref{sec:CharacteristicsConcaveGames} motivate two lines of investigation. First, we use the SOS hierarchy in Eq.~\eqref{eq:SOSConcaveHierarchy} to \emph{verify strict concavity/monotonicity}.
In the case of the game in Figure~\ref{fig:taxidriver}, we let $x$ denote the probability of choosing $C$ and $1 - x$ be the probability of choosing $E$. 
We use the SDP hierarchy in Eq.~\eqref{eq:SOSMaximumEigennvalue} to certify SOS-monotonicity of the polynomial $u(x) = -3 x^2 + 4x$. We select $\ell=2$ and obtain $\gls{SOSHierarchy}_{2}(\game) \approx -6 < 0$. 
Then, by Statement~\ref{thm:SOSMaximumEigenvalueConvergence_4} of Theorem~\ref{thm:SOSMaximumEigenvalueConvergence}, the game is strictly monotone. 
This additionally guarantees that the solution of the game is unique~\cite{rosen1965existence}.

\paragraph{Example 2: A game with no Nash equilibria in Figure \ref{fig:forgetfulpenaltyshootout}.}

Next, it follows as a consequence of Theorem~\ref{thm:SOSMonotoneGamesDenseInMonotoneGames} that we can use the program in Eq.~\eqref{eq:ClosestSOSMonotoneGameProgram} to find an SOS-monotone game $\game$ which is closest to the zero-sum game in Figure~\ref{fig:forgetfulpenaltyshootout}, in the sense of the norm defined in Eq.~\eqref{eq:topologynorm}. 
By letting $x_1$ denote the probability that P1 selects $L$ at information set $\{a_1\}$, $x_2$ denote the probability that P1 selects $L$ at information set $\{a_2, a_3\}$, and $y$ denote the probability that P2 selects $r$, we obtain the payoff functions for P1 and P2 as follows: 
\begin{equation*}
u_1(x_1,x_2,y) = 10 x_1 x_2 + 2 x_1 y + 2 x_2 y - 6 x_1 - 6 x_2 - 2y + 1,
\end{equation*}
and $u_2 = -u_1$, respectively.
Recall from~\cite{wichardt2008existence} that this two-player zero-sum EFG does \emph{not} have a NE and that the game is neither concave nor monotone.
We first run our hierarchy of SOS optimization problems in Eq.~\eqref{eq:SOSMaximumEigennvalue} at level 2, and we attain an objective value of $\gls{SOSHierarchy}_{2}(\game) \approx 10 > 0$. 
Then, we run our program in Eq.~\eqref{eq:ClosestSOSMonotoneGameProgram} with additional constraints that $\game$ has to be zero-sum and that the information structure of the EFG has to be preserved. To preserve the information structure of the game, we select the monomial basis for the new payoff functions to be precisely the monomial basis that can appear in the original game.
We obtain the closest \gls{SOS}-monotone game $\game'$ given by:
\begin{equation*}
u'_1 (x_1, x_2, y) = - 8 x_1 y - 8 x_2 y - 16 x_1 - 16 x_2 - 12 y - 9, 
\end{equation*}
and $u'_2 = - u'_1$.
The distance between the two games, \(\norm{\game - \game'}\), which is defined in Eq. \eqref{eq:topologynorm}, is in this case simply $\norm{\coeffs(u_1) - \coeffs(u_1')}_{\infty}$ and equals $10$. 
On the other hand, the payoff function of this game is multilinear and indeed the zero-sum game is monotone if and only if the term $x_1 x_2$ has coefficient $0$.
This is in line with the experimental results.
The modified game $\game'$ has zero symmetrized Jacobian matrix and is, thus, negative semidefinite.
%
Moreover, since the modified game is SOS-monotone, it has a NE in behavioral strategies. 


Finally, we remark that the above games are canonical examples of EFGs with imperfect recall -- going forward, we utilize our framework to study larger EFGs and aim to study the scalability of our approach. For brevity, we provide high-level descriptions of these experiments and defer full experimental details to Appendix~\ref{app:additionalexperiments}.

\paragraph{Example 3: A degree-4 strictly monotone general-sum game.}
\cite[Theorem 2.3]{ahmadi2013np} introduces a method to construct (strictly) convex polynomials of degree 4. Using this method, we construct a two-player game with degree-4 polynomial utility functions that is strictly concave (the payoff functions are deferred to Appendix~\ref{app:additionalexperiments}). P1 and P2 choose their actions $(x_1, x_2)$ and $(y_1, y_2)$ from a two-dimensional simplex respectively. By running our hierarchy of SOS optimization problems in Eq.~\eqref{eq:SOSMaximumEigennvalue} for monotonicity, we obtain an objective value $-1$ at level $4$, thus certifying that the game is strictly monotone and also SOS-monotone.

\paragraph{Example 4: A degree-5 zero-sum game.} 

We create a two-player zero-sum EFG with imperfect recall as shown in Figure \ref{fig:exampleforsimulation}, where the payoffs on each leaf are for P1. 
In this example, P1 makes four moves before P2 makes a move, and P1 is absent-minded. 
By letting $x$ denote the probability that P1 chooses $L$ and $y$ denote the probability that P2 chooses $l$, we obtain the payoffs for P1 and P2 as follows: 
\begin{equation*}
    u_1(x, y) = - 16 x^4 y + 25 x^4 + 74 x^3 y - 59 x^3 - 89 x^2 y + 49 x^2 + 45 x y - 19 x - 8 y + 3,
\end{equation*}
and $u_2 = - u_1$. 
We run our program in Eq.~\eqref{eq:ClosestSOSMonotoneGameProgram} to find the closest SOS-monotone game. Two additional constraints are imposed to retain the properties of the original EFG: The modified game has to be zero-sum, and the information structure of the original EFG has to be preserved. To preserve the information structure of the game, we select the monomial basis for the new payoff functions to be precisely the monomial basis that can appear in the original game.
The following modified payoff functions are found:
\begin{equation*}
    u_1'(x, y) = - 5.6 x^4 y - 6 x^4 + 32.8 x^3 y - 22.9 x^3 - 75.1 x^2 y - 4 x y - 68 x - 57 y - 46,
\end{equation*}
and $u_2' = - u_1'$, with $\norm{\game - \game'} = 49$.

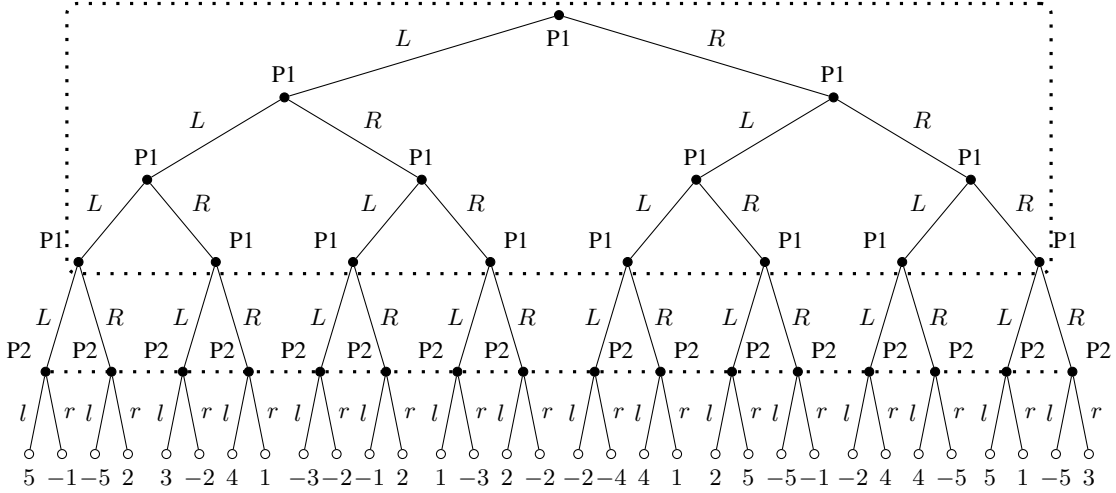
\begin{figure}
    \centering
    \begin{tikzpicture}[scale=0.73,font=\footnotesize]
    \tikzstyle{solid node}=[circle,draw,inner sep=1.2,fill=black];
    \tikzstyle{hollow node}=[circle,draw,inner sep=1.2];
    \tikzstyle{level 1}=[level distance=15mm,sibling distance=100mm]
    \tikzstyle{level 2}=[level distance=15mm,sibling distance=50mm]
    \tikzstyle{level 3}=[level distance=15mm,sibling distance=25mm]
    \tikzstyle{level 4}=[level distance=15mm,sibling distance=12mm,level distance=20mm]
    \tikzstyle{level 5}=[level distance=15mm,sibling distance=6mm]
		\node(0)[solid node,label=below:{P1}]{}
		child{node[solid node,label=above:{P1}]{}
			child{node[solid node,label=above:{P1}]{}
				child{node(l3-l)[solid node,label=above left:{P1}]{} 
                        child{node(P2-l)[solid node,label=above left:{P2}]{} 
                            child{node[hollow node, label=below:{$5$}]{} edge from parent node[left]{$l$}}
                            child{node[hollow node, label=below:{$-1$}]{} edge from parent node[right]{$r$}}
                            edge from parent node[left]{$L$}}
                        child{node[solid node,label=above left:{P2}]{} 
                            child{node[hollow node, label=below:{$-5$}]{} edge from parent node[left]{$l$}}
                            child{node[hollow node, label=below:{$2$}]{} edge from parent node[right]{$r$}}
                            edge from parent node[right]{$R$}}
                        edge from parent node[above left]{$L$}}
				child{node[solid node,label=above right:{P1}]{} 
                        child{node[solid node,label=above left:{P2}]{} 
                            child{node[hollow node, label=below:{$3$}]{} edge from parent node[left]{$l$}}
                            child{node[hollow node, label=below:{$-2$}]{} edge from parent node[right]{$r$}}
                            edge from parent node[left]{$L$}}
                        child{node[solid node,label=above left:{P2}]{} 
                            child{node[hollow node, label=below:{$4$}]{} edge from parent node[left]{$l$}}
                            child{node[hollow node, label=below:{$1$}]{} edge from parent node[right]{$r$}}
                            edge from parent node[right]{$R$}}
                        edge from parent node[above right]{$R$}}
				edge from parent node[above left]{$L$}}
			child{node[solid node,label=above:{P1}]{}
				child{node[solid node,label=above left:{P1}]{} 
                        child{node[solid node,label=above left:{P2}]{} 
                            child{node[hollow node, label=below:{$-3$}]{} edge from parent node[left]{$l$}}
                            child{node[hollow node, label=below:{$-2$}]{} edge from parent node[right]{$r$}}
                            edge from parent node[left]{$L$}}
                        child{node[solid node,label=above left:{P2}]{} 
                            child{node[hollow node, label=below:{$-1$}]{} edge from parent node[left]{$l$}}
                            child{node[hollow node, label=below:{$2$}]{} edge from parent node[right]{$r$}}
                            edge from parent node[right]{$R$}}
                        edge from parent node[above left]{$L$}}
				child{node[solid node,label=above right:{P1}]{} 
                        child{node[solid node,label=above left:{P2}]{} 
                            child{node[hollow node, label=below:{$1$}]{} edge from parent node[left]{$l$}}
                            child{node[hollow node, label=below:{$-3$}]{} edge from parent node[right]{$r$}}
                            edge from parent node[left]{$L$}}
                        child{node[solid node,label=above left:{P2}]{} 
                            child{node[hollow node, label=below:{$2$}]{} edge from parent node[left]{$l$}}
                            child{node[hollow node, label=below:{$-2$}]{} edge from parent node[right]{$r$}}
                            edge from parent node[right]{$R$}}
                        edge from parent node[above right]{$R$}}
				edge from parent node[above right]{$R$}}
			edge from parent node[above left]{$L$}}
		child{node[solid node,label=above:{P1}]{}
			child{node[solid node,label=above:{P1}]{}
				child{node[solid node,label=above left:{P1}]{} 
                        child{node[solid node,label=above right:{P2}]{} 
                            child{node[hollow node, label=below:{$-2$}]{} edge from parent node[left]{$l$}}
                            child{node[hollow node, label=below:{$-4$}]{} edge from parent node[right]{$r$}}
                            edge from parent node[left]{$L$}}
                        child{node[solid node,label=above right:{P2}]{} 
                            child{node[hollow node, label=below:{$4$}]{} edge from parent node[left]{$l$}}
                            child{node[hollow node, label=below:{$1$}]{} edge from parent node[right]{$r$}}
                            edge from parent node[right]{$R$}}
                        edge from parent node[above left]{$L$}}
				child{node[solid node,label=above right:{P1}]{} 
                        child{node[solid node,label=above right:{P2}]{} 
                            child{node[hollow node, label=below:{$2$}]{} edge from parent node[left]{$l$}}
                            child{node[hollow node, label=below:{$5$}]{} edge from parent node[right]{$r$}}
                            edge from parent node[left]{$L$}}
                        child{node[solid node,label=above right:{P2}]{} 
                            child{node[hollow node, label=below:{$-5$}]{} edge from parent node[left]{$l$}}
                            child{node[hollow node, label=below:{$-1$}]{} edge from parent node[right]{$r$}}
                            edge from parent node[right]{$R$}}
                        edge from parent node[above right]{$R$}}
				edge from parent node[above left]{$L$}}
			child{node[solid node,label=above:{P1}]{}
				child{node[solid node,label=above left:{P1}]{} 
                        child{node[solid node,label=above right:{P2}]{} 
                            child{node[hollow node, label=below:{$-2$}]{} edge from parent node[left]{$l$}}
                            child{node[hollow node, label=below:{$4$}]{} edge from parent node[right]{$r$}}
                            edge from parent node[left]{$L$}}
                        child{node[solid node,label=above right:{P2}]{} 
                            child{node[hollow node, label=below:{$4$}]{} edge from parent node[left]{$l$}}
                            child{node[hollow node, label=below:{$-5$}]{} edge from parent node[right]{$r$}}
                            edge from parent node[right]{$R$}}
                        edge from parent node[above left]{$L$}}
				child{node(l3-r)[solid node,label=above right:{P1}]{} 
                        child{node[solid node,label=above right:{P2}]{} 
                            child{node[hollow node, label=below:{$5$}]{} edge from parent node[left]{$l$}}
                            child{node[hollow node, label=below:{$1$}]{} edge from parent node[right]{$r$}}
                            edge from parent node[left]{$L$}}
                        child{node(P2-r)[solid node,label=above right:{P2}]{} 
                            child{node[hollow node, label=below:{$-5$}]{} edge from parent node[left]{$l$}}
                            child{node[hollow node, label=below:{$3$}]{} edge from parent node[right]{$r$}}
                            edge from parent node[right]{$R$}}
                        edge from parent node[above right]{$R$}}
				edge from parent node[above right]{$R$}}
			edge from parent node[above right]{$R$}};
            \node[draw=black, loosely dotted, very thick, fit=(0)(l3-l)(l3-r), inner sep=2.5pt, rounded corners]{};
		\draw[loosely dotted,very thick](P2-l)to(P2-r);

\end{tikzpicture}
    \caption{Another Zero-sum Game with No Nash Equilibria}
    \label{fig:exampleforsimulation}
\end{figure}%

\paragraph{Example 5: A degree-8 general-sum game.}
We construct a two-player EFG with imperfect recall where P1 makes six moves before P2 makes two moves. There is one information set for P1 and one information set for P2. P1 has three actions to choose from with probability $x_1$, $x_2$, and $1 - x_1 - x_2$, respectively. P2 also has three actions to choose from with probability $y_1$, $y_2$, and $1 - y_1 - y_2$, respectively. Hence, the game tree has nine layers, including the root and the leaves, and the payoff functions are degree-8 polynomials with monomial basis
\begin{multline*}
    [x_1^6, x_1^5 x_2, x_1^4 x_2^2, x_1^3 x_2^3, x_1^2 x_2^4, x_1 x_2^5, x_2^6, x_1^5, x_1^4 x_2, x_1^3 x_2^2, x_1^2 x_2^3, x_1 x_2^4, x_2^5, x_1^4, x_1^3 x_2, x_1^2 x_2^2, \\x_1 x_2^3, x_2^4, x_1^3, x_1^2 x_2, x_1 x_2^2, x_2^3, x_1^2, x_1 x_2, x_2^2, x_1, x_2, 1] \otimes [y_1^2, y_1 y_2, y_2^2, y_1, y_2, 1],
\end{multline*}
where $\otimes$ is the tensor product. The size of the monomial basis is 168. We do not restrict the EFG to be zero-sum, but instead randomly generate the payoff functions for P1 and P2 by independently sampling the coefficient of each monomial in the basis from a uniform distribution on $[-1, 1]$. 

We run our program in Eq.~\eqref{eq:ClosestSOSMonotoneGameProgram} to find the closest SOS-monotone game with the additional constraint that the information structure of the original EFG has to be preserved, i.e. the new payoff functions have to be polynomials with the same monomial basis.
As in Example 3, we defer the full payoff functions of the game to Appendix~\ref{app:additionalexperiments}.

\paragraph{On Scalability.} A natural limitation of our framework is scalability---while SDPs can be solved with arbitrary accuracy in polynomial time using interior point methods, they are among the most expensive convex relaxations to solve. 
In practice, ``SOS problems involving degree-4 or 6 polynomials are currently limited, roughly speaking, to a handful or a dozen variables''~\cite{ahmadi2019dsos}. We compare the compute times of our proposed hierarchies when applied to the larger-scale examples above. Indeed, while the SOS hierarchies in Examples 3 and 4 can be solved in $\approx0.052$ and $\approx0.009$ seconds respectively, the much larger program for Example 5 took $\approx37.53$ seconds to solve using a standard, off-the-shelf solver. This further motivates future work on scaling our approach using existing methods in the literature~\cite{ahmadi2019dsos,majumdar2020recent,zheng2019sparse,monteiro2024low,han2024low}.
Our code\footnote{Code used to generate the experiments in~\Cref{sec:Applications} can be found in \href{https://github.com/ryanndelion/sos-concave-monotone-games}{our github repo}.} is implemented using the SumOfSquares package for Julia~\cite{legat2017sos,weisser2019polynomial} and run on a MacBook Air with 16 GB RAM. 

\section{Discussion}
\label{sec:conclusion}

In this paper, we have shown that verifying concavity and monotonicity in polynomial games is in general NP-hard. For polynomial games over compact, convex basic semialgebraic sets, we utilize SOS techniques to construct SDP hierarchies that can certify concavity and monotonicity. Moreover, we show that almost all concave/monotone games are strict, and thus can be certified at a finite level of the respective hierarchy. Finally, we introduced $\ell$-SOS-concave and $\ell$-SOS-monotone games, which are certified at some fixed level $\ell$ of the respective SOS hierarchy. This leads to an application for EFGs of imperfect recall, where we are able to find the closest (in terms of an appropriate norm) SOS-concave/monotone game to a canonical EFG which has no Nash equilibria. In addition, in light of the experiments in Section~\ref{sec:experiments}, our work motivates the design of application-specific programs which can find close concave/monotone games while also maintaining structural properties of the original game.


\paragraph{Broader Impact.} While our results are primarily theoretical, we acknowledge that there could be potential societal consequences of our work, none of which we feel must be specifically highlighted.

\paragraph{Acknowledgements}
This work is supported by the MOE Tier 2 Grant (MOE-T2EP20223-0018), Ministry of Education Singapore (SRG ESD 2024 174), the CQT++
Core Research Funding Grant (SUTD) (RS-NRCQT-00002), the National Research Foundation
Singapore and DSO National Laboratories under the AI Singapore Programme (Award Number: AISG2-RP-2020-016), and partially by Project MIS 5154714 of the National Recovery and Resilience Plan, Greece 2.0, funded by the European Union under the NextGenerationEU Program.

\newpage
\bibliographystyle{plain}
\bibliography{bibliography-ryann}
\newpage
\appendix
\renewcommand{\theequation}{A\arabic{equation}}

\section{Additional Related Work}
\label{app:additionalrelatedwork}

\paragraph{Polynomial Games and Semidefinite Programming.} Initially introduced and studied by~\cite{dresher1950polynomial}, polynomial games were viewed as a bridge between finite and continuous games. Although~\cite{dresher1950polynomial} characterized and proved the existence of equilibria in these games, providing computational guarantees for general polynomial games has proven to be a challenging task. \cite{parrilo2006polynomial,laraki2012semidefinite} used semidefinite programming methods to find the value of two-player zero-sum polynomial games, and similar techniques apply to separable games (where utilities take a sum-of-products form)~\cite{stein2008separable}. Recently,~\cite{nie2024nash} also used semidefinite programming techniques to solve for Nash equilibria in $n$-player polynomial games, or otherwise detect the nonexistence of equilibria.
Beyond polynomial games, oracle-based methods have been used to approximately solve continuous games~\cite{adam2021double,kroupa2022multiple}. 

\section{Additional Preliminaries}
\label{app:additionalprelims}
Given a game $\game$ with utility functions $u_i(x)$, the standard Jacobian is defined as follows:
\begin{equation}
\label{eq:DiagonallyConcaveGameSecondOrderTest}
  \jacobian(\action)
    \eqdef 
     \begin{pmatrix}
        \hessian_{\payoff_{1}}(\action)
      & \gradient_{\action_{2}}^{\T} (\gradient_{\action_{1}} \payoff_{1}) (\action)
      & \dots
      & \gradient_{\action_{\noplayers}}^{\T} (\gradient_{\action_{1}} \payoff_{1}) (\action) \\
        \gradient_{\action_{1}}^{\T} (\gradient_{\action_{2}} \payoff_{2}) (\action)
      & \hessian_{\payoff_{2}}(\action)
      & \dots
      & \gradient_{\action_{\noplayers}}^{\T} (\gradient_{\action_{2}} \payoff_{2}) (\action) \\
        \vdots
      & \vdots
      & \ddots
      & \vdots \\
        \gradient_{\action_{1}}^{\T} (\gradient_{\action_{\noplayers}} \payoff_{\noplayers}) (\action)
      & \gradient_{\action_{2}}^{\T} (\gradient_{\action_{\noplayers}} \payoff_{\noplayers}) (\action)
      & \dots
      & \hessian_{\payoff_{\noplayers}}(\action)
    \end{pmatrix}.
\end{equation}

\section{Omitted Proofs from Main Text}
\subsection{Proof of Theorem~\ref{thm:GameConcavityHardness}}
\label{app:proof:hardness}
As mentioned earlier, several works have studied the hardness of verifying convexity in multivariate polynomials~\cite{ahmadi2013np,ahmadi2020complexity}. We state the main theorem for hardness of verifying convexity over a \emph{box} here for completeness:
\begin{theorem}[{\cite[Theorem 2.3]{ahmadi2020complexity}}]
\label{thm:PolynomialConvexityOverSetHardness}
  Deciding whether a polynomial of degree at least $3$ is convex over a box $\aset{D} \eqdef \set{x \in \reals^{n} \given \alpha_{i} \leq x_{i} \leq \beta_{i} \quad i \in \range{n}}$, for $\alpha, \beta \in \reals^{n}$, is strongly \gls{NP}\nobreakdash-hard. 
\end{theorem}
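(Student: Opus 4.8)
The plan is to prove the statement by a strongly polynomial (bit-size-preserving) reduction from a strongly \gls{NP}-hard problem. The natural source is \emph{matrix copositivity} -- equivalently, deciding nonnegativity of a quadratic form over a box -- whose strong \gls{NP}-hardness follows from the classical reductions encoding the stable-set number of a graph (Murty--Kabadi; de Klerk--Pasechnik). So I would begin from an instance given by a rational symmetric matrix $Q\in\reals^{n\times n}$ together with the question ``is $\zeta^\T Q\,\zeta\geq 0$ for all $\zeta\in[0,1]^n$?''. The first reformulation I would record is the standard second-order one: a twice-differentiable $p$ is convex over $\aset{D}$ iff $\nabla^2 p(x)\succeq 0$ for all $x\in\aset{D}$, equivalently iff $d^\T\nabla^2 p(x)\,d\geq 0$ for all $x\in\aset{D}$ and all directions $d$. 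For a \emph{cubic} $p$ the map $x\mapsto\nabla^2 p(x)$ is an affine symmetric-matrix pencil, so convexity over the box is exactly positive semidefiniteness of a linear matrix pencil at every point of $\aset{D}$; the associated bi-form $q(x,d)\eqdef d^\T\nabla^2 p(x)\,d$ is degree one in $x$, degree two in $d$, and its coefficient tensor is the fully symmetric third-derivative tensor of $p$.

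The heart of the argument is to engineer, from $Q$, a cubic $p_Q$ in split variables $(x,y)$ over a suitable box so that the PSD-over-box condition isolates copositivity of $Q$. I would take the third-order tensor of $p_Q$ supported on monomials of type ``one $x$-index, two $y$-indices'' -- a family automatically consistent with full tensor symmetry -- so that $\nabla^2 p_Q$ is a $2\times 2$ block matrix whose lower-right block equals $M(x)=\sum_a x_a S^{(a)}$ for freely chosen symmetric matrices $S^{(a)}$, and then pick the $S^{(a)}$ so that $M(x)$ encodes the form $y^\T Q\,y$. To avoid the degenerate obstruction whereby a vanishing diagonal block forces the off-diagonal block to vanish, I would add strictly convex quadratic regularizers in $x$ and $y$ and translate/scale the box away from the origin, then run a Schur-complement computation showing that, on the shifted box, the pencil is PSD iff $\zeta^\T Q\,\zeta\geq 0$ on $[0,1]^n$. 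Both directions must be verified: copositivity of $Q$ yields PSD-ness of the pencil, while any violating $\zeta$ produces an explicit $(x,d)\in\aset{D}\times\reals^{N}$ with $d^\T\nabla^2 p_Q(x)\,d<0$.

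Two finishing steps remain. To retain \emph{strong} hardness I would check that all coefficients of $p_Q$ and all box endpoints have polynomial bit-size in the encoding of $Q$, so that the (strong) hardness of copositivity transfers verbatim. To lift from degree exactly three to any degree $d\geq 3$, I would append a separable term in a fresh variable $w$ with box coordinate $w\in[0,1]$, for instance $+\,w^{d}$: its Hessian contributes the scalar block $d(d-1)\,w^{d-2}\geq 0$ on the box, so the augmented polynomial has degree exactly $d$ and is convex over the enlarged box iff $p_Q$ is convex over the original one, preserving the equivalence.

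The main obstacle I anticipate lies in the middle step: arranging the cubic so that, simultaneously, (i) its Hessian is a \emph{bona fide} Hessian -- the realizability requirement that the third-derivative tensor be fully symmetric genuinely restricts which affine pencils can arise, which is why the ``one $x$-index, two $y$-indices'' support is chosen -- and (ii) the PSD-over-box test collapses \emph{exactly} to copositivity of $Q$, neither to a strictly stronger nor to a weaker condition. This is precisely where the regularizing quadratics, the shifted box, and the Schur-complement bookkeeping must be tuned and their equivalence verified with care.
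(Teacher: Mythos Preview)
The paper does not prove this statement; it is quoted as Theorem~2.3 of \cite{ahmadi2020complexity} and invoked as a black box in the proof of \cref{thm:GameConcavityHardness}. There is therefore no proof in the present paper to compare your proposal against.

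Your outline is a plausible route to such a result: reduce from matrix copositivity by engineering a cubic whose Hessian-PSD-over-a-box condition encodes the copositivity of $Q$, then pad with a separable $w^{d}$ term in a fresh variable to reach arbitrary degree $d\geq 3$. You correctly flag the main technical difficulty---the full-symmetry constraint on the third-derivative tensor restricting which affine matrix pencils are realizable as Hessians, combined with the need for the Schur-complement reduction to be an \emph{exact} equivalence rather than a one-sided implication---and you do not actually carry this step out, so what you have is a plan rather than a proof. Whether the ``one $x$-index, two $y$-indices'' support together with the quadratic regularizers and box shift you sketch really closes is not established by your description; for the construction that is known to work, the original reference \cite{ahmadi2020complexity} should be consulted.
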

Using this result, we are able to prove NP-hardness of verifying concavity and monotonicity in polynomial games.
\gameconcavityhardness*
\begin{proof}
To prove the statement for concave games, let $p \from \reals^{\nopures} \to \reals$ be a polynomial of degree $3$ or higher, and let $\aset{D} \eqdef \set{\action \in \reals^{\nopures} \given \alpha_{j} \leq \action_{j} \leq \beta_{j} \quad j \in \range{\nopures}}$, for $\alpha, \beta \in \reals^{n}$ and $\alpha_{j} \leq \beta_{j}$, be a box over $\reals^{\nopures}$.
  Consider a two-player polynomial game $\game(\range{2}, \aset{D} \times \aset{D}, \payoff)$, where $\payoff_{1}(\action) = p(\action_{1})$, and $\payoff_{2}(\action) = 0$ for all $\action \in \aset{D} \times \aset{D}$.
  Then, since $\payoff_{2}$ is concave over $\aset{D}$, the game $\game$ is concave if and only if $p$ is concave over $\aset{D}$.
  Thus, from Theorem \ref{thm:PolynomialConvexityOverSetHardness} it follows that verifying whether $\game$ is concave is strongly \gls{NP}\nobreakdash-hard. 

 To prove the statement for monotone games, let $p \from \reals^{\nopures} \to \reals$ be a polynomial of degree $3$ or higher, and let $\aset{D} \eqdef \set{\action \in \reals^{\nopures} \given \alpha_{j} \leq \action_{j} \leq \beta_{j} \quad j \in \range{\nopures}}$, for $\alpha, \beta \in \reals^{n}$ and $\alpha_{j} \leq \beta_{j}$, be a box over $\reals^{\nopures}$.
  Note that the description of $\aset{D}$ satisfies Eq.~\eqref{eq:BasicSemialgebraicSet}, and hence $\aset{D}$ is a convex and compact semialgebraic set. 
  Define $\actions_{1} \eqdef [\alpha_{1}, \beta_{1}] \times \dots \times [\alpha_{\nopures - 1}, \beta_{\nopures - 1}]$, and $\actions_{2} \eqdef [\alpha_{\nopures}, \beta_{\nopures}]$.
  Clearly, $\actions_{1} \times \actions_{2} = \aset{D}$.
  Consider a two-player polynomial game $\game([2], \actions_{1} \times \actions_{2}, \payoff)$, where $\payoff_{1}(\action) = \payoff_{2}(\action) = p(\action)$ for all $\action \in \actions$.
  Then, the game $\game$ is monotone if and only if the operator $\xpar[\big]{- \gradient_{\action_{1}} \payoff_{1}(\action), - \gradient_{\action_{2}} \payoff_{2}(\action)} = - \gradient p(\action)$ is monotone over $\actions_{1} \times \actions_{2}$, or equivalently $p$ is concave over $\aset{D}$.
  Thus, from Theorem \ref{thm:PolynomialConvexityOverSetHardness} it follows that verifying whether $\game$ is a monotone game is strongly \gls{NP}\nobreakdash-hard.
\end{proof}

As a direct consequence of the hardness of verifying the concavity of degree-4 polynomials over the simplex~\cite{guo1996difficulty,ahmadi2013np}, we can obtain the following NP-hardness result for verifying concavity/monotonicity in polynomial games with simplex action sets. 
\begin{restatable}{theorem}{PolynomialOverSimplexHardness}
\label{cor:PolynomialOverSimplexHardness}
    Let $\game(\range{\noplayers}, \actions, \payoff)$ be a polynomial game where $\actions_i\in\Delta^{m_i}$. If for some player $i$, $u_i$ is a polynomial of degree at least 4 with respect to $x_i\in\actions_i$, then verifying whether $\game$ is concave/monotone is strongly NP-hard.
\end{restatable}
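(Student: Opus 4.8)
The plan is to reuse the reduction template from the proof of \cref{thm:GameConcavityHardness}, replacing the box domain with a simplex. The base hardness fact I would invoke is the one referenced right after the statement: deciding whether a given degree-$4$ polynomial $p\from\reals^{k}\to\reals$ is concave \emph{as a function on the standard simplex} $\Delta^{k}$ is strongly \gls{NP}-hard~\cite{guo1996difficulty,ahmadi2013np}. Given such an instance $p$, I would construct the two-player polynomial game $\game(\range{2},\Delta^{k}\times\Delta^{k},\payoff)$ with $\payoff_{1}(\action)=p(\action_{1})$ and $\payoff_{2}(\action)=0$ for every $\action=(\action_{1},\action_{2})\in\Delta^{k}\times\Delta^{k}$. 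Here $\payoff_{1}$ has degree $4\ge 3$ with respect to $\action_{1}$, each $\Delta^{k}$ is a compact, convex \gls{BasicSemialgebraicSet}, and the map $p\mapsto\game$ is computable in time polynomial in the bit-length of $p$, so this genuinely reduces the base problem to a membership question about the game class in the statement.

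For the concavity claim I would observe that, by \cref{def:ConcaveGame}, $\game$ is \gls{Concave} exactly when $\action_{1}\mapsto p(\action_{1})$ is concave over $\Delta^{k}$ (the map $\action_{2}\mapsto 0$ being trivially concave), so deciding concavity of $\game$ is as hard as the base problem. For the monotonicity claim I would keep the same game and note that its pseudogradient is $\payoffgradient(\action)=\bigl(\gradient p(\action_{1}),\,0\bigr)$, hence $\inner{\payoffgradient(\action)-\payoffgradient(\action')}{\action-\action'}=\inner{\gradient p(\action_{1})-\gradient p(\action_{1}')}{\action_{1}-\action_{1}'}$ for all $\action,\action'\in\Delta^{k}\times\Delta^{k}$. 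By \cref{def:MonotoneGame} this means $\game$ is \gls{Monotone} iff $-\gradient p$ is a monotone operator on $\Delta^{k}$, which by the first-order characterization of convexity over a convex set is again equivalent to $p$ being concave over $\Delta^{k}$. Both reductions are polynomial-time, which yields the two strong \gls{NP}-hardness statements.

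I do not expect a real obstacle here; the one point deserving care is that $\Delta^{k}$ is not full-dimensional, so concavity of $\action_{1}\mapsto p(\action_{1})$ over $\Delta^{k}$ is \emph{not} the same as negative semidefiniteness of the ambient Hessian $\hessian_{p}(\action_{1})$. One must therefore consistently use the functional notions of concavity and monotonicity from \cref{def:ConcaveGame,def:MonotoneGame} — which is precisely the notion in the cited hardness result — rather than their Hessian/Jacobian reformulations, and one should double-check that the reference does establish \emph{strong} (not merely weak) \gls{NP}-hardness for concavity over the simplex. Everything else is a routine polynomial-time reduction.
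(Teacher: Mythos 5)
Your proposal is correct, and the concavity half is exactly the paper's reduction: the paper also takes $\payoff_1(\action)=p(\action_1)$, $\payoff_2\equiv 0$ over a product of simplices and invokes the strong NP-hardness of deciding concavity of a quartic over the simplex from the same references. Where you genuinely diverge is the monotonicity half: the paper's proof of this corollary only says it is ``similar'', and the template it points to (the box-domain theorem) proves monotonicity hardness by splitting the domain as $\actions_1\times\actions_2=\aset{D}$ and giving \emph{both} players the common payoff $p(\action)$ -- a trick that does not transfer verbatim to a simplex, since $\Delta^{k}$ is not a product of lower-dimensional simplices. Your route instead reuses the zero-payoff construction and the elementary equivalence ``$-\nabla p$ monotone on a convex set $\Leftrightarrow$ $p$ concave on that set'' (valid for differentiable $p$ even when the set is not full-dimensional), which gives a complete, self-contained argument for the monotonicity claim where the paper is terse; the paper's splitting construction, when it applies, has the cosmetic advantage that the utilities genuinely couple the players, but your version is the more natural adaptation here. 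Your cautionary remark is also well placed: because $\Delta^{k}$ has empty interior, one must work with the functional notions of concavity/monotonicity from \cref{def:ConcaveGame,def:MonotoneGame} (as the cited hardness result does) rather than the ambient Hessian/Jacobian reformulations, and the base result via Motzkin--Straus/Guo is indeed strong NP-hardness, so the reduction goes through.
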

\begin{proof}
    We provide a similar construction to the proof of
    Theorem~\ref{thm:GameConcavityHardness}. Here we show hardness for verifying concavity -- the proof for hardness of verifying monotonicity is similar.
    Let $p \from \reals^{\nopures} \to \reals$ be a polynomial of degree $4$ or higher, and let $\Delta \eqdef \set{x \in \reals^{m} \given \sum_{i\in \range{m}} x_{i} = 1, x_i \geq 0\ \mathrm{for}\ i = 1,\dots,m}$ be the $m$-dimensional simplex.
  Consider a two-player polynomial game $\game(\range{2}, \Delta \times \Delta, \payoff)$, where $\payoff_{1}(\action) = p(\action_{1})$, and $\payoff_{2}(\action) = 0$ for all $\action \in \Delta \times \Delta$.
 $\payoff_{2}$ is concave over $\Delta$, so the game $\game$ is concave if and only if $p$ is concave over $\Delta$.  Thus, from~\cite[Theorem 2.1]{ahmadi2013np} (and as indirectly argued in~\cite{guo1996difficulty} utilizing~\cite[Theorem 1]{motzkin1965maxima}), it follows that verifying whether $\game$ is concave is strongly \gls{NP}\nobreakdash-hard. 
 \end{proof}
\subsection{Proof of Theorem~\ref{thm:SOSMaximumEigenvalueConvergence}}
\label{app:proof:3.2}

\SOSMaximumEigenvalueConvergence*
\begin{proof}
  To prove Statement~\ref{thm:SOSMaximumEigenvalueConvergence_1}, we start by considering some arbitrary $\ell \geq 0$.
  Observe that, if the program in \eqref{eq:SOSMaximumEigennvalue} is infeasible, then $\gls{SOSHierarchy}_{\ell}(\sjacobian) = \infty$, and therefore $\gls{SOSHierarchy}_{\ell}(\sjacobian) \geq \maxeigenvalue\xpar[\big]{\sjacobian(\action)}$ for all $\action \in \actions$.
  On the other hand, if the program in \eqref{eq:SOSMaximumEigennvalue} is feasible, $\gls{SOSHierarchy}_{\ell}(\sjacobian) - y^{\T} \sjacobian(\action) y
  \in \quadmod_{\ell}(\actions \times \ball)$, i.e., there exist $\asos^{*}_{1}, \dots, \asos^{*}_{m_{g}} \in \sos[\action]$ and $p^{*}_{0}, \dots, p^{*}_{m_{h}} \in \polys[\action]$ such that
  \begin{equation}
    \gls{SOSHierarchy}_{\ell}(\sjacobian) - y^{\T} \sjacobian(\action) y
      = \asos^{*}_{0}(\action, y)
        + \sum_{j = 1}^{m_{g}} g_{j}(\action) \asos^{*}_{j}(\action, y)
        + (1 - y^{\T} y) p^{*}_{0}(\action, y) 
        + \sum_{j = 1}^{m_{h}} h_{j}(\action) p^{*}(\action, y),
        \forall \action, y \in \reals^{m}.
  \end{equation}
  In particular, since $h_{1}(\action) = \dots = h_{m_{h}}(\action) = 0$ for all $\action \in \actions$; and $1 - y^{\T} y = 0$ for all $y \in \ball$, by the above we also have that 
  \begin{equation}
  \label{eq:SOSMaximumEigenvalue_Proof1}
    \gls{SOSHierarchy}_{\ell}(\sjacobian) - y^{\T} \sjacobian(\action) y
      = \asos^{*}_{0}(\action, y)
      + \sum_{j = 1}^{m_{g}} g_{j}(\action) \asos^{*}_{j}(\action, y)
      \geq 0,
        \qquad \forall \action \in \actions, \,y \in \ball,
  \end{equation}
  where the last inequality follows because $g_{j}(\action) \geq 0$ for all $\action \in \actions$, and $\asos_{0}, \dots, \asos_{m_{g}} \in \sos[\action, y]$.
  Furthermore, since $\sjacobian$ is symmetric, the maximum eigenvalue of $\sjacobian(\action)$ is given by
  \begin{equation}
    \maxeigenvalue\xpar[\big]{\sjacobian(\action)} 
      \eqdef \max_{y \in \ball} y^{\T} \sjacobian(\action) y
      \overset{\eqref{eq:SOSMaximumEigenvalue_Proof1}}{\leq} \gls{SOSHierarchy}_{\ell}(\sjacobian),
        \qquad \forall \action \in \actions.
  \end{equation}
  Alas, we have established that, $\gls{SOSHierarchy}_{\ell}(\sjacobian) \geq \maxeigenvalue\xpar[\big]{\sjacobian(\action)}$ for all $\action \in \actions$ and $\ell \geq 0$.

  To prove Statement~\ref{thm:SOSMaximumEigenvalueConvergence_2}, observe that as $\ell$ increases, the feasible set of the minimization program in \eqref{eq:SOSMaximumEigennvalue} is expanding, and therefore $\xpar[\big]{\gls{SOSHierarchy}_{\ell}(\sjacobian)}_{\ell \geq 0}$ is nonincreasing.

  To prove Statement~\ref{thm:SOSMaximumEigenvalueConvergence_3}, we, first, prove that $\lim_{\ell \to \infty} \gls{SOSHierarchy}_{\ell}(\sjacobian)$ exists, i.e., the sequence $\xpar[\big]{\gls{SOSHierarchy}_{\ell}(\sjacobian)}_{\ell \geq 0}$ converges.
  In particular, observe that, since the \gls{QuadraticModule} $\quadmod(\actions)$ is Archimedean, the set $\actions$ is compact, and thus, the maximum $\max_{\action \in \actions} \maxeigenvalue\xpar[\big]{\sjacobian(\action)}$ exists.
  Then, by Statement~\ref{thm:SOSMaximumEigenvalueConvergence_1}, it follows that
  \begin{equation}
  \label{eq:SOSMaximumEigenvalue_Proof2}
    \max_{\action \in \actions} \maxeigenvalue\xpar[\big]{\sjacobian(\action)}
      \leq \gls{SOSHierarchy}_{\ell}(\sjacobian),
        \qquad \forall \ell \geq 0.
  \end{equation}
  Consequently, the sequence $\xpar[\big]{\gls{SOSHierarchy}_{\ell}(\sjacobian)}_{\ell \geq 0}$ is nonincreasing (Statement~\ref{thm:SOSMaximumEigenvalueConvergence_2}) and bounded from below by (\cref{eq:SOSMaximumEigenvalue_Proof2}), and therefore it converges.

  Now, recall that $\lim_{\ell \to \infty} \gls{SOSHierarchy}_{\ell}(\sjacobian) = \max_{\action \in \actions} \maxeigenvalue\xpar[\big]{\sjacobian(\action)}$ if for every $\epsilon > 0$, there exists $\ell_0 \geq 0$ such that $\abs{\gls{SOSHierarchy}_{\ell}(\sjacobian) - \max_{\action \in \actions} \maxeigenvalue\xpar[\big]{\sjacobian(\action)}} \leq \epsilon$ for all $\ell \geq \ell_0$.
  We are going to use \emph{Putinar's Positivstellensatz} to show that for every $\epsilon > 0$ such an $\ell_0$ exists.

  First, observe that $\actions \times \ball$ is a \gls{BasicSemialgebraicSet}.
  In particular, we have that
  \begin{equation}
    \actions \times \ball
      \equiv \set*{
        (\action, y) \in \reals^{m} \times \reals^{m}
        \given \begin{aligned}
          &g'_{j}(\action, y) \eqdef g_{j}(\action) \geq 0, \ j \in [m_{g}], \\
          &h'_{0}(\action, y) \eqdef 1 - y^{\T} y = 0, \\
          &h'_{j}(\action, y) \eqdef h_{j} = 0, \ j \in [m_{h}]
        \end{aligned}
      }.
  \end{equation}
  Furthermore, it is not difficult to show that the \gls{QuadraticModule} $\quadmod(\actions \times \ball)$ is Archimedean.

  Indeed, since $\quadmod(\actions)$ is Archimedean, there exists $N \in \naturals$ such that $N - \sum_{i = 1}^{m} \action_{i}^{2} \in \quadmod(\actions)$.
  Therefore, there exist $\asos_{0}, \dots, \asos_{m_{g}} \in \sos[\action]$, and $p_{1}, \dots, p_{m_{h}} \in \polys[\action]$ such that
  \begin{equation}
    N - \sum_{i = 1}^{m} \action_{i}^{2}
      = \asos_{0}(\action) + \sum_{j = 1}^{m_{g}} g_{j}(\action) \asos_{j}(\action) + \sum_{j = 1}^{m_{h}} h_{j}(\action) p_{j}(\action),
        \qquad \forall \action \in \reals^{m}.
  \end{equation}
  Define the polynomial functions $\asos'_{0}, \dots, \asos'_{m_{g}}, p'_{0}, \dots, p'_{m_{h}} \from \reals^{m} \times \reals^{m} \to \reals$ given by
  \begin{equation}
    \begin{aligned}
      \asos'_{j}(\action, y)
        &= \asos_{j}(\action) 
          &&\qquad j = 0, \dots, m_{g} \\
      p'_{0}(\action, y)
        &= 1 \\
      p'_{j}(\action, y)
        &= p_{j}(\action)
          &&\qquad j = 1, \dots, m_{h}
    \end{aligned}
      \qquad \forall \action, y \in \reals^{m},
  \end{equation}
  and observe that, since $\asos_{0}, \dots, \asos_{m_{g}} \in \sos[\action]$, it follows that $\asos'_{0}, \dots, \asos'_{m_{g}} \in \sos[\action, y]$.
  Moreover, observe that
  \begin{equation}
    N + 1 - \sum_{i = 1}^{\nopures} \action_{i}^{2} - \sum_{i}^{\nopures} y_{i}^{2}
      = \asos'_{0}(\action) + \sum_{j = 1}^{m_{g}} g'_{j}(\action) \asos'_{j}(\action) + \sum_{j = 0}^{m_{h}} h'_{j}(\action) p'_{j}(\action),
        \qquad \forall \action, y \in \reals^{\nopures},
  \end{equation}
  and therefore, $N + 1 - \sum_{i = 1}^{\nopures} \action_{i}^{2} - \sum_{i}^{\nopures} y_{i}^{2} \in \quadmod(\actions \times \ball)$.
  Thus, we conclude that $\quadmod(\actions \times \ball)$ is Archimedean.

  Next, observe that, by definition, $\max_{\action \in \actions} \maxeigenvalue\xpar[\big]{\sjacobian(\action)} = \max_{\substack{\action \in \actions \\ y \in \ball}} y^{\T} \sjacobian(\action) y$, which also implies that
  \begin{equation}
    \max_{\action \in \actions} \maxeigenvalue\xpar[\big]{\sjacobian(\action)} - y^{\T} \sjacobian(\action) y
      \geq 0,
        \qquad \forall \action \in \actions, \,y \in \ball.
  \end{equation}
  Therefore, the polynomial
  \begin{equation}
    q_{\epsilon}(\action, y)
      = \max_{\action \in \actions} \maxeigenvalue\xpar[\big]{\sjacobian(\action)} - y^{\T} \sjacobian(\action) y + \epsilon
  \end{equation}
  is \emph{positive} over $\actions \times \ball$ for all $\epsilon > 0$.
  Thus, by Putinar's Positivstellensatz, it follows that $q_{\epsilon}(\action, y) \in \quadmod(\actions \times \ball)$, i.e., there exist $\asos_{0}, \dots, \asos_{m_{g}} \in \sos[\action, y]$, and $p_{0}, \dots, p_{m_{h}} \in \polys[\action, y]$ such that
  \begin{equation}
    \begin{aligned}
      \xpar[\Big]{\max_{\action \in \actions} \maxeigenvalue\xpar[\big]{\sjacobian(\action)} + \epsilon} - y^{\T} \sjacobian(\action) y
        &= q_{\epsilon}(\action, y) \\
        &= \begin{aligned}[t]
          &\asos_{0}(\action, y) 
            + \sum_{j = 1}^{m_{g}} g_{j}(\action) \asos_{j}(\action, y) \\
          &\quad + (1 - y^{\T} y) p_{0}(\action, y) 
            + \sum_{j = 1}^{m_{h}} h_{j}(\action) p_{j}(\action, y)
        \end{aligned}
    \end{aligned}
       \forall \action, y \in \reals^{\nopures}.
  \end{equation}
  
  Let $\ell_0 \geq 0$ be the smallest number such that $2\ell_0 \geq \max \set{\degree(\asos_{0}), \dots, \degree(\asos_{m_{g}}), \degree(p_{0}), \dots, \degree(p_{m_{h}})}$.
  Then, it follows that $\xpar[\Big]{\max_{\action \in \actions} \maxeigenvalue\xpar[\big]{\sjacobian(\action)} + \epsilon, \asos, p}$ is a solution to the \gls{SOS} program in \eqref{eq:SOSMaximumEigennvalue}, where $\ell = \ell_0$.
  Thus, by the optimality of $\gls{SOSHierarchy}_{\ell_0}(\sjacobian)$, and since the sequence $\xpar[\big]{\gls{SOSHierarchy}_{\ell}(\sjacobian)}_{\ell \geq 0}$ is nonincreasing (Statement~\ref{thm:SOSMaximumEigenvalueConvergence_1}), it follows that
  \begin{equation}
    \gls{SOSHierarchy}_{\ell}(\sjacobian)
      \leq \gls{SOSHierarchy}_{\ell_0}(\sjacobian)
      \leq \max_{\action \in \actions} \maxeigenvalue\xpar[\big]{\sjacobian(\action)} + \epsilon,
        \qquad \forall \ell \geq \ell_0.
  \end{equation}
  Alas, by \eqref{eq:SOSMaximumEigenvalue_Proof2}, we conclude that 
  \begin{equation}
    \abs{\gls{SOSHierarchy}_{\ell}(\sjacobian) - \max_{\action \in \actions} \maxeigenvalue\xpar[\big]{\sjacobian(\action)}}
      = \gls{SOSHierarchy}_{\ell}(\sjacobian) - \max_{\action \in \actions} \maxeigenvalue\xpar[\big]{\sjacobian(\action)}
      \leq \epsilon,
        \qquad \forall \ell \geq \ell_0.
  \end{equation}

  To prove Statement~\ref{thm:SOSMaximumEigenvalueConvergence_4}, recall that a \gls{Polynomial} game is \gls{StrictlyMonotone} if, and only if,
  \begin{equation}
    \max_{\action \in \actions} \maxeigenvalue\xpar[\big]{\sjacobian(\action)}
      < 0.
  \end{equation}
  
  First, suppose that $\game$ is not \gls{StrictlyMonotone}.
  Then, by the above, we have that $\max_{\action \in \actions} \maxeigenvalue\xpar[\big]{\sjacobian(\action)} \geq 0$.
  Thus, by Statement~\ref{thm:SOSMaximumEigenvalueConvergence_1}, it follows that
  \begin{equation}
  \label{eq:SOSStrictlyConcave_Proof1}
    \gls{SOSHierarchy}_{\ell}(\sjacobian)
      \geq \max_{\action \in \actions} \maxeigenvalue\xpar[\big]{\sjacobian(\action)}
      \geq 0,
        \qquad \forall \ell \geq 0.
  \end{equation}

  Next, suppose instead that $\game$ is \gls{StrictlyMonotone}.
  Then, $\max_{\action \in \actions} \maxeigenvalue\xpar[\big]{\sjacobian(\action)} < 0$, and therefore, it exists $\epsilon > 0$ such that $\max_{\action \in \actions} \maxeigenvalue\xpar[\big]{\sjacobian(\action)} + \epsilon < 0$.
  Moreover, by Statement~\ref{thm:SOSMaximumEigenvalueConvergence_3}, we also have that
  \begin{equation}
    \lim_{\ell \to \infty} \gls{SOSHierarchy}_{\ell}(\sjacobian)
      = \max_{\action \in \actions} \maxeigenvalue\xpar[\big]{\sjacobian(\action)}.
  \end{equation}
  Therefore, by definition, there exists $\ell_0 \geq 0$ such that $\abs{\gls{SOSHierarchy}_{\ell_0}(\sjacobian) - \max_{\action \in \actions} \maxeigenvalue\xpar[\big]{\sjacobian(\action)}} \leq \epsilon$, and thus
  \begin{equation}
    \gls{SOSHierarchy}_{\ell_0}(\sjacobian)
      \leq \max_{\action \in \actions} \maxeigenvalue\xpar[\big]{\sjacobian(\action)} + \epsilon
      < 0.
  \end{equation}

Statement~\ref{thm:SOSMaximumEigenvalueConvergence_5} follows from standard results in semidefinite programming.
\end{proof}

\subsection{Proof of Theorem~\ref{thm:StrictlyMonotoneGamesFullMeasure}}
\label{app:proof:3.3}
\StrictlyMonotoneGamesFullMeasure*
\begin{proof}
Let $\gls{MonotoneGames}_{(\noplayers, d)}$ and $\gls{StrictlyMonotoneGames}_{(\noplayers, d)}$ denote the sets of \param{\noplayers}-player, \param{d}-degree \gls{Polynomial} \gls{Monotone} and \gls{StrictlyMonotone} games, respectively. We are going to show that given a compact, convex \gls{BasicSemialgebraicSet} $\actions$ of joint actions, the set $\gls{MonotoneGames}_{(\noplayers, d)} \setminus \gls{StrictlyMonotoneGames}_{(\noplayers, d)}$ has zero \gls{LebesgueMeasure}.
  In particular, define $\mu$ as the canonical \param{\dim(\gls{MonotoneGames}_{(\noplayers, d)})}-dimensional \gls{LebesgueMeasure} on 
  $\mathrm{aff}(\gls{MonotoneGames}_{(\noplayers, d)})$, 
  i.e., the affine hull of $\gls{MonotoneGames}_{(\noplayers, d)}$.
  First, we show that $\gls{MonotoneGames}_{(\noplayers, d)}$ is \param{\mu}-measurable and therefore  the restriction of $\mu$ to $\gls{MonotoneGames}_{(\noplayers, d)}$ (denoted by $\mu\!\restriction_{\gls{MonotoneGames}_{(\noplayers, d)}}$) is well-defined.
  Then, we show that $\mu\!\restriction_{\gls{MonotoneGames}_{(\noplayers, d)}}(\gls{MonotoneGames}_{(\noplayers, d)} \setminus \gls{StrictlyMonotoneGames}_{(\noplayers, d)}) = 0$.
  
  To begin with, observe that by definition:
  \begin{equation}
    \gls{MonotoneGames}_{(\noplayers, d)}
      \equiv \set[\big]{
        \game \in \gls{PolynomialGames}_{(\noplayers, d)}
        \given \sjacobian_{\game}(\action) \succeq 0,
          \ \forall \action \in \actions
      },
  \end{equation}
  where for each $\game \in \gls{MonotoneGames}_{(\noplayers, d)}$, $\sjacobian_{\game}$ is the symmetrized Jacobian matrix of the \gls{PseudoGradient} $\payoffgradient_{\game}$ of $\game$.
  Next, observe that the map $(\game, \action) \mapsto \sjacobian_{\game}(\action)$ is \gls{Polynomial} in $\game$ and $\action$.
  Moreover, the determinant $\amatrix{A} \mapsto \det(\amatrix{A})$ is also \gls{Polynomial} in $\amatrix{A}$.
  Therefore, for all $\game \in \gls{MonotoneGames}_{(\noplayers, d)}$ and $\action \in \actions$, the principal minors $f_{\aset{I}} \from (\game, \action) \mapsto \det\xpar[\big]{\sjacobian_{\game, \aset{I}}(\action)}$, $\aset{I} \in 2^{\range{\nopures}}$, of $\sjacobian_{\game}(\action)$ are \gls{Polynomial} in $\game$ and $\action$.
  Thus, by \emph{Sylvester's Criterion}:
  \begin{equation}
    \gls{MonotoneGames}_{(\noplayers, d)}
      \equiv \set[\big]{
        \game \in \gls{PolynomialGames}_{(\noplayers, d)}
        \given f_{\aset{I}}(\game, \action) \geq 0,
            \ \forall \aset{I} \in 2^{\range{\nopures}},
            \ \action \in \actions
      }. 
  \end{equation}
  It follows by the \emph{Tarski-Seidenberg theorem}~\cite{tarski1952decision,seidenberg1954new} that $\gls{MonotoneGames}_{(\noplayers, d)}$ is a \gls{BasicSemialgebraicSet}, and therefore a Borel set.
  Thus, $\gls{MonotoneGames}_{(\noplayers, d)}$ is \param{\mu}-measurable.
  
  Next, observe that $\gls{MonotoneGames}_{(\noplayers, d)}$ is convex.
  Indeed, for all $\action \in \actions$, define
  \begin{equation}
    \aset{S}_{\action} 
      \eqdef \set{
        \game \in \gls{PolynomialGames}_{(\noplayers, d)}
        \given \sjacobian_{\game}(\action) \succeq 0
      }.
  \end{equation} 
  Observe that the map $J_{\action} \from \game \mapsto \sjacobian_{\game}(\action)$ linear.
  Since $\gls{PolynomialGames}_{(\noplayers, d)}$ is a vector space, $\aset{S}_{x} \equiv J_{\action}(\gls{PolynomialGames}_{(\noplayers, d)})$ is a vector subspace, and therefore convex.
  Thus, 
  \begin{equation}\label{eq:intersection}
    \gls{MonotoneGames}_{(\noplayers, d)} 
      \equiv \bigcap_{\action \in \actions} \aset{S}_{\action}
  \end{equation} 
  is the (uncountable) intersection of convex sets, and therefore it is convex. Note that since the empty set is convex, the statement in Eq.~\eqref{eq:intersection} remains valid even if the intersection is empty.

  Let us now consider the set 
  \begin{equation}
    \gls{StrictlyMonotoneGames}_{(\noplayers, d)}
      \equiv \set[\big]{
        \game \in \gls{PolynomialGames}_{(\noplayers, d)}
        \given \jacobian_{\game}(\action) \succ 0,
          \ \forall \action \in \actions
      }.
  \end{equation} 
  Observe that for all $d \geq 0$, $\gls{StrictlyMonotoneGames}_{(\noplayers, d)}$ is non-empty as the game $\game'$ with payoff functions $\payoff_{i} \from \action \mapsto \frac{1}{2} \norm{\action_{i}}^{2}$, for all $i \in \range{\noplayers}$, is \gls{StrictlyMonotone}.
  We show that $\gls{StrictlyMonotoneGames}_{(\noplayers, d)} \supseteq \interior(\gls{MonotoneGames}_{(\noplayers, d)})$ with respect to the relative topology.

  Let $\game_{0} \in \interior(\gls{MonotoneGames}_{(\noplayers, d)})$, and suppose $\game_{0} \notin \gls{StrictlyMonotoneGames}_{(\noplayers, d)}$.
  Then, by definition, there exists $\action_{0} \in \actions$ such that $\sjacobian_{\game_{0}}(\action) \not\succ 0$, i.e., it exists a vector $u \in \reals^{\nopures} \setminus \set{0}$ such that $u^{\T} \sjacobian_{\game_{0}}(\action_{0}) u = 0$.
  Define $L \from \game \mapsto u^{\T} \sjacobian_{\game}(\action_{0}) u$.
  Since $J_{x_0}$ is a linear map, it follows that $L$ a linear functional.
  In particular, we have that $L(\game_{0}) = 0$.
  Moreover, since by definition $\sjacobian_{\game}(\action) \succeq 0$ for all $\game \in \gls{MonotoneGames}_{(\noplayers, d)}$ and $\action \in \actions$, we also have that $L(\gls{MonotoneGames}_{(\noplayers, d)}) \geq 0$.
  Finally, since $\gls{StrictlyMonotoneGames}_{(\noplayers, d)}$ is non-empty, by definition we have that $L(\gls{StrictlyMonotoneGames}_{(\noplayers, d)}) > 0$, and therefore $L$ is non-trivial, i.e., $L \not\equiv 0$.
  Thus, $L$ describes a non-trivial supporting hyperplane to $\gls{MonotoneGames}_{(\noplayers, d)}$ containing $\set{\game_0}$. Then, by a version of the \emph{Separating Hyperplane theorem} \cite[Theorem~11.6, p.~100]{rockafellar1970convex}, we may conclude that $\game_{0} \notin \interior(\gls{MonotoneGames}_{(\noplayers, d)})$, which is a contradiction.
  Thus, $\game_{0} \in \gls{StrictlyMonotoneGames}_{(\noplayers, d)}$, and therefore it follows that $\gls{StrictlyMonotoneGames}_{(\noplayers, d)} \supseteq \interior(\gls{MonotoneGames}_{(\noplayers, d)})$.

  Using $\partial$ to denote the boundary of a set, we conclude that, $\gls{MonotoneGames}_{(\noplayers, d)} \setminus \gls{StrictlyMonotoneGames}_{(\noplayers, d)} \subset \partial(\gls{MonotoneGames}_{(\noplayers, d)})$. Moreover as established before, $\gls{MonotoneGames}_{(\noplayers, d)}$ is a \gls{BasicSemialgebraicSet}.
  Thus, by \cite[Theorem~1.8, p.~67]{van1998tame}, it follows that
  \begin{equation} 
    \dim(\gls{MonotoneGames}_{(\noplayers, d)} \setminus \gls{StrictlyMonotoneGames}_{(\noplayers, d)}) 
      \leq \dim\xpar[\big]{\partial(\gls{MonotoneGames}_{(\noplayers, d)})}
      < \dim(\gls{MonotoneGames}_{(\noplayers, d)}),
  \end{equation}
  which, since $\gls{MonotoneGames}_{(\noplayers, d)}$ is \param{\mu}-measurable, allows us to conclude that $\mu\!\restriction_{\gls{MonotoneGames}_{(\noplayers, d)}}(\gls{MonotoneGames}_{(\noplayers, d)} \setminus \gls{StrictlyMonotoneGames}_{(\noplayers, d)}) = 0$.

\end{proof}

\subsection{Proof of Theorem~\ref{thm:SOSMonotoneGamesDenseInMonotoneGames}}
\label{appsec:proof:sos}

\SOSMonotoneGamesDenseInMonotoneGames*
\begin{proof}
  Let $\gls{MonotoneGames}_{(\noplayers, d)}$ denote the set of \param{\noplayers}-player, \param{d}-degree \gls{Polynomial} \gls{Monotone} games.
  First, we show that $\gls{SOSMonotoneGames}_{(\noplayers, d)}$ is dense in $\gls{MonotoneGames}_{(\noplayers, d)}$, i.e., 
  \begin{equation}
    \closure_{\gls{MonotoneGames}_{(\noplayers, d)}} \gls{SOSMonotoneGames}_{(\noplayers, d)}
      \equiv \gls{MonotoneGames}_{(\noplayers, d)}.
  \end{equation}

  Let $\game \in \gls{MonotoneGames}_{(\noplayers, d)}$.
  Furthermore, for each $k \in \naturals$, define $\game_{k}$ as the \param{n}-player, \param{d}-degree \gls{Polynomial} game over $\actions$ with utility functions $\payoff_{k, 1}, \dots, \payoff_{k, \noplayers} \from \actions \to \reals$ given by
  \begin{equation}
    \payoff_{k, i}(\action)
      = \payoff_{i}(\action) - \frac{1}{2k} \norm{\action_{i}}^{2},
        \qquad \forall \action \in \actions,
        \qquad i \in \range{\noplayers}.
  \end{equation}
  Then, the \gls{PseudoGradient} $\payoffgradient_{k}$ of $\game_{k}$ is given by
  \begin{equation}
    \payoffgradient_{k}(\action)
      = \begin{pmatrix}
        \gradient_{\action_{1}} \payoff_{k, 1}(\action) \\
        \vdots \\
        \gradient_{\action_{\noplayers}} \payoff_{k, \noplayers}(\action)
      \end{pmatrix}
      = \begin{pmatrix}
        \gradient_{\action_{1}} \payoff_{1}(\action) - \frac{1}{k} \cdot \action_{1} \\
        \vdots \\
        \gradient_{\action_{\noplayers}} \payoff_{\noplayers}(\action) - \frac{1}{k} \cdot \action_{\noplayers}
      \end{pmatrix},
        \qquad \forall \action \in \actions,
  \end{equation}
  and the symmetrized Jacobian matrix $\sjacobian_{k}$ of $\payoffgradient_{k}$ is given by
  \begin{equation}
    \sjacobian_{k}(\action)
      = \sjacobian(\action) - \frac{1}{k} \cdot \identity,
        \qquad \forall \action \in \actions.
  \end{equation}
  Moreover, since $\game$ is \gls{Monotone}, it follows
  \begin{equation}
    \max_{\action \in \actions} \maxeigenvalue\xpar[\big]{\sjacobian_{k}(\action)}
      = \max_{\action \in \actions} \maxeigenvalue\xpar[\big]{\sjacobian(\action) - \frac{1}{k} \cdot \identity}
      = \max_{\action \in \actions} \maxeigenvalue\xpar[\big]{\sjacobian(\action)} - \frac{1}{k}
      \leq - \frac{1}{k}
      < 0,
  \end{equation} 
  and therefore $\game_{k}$ is \gls{StrictlyMonotone}.
  Subsequently, it follows that $\game_{k}$ is \gls{SOSMonotone}, i.e., $\game_{k} \in \gls{SOSMonotoneGames}_{(\noplayers, d)}$.

  Now, consider the sequence $(\game_{k})_{k = 1}^{\infty}$ of \gls{SOSMonotone} games.
  Observe that
  \begin{equation}
    \lim_{k \to \infty} \norm{\game - \game_{k}}
      = \lim_{k \to \infty} \max_{i \in \range{\noplayers}} \norm{\coeffs(\payoff_{i}) - \payoff(k, i)}_{\infty}
      = \lim_{k \to \infty} \frac{1}{2k}
      = 0.
  \end{equation}
  In other words, $(\game_{k})_{k = 1}^{\infty}$ converges to $\game$.
  Thus, by definition, $\closure_{\gls{MonotoneGames}_{(\noplayers, d)}} \gls{SOSMonotoneGames}_{(\noplayers, d)} \equiv \gls{MonotoneGames}_{(\noplayers, d)}$, i.e., $\gls{SOSMonotoneGames}_{(\noplayers, d)}$ is dense in $\gls{MonotoneGames}_{(\noplayers, d)}$.

  Next, we show that the program in \eqref{eq:ClosestSOSMonotoneGameProgram} may be formulated as a \gls{SDP} and solved in $\bigoh\xpar[\Big]{\log\xpar[\big]{\frac{1}{\epsilon}} \cdot \mathrm{Poly}(\ell^{2})}$, where $\game^{*}(\range{\noplayers}, \actions, \payoff^{*}) \in \gls{PolynomialGames}_{(\noplayers, d)}$, and $\ell \geq 0$.

  First, by \cref{def:SOSMonotoneGame}, the program in \eqref{eq:ClosestSOSMonotoneGameProgram} is equivalent to
  \begin{equation}
  \label{eq:ClosestSOSMonotoneGameProgram_Proof1}
  \begin{aligned}
    &\underset{\game \in \gls{PolynomialGames}_{(\noplayers, d)}}{\textrm{minimize}}
      &&\norm{\game - \game^{*}} \\
    &\text{subject to}
      &&- y^{\T} \sjacobian(\action) y
      \in \quadmod_{\ell}(\actions \times \ball),
  \end{aligned}
  \end{equation}
  where $\sjacobian(\action)$ is the (symmetrized) Jacobian matrix of the \gls{PseudoGradient} of $\game(\range{\noplayers}, \actions, \payoff)$.
  Define the map $f \from \xpar[\big]{\coeffs(\payoff_{1}), \dots, \coeffs(\payoff_{\noplayers}), \action, y} \mapsto - y^{\T} \sjacobian(\action) y$.
  Now, observe that $\xpar[\big]{\coeffs(\payoff_{1}), \dots, \coeffs(\payoff_{\noplayers})} \mapsto \sjacobian(\action)$ is an \emph{affine map}, and therefore $f$ is affine in $\xpar[\big]{\coeffs(\payoff_{1}), \dots, \coeffs(\payoff_{\noplayers})}$, and \gls{Polynomial} in $(\action, y)$.

  Next, by the definition of $\norm{\cdot}$, we have that
  \begin{equation}
    \norm{\game - \game^*}
      = \max_{i \in \range{\noplayers}} \norm{\coeffs(\payoff_{i}) - \coeffs(\payoff^{*}_{i})}_{\infty}.
  \end{equation}
  Thus, the program \eqref{eq:ClosestSOSMonotoneGameProgram_Proof1} is equivalent to
  \begin{equation}
  \label{eq:ClosestSOSMonotoneGameProgram_Proof2}
  \begin{aligned}
    &\underset{
      \lambda, \coeffs(\payoff_{1}), \dots, \coeffs(\payoff_{\noplayers})
    }{\text{minimize}}
      &&\lambda \\
    &\ \ \ \ \text{subject to}
      &&\begin{aligned}[t]
        &\lambda
          \geq \max_{i \in \range{\noplayers}} \norm{\coeffs(\payoff_{i}) - \coeffs(\payoff^{*}_{i})}_{\infty}, \\
        &f\xpar[\big]{\coeffs(\payoff_{1}), \dots, \coeffs(\payoff_{\noplayers}), \action, y} 
          \in \quadmod_{\ell}(\actions \times \ball)
      \end{aligned}
  \end{aligned}
  \end{equation}
  Moreover, the condition $\lambda \geq \max_{i \in \range{\noplayers}} \norm{\coeffs(\payoff_{i}) - \coeffs(\payoff^{*}_{i})}_{\infty}$ is equivalent to
  \begin{subequations}
  \begin{alignat}{2}
    \lambda
      &\geq \coeffs(\payoff_{i})_{j} - \coeffs(\payoff^{*}_{i})_{j},
        &&\qquad \forall i \in \range{\noplayers}, \ j \in \range[\bigg]{\binom{\nopures_{i} + d}{d}}, \\
    \lambda
      &\geq \coeffs(\payoff^{*}_{i})_{j} - \coeffs(\payoff_{i})_{j},
        &&\qquad \forall i \in \range{\noplayers}, \ j \in \range[\bigg]{\binom{\nopures_{i} + d}{d}}.
  \end{alignat}
  \end{subequations}
  Thus, the program in \eqref{eq:ClosestSOSMonotoneGameProgram_Proof2} is also equivalent to
  \begin{equation}
  \label{eq:ClosestSOSMonotoneGameProgram_Proof3}
  \begin{aligned}
    &\underset{
      \lambda, \coeffs(\payoff_{1}), \dots, \coeffs(\payoff_{\noplayers})
    }{\text{minimize}}
      &&\lambda \\
    &\ \ \ \ \text{subject to}
      &&\begin{aligned}[t]
        &\lambda
          \geq \coeffs(\payoff_{i})_{j} - \coeffs(\payoff^{*}_{i})_{j},
            \quad \forall i \in \range{\noplayers}, \ j \in \range[\bigg]{\binom{\nopures_{i} + d}{d}}, \\
        &\lambda
          \geq \coeffs(\payoff^{*}_{i})_{j} - \coeffs(\payoff_{i})_{j},
            \quad \forall i \in \range{\noplayers}, \ j \in \range[\bigg]{\binom{\nopures_{i} + d}{d}}.\\
        &f\xpar[\big]{\coeffs(\payoff_{1}), \dots, \coeffs(\payoff_{\noplayers}), \action, y} 
          \in \quadmod_{\ell}(\actions \times \ball)
      \end{aligned}
  \end{aligned}
  \end{equation}

  Finally, observe that, as $f$ is affine in $\coeffs(\payoff_{1})$, \dots, $\coeffs(\payoff_{\noplayers})$, all the constraints of the program in \eqref{eq:ClosestSOSMonotoneGameProgram_Proof3} are affine in $\lambda$, $\coeffs(\payoff_{1})$, \dots, $\coeffs(\payoff_{\noplayers})$.
  Then, by definition, the program in \eqref{eq:ClosestSOSMonotoneGameProgram_Proof3} is a \gls{SOS} minimization program, and therefore it may be reformulated as a \gls{SDP}.
\end{proof}

\section{Extensive-Form Games with Imperfect Recall}
\label{app:efgs}

\subsection{EFG Preliminaries}
For completeness, we provide the necessary notation for EFGs with imperfect recall and their connection to polynomial optimization. First, we note that in standard game theory, strategies for EFGs lie in the simplex. 

\begin{definition}\label{d:EFG}
    An $n$-player extensive form game $\Gamma$ is a tuple $\Gamma \coloneqq \langle \mathcal{H},\pures,\mathcal{Z}, u, \mathcal{I} \rangle$ where:
    \begin{itemize}
        \item The set $\mathcal{H}$ denotes the states of the game which are decision points for the players. The states $h \in \mathcal{H}$ form a tree rooted at an initial state $r \in \mathcal{H}$.
    
        \item Each state $h \in \mathcal{H}$ is associated with a set of \textit{available actions} $\pures(h)$.
        
        \item The set $\mathcal{N} \coloneqq \{1, \dots, n, c\}$ denotes the set of players of the game. Each state $h \in \mathcal{H}$ admits a label $\mathrm{Label}(h) \in \mathcal{N}$ which denotes the \textit{acting player} at state $h$. The letter $c$ denotes a special player called a \textit{chance player}. Each state $h \in \mathcal{H}$ with $\mathrm{Label}(h)= c$ is additionally associated with a function $\sigma_h: \pures(h)\mapsto [0,1]$ where $\sigma_h(a)$ denotes the probability that the chance player selects action $a \in \pures(h)$ at state $h$, $\sum_{a\in \pures(h)}\sigma_h(a) =1$.  
        
        \item $\mathrm{Next}(a,h)$ denotes the state $h' := \mathrm{Next(a,h)}$ which is reached when player $i\coloneqq \mathrm{Label}(h)$ takes action $a \in \pures(h)$ at state $h$. $\mathcal{H}_i \subseteq \mathcal{H}$ denotes the states $h \in \mathcal{H}$ with $\mathrm{Label}(h) = i$.  
        
        \item $\mathcal{Z}$ denotes the terminal states of the game corresponding to the leafs of the tree. At each $z \in \mathcal{Z}$ no further action can be chosen, so $\pures(z) = \varnothing$ for all $z \in \mathcal{Z}$. Each terminal state $z\in \mathcal{Z}$ is associated with value $u(z)$, where  $u: \mathcal{Z} \to \mathbb{R}$ is called the utility function of the game.
        
        \item The game states $\mathcal{H}$ are further partitioned into \textit{information sets} ascribed to each player, namely $\mathcal{I}_i \in (\mathcal{I}_1,\ldots,\mathcal{I}_n)$. Each information set $I \in \mathcal{I}_i$ encodes groups of nodes that the acting player $i$ cannot distinguish between, and thus the available actions within each infoset must be the same.
		Moreover, the player must play the same strategy in all nodes of the infoset. Formally, 
		if $h_1, h_2 \in I$, then $\pures(h_1) = \pures(h_2)$. With slight abuse of notation, we can consider $\pures(I)$ to be the set of shared available actions for the player in infoset $I$.
		
		\item For notational convenience, we ascribe a singleton information set to each chance node and define $\mathcal{I}_c$ as the collection of these chance node information sets. For each non-terminal node $h\in\mathcal{H}\notin \mathcal{Z}$, we thus define $I_h \in (\mathcal{I}_1,\ldots,\mathcal{I}_n) \cup \mathcal{I}_c$ to be its infoset.
    \end{itemize}
    \end{definition}

The standard assumption in the literature is that of \emph{perfect recall}, wherein no player ever forgets their past history (i.e. their past information sets and actions taken in those information sets) or any information acquired. 
Formally, for any infoset $I \in \mathcal{I}_i$ and for any two nodes $h_1, h_2\in I$, the sequence of Player $i$'s actions from $r$ to $h_1$ and from $r$ to $h_2$ must coincide, 
otherwise they would be able to distinguish between the nodes. Finally, the game is called perfect recall if all players have perfect recall. Otherwise, the game is said to have the imperfect recall property.
The notion of perfect recall has been crucial to establishing convergence results to pure Nash equilibria in extensive-form games, primarily via the concept of \emph{behavioral strategies}:
\begin{definition}[Behavioral Strategy]
	Consider the infosets belonging to player $i$, denoted $I\in \mathcal{I}_i$. Let $\Delta(\pures(I))$ denote the set of probability distributions on the simplex over actions in $\pures(I)$.
	The set of behavioral strategies of a player is denoted by $\sigma_i :\mathcal{I}_i \to \cup_{I\in\mathcal{I}_i} \Delta(\pures(I))$. 
	In particular, at each of their infosets $I$, player $i$ selects a probability distribution over their available actions at the infoset, $\sigma_i(\cdot \vert I) \in \Delta(\pures(I))$.
	Finally, the joint behavioral strategy for all players is denoted $\sigma \coloneqq (\sigma_i)_{i\in \mathcal{N}}$.
\end{definition}
Similarly, mixed strategies can be defined in the following way:
\begin{definition}[Mixed Strategy in Extensive-Form Games]
	Denote by $S_i$ the set of all possible actions across all game states $\mathcal{H}$ for player $i$ in $\Gamma$.
	Then, for all pure actions in the game $s\in S_i$, Player $i$'s mixed strategy $\mu_i$ is given by the probability distribution defined by the probabilities $\mu_i(s)$ of playing strategy $s$.
\end{definition}
Intuitively, one can view behavioral strategies as players randomizing between their possible actions between each information set, and mixed strategies as players randomizing over their strategy sequences prior to playing the game (i.e. ex ante).
Kuhn's theorem provides a meaningful connection between behavioral strategies and mixed strategies in EFGs with perfect recall:
\begin{theorem}[Kuhn's Theorem~\cite{kuhn1953extensive}]
	If player $i$ in an extensive form game has perfect recall, then for any mixed strategy $\mu$ of player $i$ there exists an equivalent behavioral strategy $\sigma$ of player $i$.
\end{theorem}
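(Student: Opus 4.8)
The plan is to construct the desired behavioral strategy $\sigma_i$ directly from the mixed strategy $\mu_i$ by a conditioning argument, and then to verify by a telescoping-product computation that $\sigma_i$ and $\mu_i$ induce the same probability distribution over terminal nodes against \emph{any} fixed behavior of the other players and of chance -- which I take to be the meaning of ``equivalent'' (in particular this gives identical expected payoffs for every player).

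First I would fix the bookkeeping. For $I \in \mathcal{I}_i$, perfect recall guarantees that every node $h \in I$ is preceded by one and the same sequence of player $i$'s own information sets and the actions taken at them; call a pure strategy $s$ of player $i$ \emph{$I$-admissible} if it prescribes exactly those actions along that common sequence, and let $S_i(I)$ be the set of $I$-admissible pure strategies, with the convention $S_i(I) = S_i$ when no player-$i$ move precedes $I$. Then I would define, for $a \in \pures(I)$,
\[
  \sigma_i(a \mid I) \;=\; \frac{\mu_i\bigl(\{\, s \in S_i(I) : s \text{ plays } a \text{ at } I \,\}\bigr)}{\mu_i\bigl(S_i(I)\bigr)}
\]
whenever $\mu_i(S_i(I)) > 0$, and $\sigma_i(\cdot \mid I)$ arbitrarily (say uniform) otherwise. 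This is manifestly a probability distribution on $\pures(I)$, so $\sigma_i$ is a legitimate behavioral strategy; the content is in showing it is equivalent to $\mu_i$.

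For equivalence I would fix a strategy profile $\sigma_{-i}$ of the remaining players (and of chance) and a terminal node $z$. The reach probability of $z$ factors over the nodes on the root-to-$z$ path; the factors at chance nodes and at nodes of players $\ne i$ are common to the two computations, so only player $i$'s contribution must be compared. Let $I_1, \dots, I_k$ be the player-$i$ information sets met on that path, in order, with actions $a_1, \dots, a_k$ (perfect recall forbids revisiting an information set, so these are distinct occurrences). Under $\mu_i$ the contribution is $\mu_i(C)$ with $C = \{\, s : s \text{ plays } a_1 \text{ at } I_1, \dots, a_k \text{ at } I_k \,\}$, while under $\sigma_i$ it is $\prod_{j=1}^{k}\sigma_i(a_j \mid I_j)$. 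The crux -- and the place where perfect recall is essential -- is the identity $S_i(I_{j+1}) = \{\, s \in S_i(I_j) : s \text{ plays } a_j \text{ at } I_j \,\}$: reaching $I_{j+1}$ along this path constrains player $i$ to exactly the commitments forced by $I_j$ plus the choice of $a_j$ at $I_j$, because the path segment between $I_j$ and $I_{j+1}$ contains no player-$i$ move. Substituting the definition of $\sigma_i$ into the product and using $S_i(I_1) = S_i$ (so $\mu_i(S_i(I_1)) = 1$), everything telescopes:
\[
  \prod_{j=1}^{k}\sigma_i(a_j \mid I_j)
    = \prod_{j=1}^{k} \frac{\mu_i\bigl(S_i(I_{j+1})\bigr)}{\mu_i\bigl(S_i(I_j)\bigr)}
    = \mu_i\bigl(S_i(I_{k+1})\bigr) = \mu_i(C),
\]
where $S_i(I_{k+1}) := C$. (If some $\mu_i(S_i(I_j)) = 0$, then $\mu_i(C) = 0$ and $z$ is unreachable under both strategies, so the identity holds trivially.) Hence every terminal node has the same reach probability under $(\mu_i, \sigma_{-i})$ and $(\sigma_i, \sigma_{-i})$, which is exactly what equivalence requires.

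I expect the main obstacle to be making the perfect-recall step fully rigorous: one must pin down precisely what ``the common sequence of player $i$'s information sets and actions preceding $I$'' is, argue that it is independent of the node $h \in I$ chosen (this is precisely the perfect-recall hypothesis, and it also rules out the absent-minded case in which an information set is revisited along a play path), and then verify the nested family $S_i = S_i(I_1) \supseteq S_i(I_2) \supseteq \cdots$ together with the stated one-step descriptions. Once that is in place, the factorization of the reach probability of $z$ and its reduction to player $i$'s contribution is routine.
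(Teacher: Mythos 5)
Your construction is the standard and correct proof of this direction of Kuhn's theorem (mixed $\Rightarrow$ equivalent behavioral), but note that the paper itself does not prove the statement at all: it is quoted as a classical result with a citation to Kuhn, so there is no in-paper argument to compare against. Your route --- defining $\sigma_i(a\mid I)$ as the conditional probability, under $\mu_i$, of playing $a$ at $I$ given consistency with the common player-$i$ history preceding $I$, and then telescoping the product of these conditionals along the player-$i$ information sets met on a root-to-$z$ path --- is exactly the textbook argument, and your handling of the zero-measure case and of the key identity $S_i(I_{j+1})=\{s\in S_i(I_j): s \text{ plays } a_j \text{ at } I_j\}$ is sound, since between consecutive player-$i$ information sets on the path there is no player-$i$ move and perfect recall makes the preceding experience node-independent. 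Two small points to tighten if you formalize this within the paper's framework: the paper's stated perfect-recall condition only requires the \emph{action} sequences from the root to coincide across nodes of an information set, so you should either invoke the usual convention that actions at distinct information sets carry distinct labels (so the action sequence determines the information-set sequence) or strengthen the hypothesis to coincidence of the full $(I,a)$ experience, which is what your nesting argument actually uses; and the paper's definition of a mixed strategy is phrased loosely (a distribution indexed by ``actions across all game states''), whereas your argument correctly treats $\mu_i$ as a distribution over pure strategies, i.e., over assignments of one action to each information set of player $i$ --- worth stating explicitly so that the sets $S_i(I)$ and the event $C$ are well defined.
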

Moreover, computing behavioral strategies in two-player zero-sum games of perfect recall is possible in polynomial time~\cite{koller1992complexity}.
However, once the assumption of perfect recall is relaxed (i.e. when players have imperfect recall), Kuhn's theorem no longer holds and finding a solution even in the two-player zero-sum case becomes NP-complete~\cite{koller1992complexity}. 

\subsection{Imperfect Recall Games}
\cite{wichardt2008existence} introduced an example of a game with no Nash equilibria in behavioral strategies (Figure \ref{fig:forgetfulpenaltyshootout}). Subsequently, a variation of the original game called the forgetful penalty shoot-out game  was introduced in \cite{tewolde2024imperfect} and proceeds as follows: 
Player 1 decides whether to kick a ball Left or Right before the whistle is blown, then decides again right before kicking the ball. At the second decision node, the player has forgotten their previous decision. If the decisions at the two nodes match, Player 1 manages to aim at the goal,
during which Player 2 has to decide to dive Left or Right to stop the ball. Otherwise, the shot goes wide. This game also has no Nash equilibria in behavioral strategies.



When studying imperfect recall games, a key question to ask is whether one should consider mixed strategies or behavioral strategies. In particular, Kuhn's Theorem no longer holds and the convenient sequence form representation is not well-defined.
Indeed, mixed strategies require players to select actions according to a distribution over all possible strategy sequences. For instance, a mixed strategy for Player 1 in the forgetful penalty shoot-out game (Figure \ref{fig:forgetfulpenaltyshootout}) 
could look something like: Kick Left twice in a row with probability $0.5$, and kick Right twice in a row with probability $0.5$. However, this requires the players to have some memory of their previous actions. 
In contrast, behavioral strategies are more natural in imperfect recall games as they do not necessitate a memory requirement, a point which is argued for in Kuhn's original treatment of perfect recall games \cite{kuhn1953extensive}.

Following the work of~\cite{piccione1997interpretation,tewolde2023computational}, we show a construction from imperfect recall EFGs to polynomial utilities via behavioral strategies. First, let $P(h' \vert \sigma, h)$ denote the realization probability of reaching $h'$ given that players using strategy $\sigma$ are at state $h$.
Note that if $h\notin hist(h')$ (i.e., if $h'$ is not reachable from $h$) then the probability is $0$. Intuitively, the realization probability given a behavioral strategy is just the product of choice probabilities along the path from $h$ to $h'$. 
In order to formally define $P(h' \vert \sigma, h)$, we will need some additional notation. First, any node $h\in\mathcal{H}$ uniquely corresponds to a history $\mathrm{hist}(h)$ from root $r$ to $h$. 
\begin{itemize}
	\item Function $\delta(h):\mathcal{H}\to\mathbb{N}$ denotes the depth of the game tree starting from node $h\in\mathcal{H}$.
	\item Function $\nu(h,d):\mathcal{H}\times\mathbb{N}\to\mathcal{H}$ identifies the node ancestor at depth $d \leq \delta$ from node $h$.
	\item Function $\alpha(h,d):\mathcal{H}\times\mathbb{N}\to\cup_{h\in\mathcal{H}}\pures(h)$ identifies the action ancestor at depth $d \leq \delta$ from node $h$.
\end{itemize}
Together, the sequence $(\nu(h,0), \nu(h,1), \dots, \nu(h,\delta(h)))$ uniquely identifies the history of nodes from $r$ to $h$. 
Likewise, the sequence $(\alpha(h,0), \alpha(h,1), \dots, \alpha(h,\delta(h)-1))$ uniquely identifies the history of actions taken from $r$ to $h$. Then, the realization probability of node $h'$ from $h$ if the players use joint strategy profile $\sigma$ is given by:
\begin{definition}[Realization Probability]
	\[
	P(h'\vert\sigma, h) = \prod_{j=\delta(h')}^{\delta(h)-1} \sigma(\alpha(h',j)\vert I_{\nu(h',j)}) \quad \mathrm{if}\ h \in \mathrm{hist}(h').
	\]
\end{definition}

\begin{definition}[Expected Utility for Player $i$]
	For player $i$ at node $h\in \mathcal{H}\setminus \mathcal{Z}$, if strategy profile $\sigma$ is played, their expected utility is given by $U_i (\sigma\vert h) \coloneqq \sum_{z\in\mathcal{Z}}\left(P(z\vert \sigma,h)\cdot u_i(z)\right)$. 
	In its complete form, we can write the expected utility for each player as follows:
	\[
	U_i(\sigma) = \sum_{z\in\mathcal{Z}} \left(\prod_{j=0}^{\delta(z)-1} \sigma(\alpha(z,j)\vert I_{\nu(z,j)})\cdot u_i(z)\right)
	\]
\end{definition}

With some abuse of notation, we can write $P(h \vert \sigma) \coloneqq P(h \vert \sigma, r)$ where $r$ is the root node, and similarly $U_i(\sigma)\coloneqq U_i(\sigma\vert r)$.
Notice that by definition, the expected utility of each player is a polynomial function. In particular, $P(z\vert \sigma,h)\cdot u_i(z)$ is a monomial in $\sigma$ multiplied by a scalar. 

Going forward, we establish several results connecting EFGs with imperfect recall and polynomial games, utilizing some additional notation.
\begin{itemize}
	\item $\ell_i$ denotes the number of infosets of player $i$, i.e. $\ell_i \coloneqq \vert \mathcal{I}_i \vert $. Moreover, fix an ordering $\left(I^1_i, \dots, I_i^{\ell_i}\right)$ of infosets in $\mathcal{I}_i$. 
	\item $m_i^j$ denotes the number of actions in a given infoset $I^j_i \in \mathcal{I}_i$ of player $i$, i.e. $m_i^j \coloneqq \vert \mathcal{A}(I_i^j) \vert$. Moreover, fix an ordering $\left(a^1_i, \dots, a_i^{\ell_i}\right)$ of actions in $\mathcal{A}(I_i^j)$.
	\item The strategy set of a player in information set $I$ is defined on the simplex $\Delta^{\vert \mathcal{A}(I) - 1 \vert}$, where $\Delta^{n-1} \coloneqq \{ x\in\mathbb{R}^n : x_k \geq 0 \ \forall k, \sum_{k=1}^n x_k=1\}$.  
	\item Subsequently, the strategy set of player $i$ over all of their infosets can be written as a Cartesian product of simplices: $\mathcal{S}_i \coloneqq \bigtimes_{j=1}^{\ell_i} \Delta^{ m_i^j - 1}$. 
	Moreover, the strategy set over all players is $\mathcal{S} \coloneqq \bigtimes_{i=1}^n \bigtimes_{j=1}^{\ell_i} \Delta^{m_i^j- 1}$.
	\item A joint strategy $\sigma \in \mathcal{S}$ for the players can hence be uniquely written as a vector $\sigma = (\sigma_{ik}^j)_{ijk} \in \bigtimes_{i=1}^n \bigtimes_{j=1}^{\ell_i} \Delta^{m_i^j - 1} \subset \bigtimes_{i=1}^n \bigtimes_{j=1}^{\ell_i} \mathbb{R}^{m_i^j}$.
\end{itemize} 

Firstly, note that each infoset belonging to a player of an EFG with imperfect recall induces an additional variable in the expected utility function. Clearly, the resultant polynomial utilities can 
themselves be viewed as a polynomial game in the sense of \cite{dresher1950polynomial,parrilo2006polynomial,stein2008separable} (and also falling in our definition of polynomial games $\game$), with the following definition of Nash equilibrium in behavioral strategies:
\begin{definition}[Nash Equilibrium in Behavioral Strategies]
	A joint behavioral strategy $\sigma^* \in \bigtimes_{i=1}^n \bigtimes_{j=1}^{\ell_i} \Delta^{m_i^j - 1}$ is called a Nash equilibrium if for all players $i\in\mathcal{N}$:
	\[
	U_i(\sigma^*) \geq U_i(\sigma_i, \sigma^*_{-i}), \quad \forall \sigma_i\in \mathcal{S}_{i}
	\]
	i.e. no player has incentive to deviate from the behavioral strategy $\sigma^*$ in any of their information sets.
\end{definition}
\begin{remark}
	The definition of Nash equilibria in our setting directly implies that any solution of the corresponding polynomial game defined using the polynomial utility functions is also a solution to the original EFG.
	In particular, the constructed polynomial utilities can be viewed as a generic polynomial game with utilities $u_i(x)$. Here, $x$ denotes the joint action profile of all players (see Section~\ref{sec:PolynomialGamesOverBasicSemialgebraicSets}). Here, the number of variables in $x$ is equal to the total number of infosets over all players, $\sum_{i\in\mathcal{N}} \ell_i$. 
	A joint state $x^*$ is called a Nash equilibrium if the following holds for all players $i\in\mathcal{N}$: $u_i(x^*) \geq u_i(x_i, x_{-i}^*) \forall x_i\in \mathcal{S}_i$. At a Nash equilibrium in the polynomial game, no player has incentive to
	unilaterally deviate in any of the variables they control. This is precisely the definition of Nash equilibrium in behavioral strategies for the original EFG.

\end{remark}

\section{Additional Experimental Details}
\label{app:additionalexperiments}
\paragraph{Example 3.}
We generate a two-player polynomial game with the following payoff functions: 
	\begin{equation*}
    \begin{split}
	    u_1(x_1, x_2, y_1, y_2) = &-0.5 y_2^2 - 0.5 y_1^2 - 0.5 x_2^2 - 0.5 x_1^2 - 9.365 y_2^4 - 9.365 y_1^2 y_2^2 - 9.365 y_1^4 \\ &- 1.171 x_2^2 y_2^2 + 0.08798 x_2^2 y_1 y_2 -  0.9385 x_2^2y_1^2 - 9.3654 x_2^4 + 0.7825 x_1 x_2 y_2^2 \\ &+ 0.5177 x_1 x_2 y_1 y_2 - 0.5465 x_1 x_2 y_1^2 - 0.1310 x_1^2 y_2^2 - 0.1630 x_1^2 y_1 y_2 \\ &- 0.1308 x_1^2 y_1^2 - 9.365 x_1^2 x_2^2 - 9.365 x_1^4,
        \end{split}
	\end{equation*} 
and 
	\begin{equation*}
    \begin{split} u_2(x_1, x_2, y_1, y_2) = &-0.5 y_2^2 - 0.5 y_1^2 - 0.5 x_2^2 - 0.5 x_1^2 - 6.828 y_2^4 - 6.828 y_1^2 y_2^2 - 6.828 y_1^4 \\ &- 0.8535 x_2^2 y_2^2 - 0.8631 x_2^2 y_1 y_2 -  0.5324 x_2^2 y_1^2 - 6.828 x_2^4 - 1.091 x_1 x_2 y_2^2 \\ &- 1.699 x_1 x_2 y_1 y_2 - 0.4118 x_1 x_2 y_1^2 - 0.3886 x_1^2 y_2^2 - 0.9771 x_1^2 y_1 y_2 \\ &- 0.6141 x_1^2 y_1^2 - 6.828 x_1^2 x_2^2 - 6.828 x_1^4. \end{split}
	\end{equation*} 
P1 and P2 choose their actions $(x_1, x_2)$ and $(y_1, y_2)$ from a two-dimensional simplex respectively. This game is certified strictly monotone and SOS-monotone, as we run our hierarchy of SOS optimization problems in Eq.~\eqref{eq:SOSMaximumEigennvalue} and obtain an objective value $-1$ at level $4$. 

\paragraph{Example 5.}
We construct a two-player EFG with imperfect recall where P1 makes six moves before P2 makes two moves. The utility functions for both players are degree-8 polynomials with 4 variables, and we do not restrict the game to be zero-sum. The size of the monomial basis is 168, and we randomly generate the payoff functions for P1 and P2 by independently sampling the coefficient of each monomial in the basis from a uniform distribution on $[-1, 1]$. After rounding the coefficients to 2 s.f. for brevity, the payoff function for P1 is: 

    \parbox[b]{0.99\textwidth}{\raggedright\hangafter=1\hangindent=2em $ \displaystyle
       u_1(x,y) =  -0.42 - 0.46y_2 - 0.98y_1 + 0.93x_2 - 0.11x_1 - 0.78y_2^2 - 0.85y_1y_2 - 0.21y_1^2 + 0.92x_2y_2 + 0.48x_2y_1 + 0.29x_2^2 - 0.37x_1y_2 - 0.97x_1y_1 + 0.65x_1x_2 - 0.44x_1^2 - 0.86x_2y_2^2 - 0.35x_2y_1y_2 - 0.03x_2y_1^2 + 0.02x_2^2y_2 - 0.46x_2^2y_1 + 0.78x_2^3 - 0.87x_1y_2^2 - 0.59x_1y_1y_2 + 0.46x_1y_1^2 + 0.12x_1x_2y_2 + 0.37x_1x_2y_1 - 0.31x_1x_2^2 + 0.89x_1^2y_2 + 0.81x_1^2y_1 - 0.22x_1^2x_2 + 0.92x_1^3 + 0.85x_2^2y_2^2 + 0.34x_2^2y_1y_2 - 0.20x_2^2y_1^2 + 0.53x_2^3y_2 - 0.97x_2^3y_1 + 0.08x_2^4 + 0.98x_1x_2y_2^2 + 0.03x_1x_2y_1y_2 - 0.07x_1x_2y_1^2 + 0.10x_1x_2^2y_2 - 0.04x_1x_2^2y_1 - 0.33x_1x_2^3 + 0.41x_1^2y_2^2 + 0.61x_1^2y_1y_2 - 0.39x_1^2y_1^2 - 0.71x_1^2x_2y_2 + 0.84x_1^2x_2y_1 + 0.69x_1^2x_2^2 + 0.44x_1^3y_2 + 0.13x_1^3y_1 + 0.05x_1^3x_2 + 0.92x_1^4 - 0.10x_2^3y_2^2 - 0.55x_2^3y_1y_2 - 0.61x_2^3y_1^2 + 0.74x_2^4y_2 - 0.65x_2^4y_1 - 0.74x_2^5 - 0.14x_1x_2^2y_2^2 + 0.77x_1x_2^2y_1y_2 - 0.30x_1x_2^2y_1^2 + 0.41x_1x_2^3y_2 + 0.66x_1x_2^3y_1 - 0.62x_1x_2^4 - 0.39x_1^2x_2y_2^2 + 0.11x_1^2x_2y_1y_2 - 0.71x_1^2x_2y_1^2 - 0.14x_1^2x_2^2y_2 + 0.56x_1^2x_2^2y_1 + 0.60x_1^2x_2^3 + 0.26x_1^3y_2^2 + 0.34x_1^3y_1y_2 - 0.47x_1^3y_1^2 + 0.87x_1^3x_2y_2 + 0.29x_1^3x_2y_1 + 0.94x_1^3x_2^2 - 0.42x_1^4y_2 + 0.35x_1^4y_1 - 0.12x_1^4x_2 - 0.43x_1^5 + 0.04x_2^4y_2^2 + 0.15x_2^4y_1y_2 - 0.93x_2^4y_1^2 + 0.58x_2^5y_2 - 0.12x_2^5y_1 - 0.71x_2^6 + 0.93x_1x_2^3y_2^2 + 0.81x_1x_2^3y_1y_2 + 0.57x_1x_2^3y_1^2 + 0.32x_1x_2^4y_2 - 0.16x_1x_2^4y_1 - 0.46x_1x_2^5 + 0.85x_1^2x_2^2y_2^2 - 0.83x_1^2x_2^2y_1y_2 + 0.07x_1^2x_2^2y_1^2 - 0.60x_1^2x_2^3y_2 + 0.07x_1^2x_2^3y_1 + 0.44x_1^2x_2^4 - 0.14x_1^3x_2y_2^2 - 0.91x_1^3x_2y_1y_2 - 0.82x_1^3x_2y_1^2 + 0.52x_1^3x_2^2y_2 + 0.79x_1^3x_2^2y_1 + 0.39x_1^3x_2^3 - 0.04x_1^4y_2^2 - 0.61x_1^4y_1y_2 - 0.37x_1^4y_1^2 + 0.31x_1^4x_2y_2 - 0.85x_1^4x_2y_1 + 0.90x_1^4x_2^2 + 0.49x_1^5y_2 - 0.24x_1^5y_1 - 0.66x_1^5x_2 + 0.58x_1^6 - 0.57x_2^5y_2^2 + 0.72x_2^5y_1y_2 - 0.35x_2^5y_1^2 + 0.50x_2^6y_2 + 0.77x_2^6y_1 - 0.06x_1x_2^4y_2^2 + 0.89x_1x_2^4y_1y_2 - 0.48x_1x_2^4y_1^2 - 0.69x_1x_2^5y_2 + 0.61x_1x_2^5y_1 + 0.43x_1^2x_2^3y_2^2 + 0.16x_1^2x_2^3y_1y_2 - 0.58x_1^2x_2^3y_1^2 + 0.40x_1^2x_2^4y_2 - 0.34x_1^2x_2^4y_1 + 0.50x_1^3x_2^2y_2^2 + 0.20x_1^3x_2^2y_1y_2 - 0.77x_1^3x_2^2y_1^2 + 0.01x_1^3x_2^3y_2 + 0.05x_1^3x_2^3y_1 - 0.80x_1^4x_2y_2^2 - 0.04x_1^4x_2y_1y_2 + 0.26x_1^4x_2y_1^2 - 0.98x_1^4x_2^2y_2 + 0.91x_1^4x_2^2y_1 - 0.77x_1^5y_2^2 - 0.89x_1^5y_1y_2 - 0.72x_1^5y_1^2 + 0.17x_1^5x_2y_2 + 0.70x_1^5x_2y_1 + 0.81x_1^6y_2 - 0.57x_1^6y_1 + 0.31x_2^6y_2^2 + 0.82x_2^6y_1y_2 - 0.59x_2^6y_1^2 - 0.82x_1x_2^5y_2^2 - 0.72x_1x_2^5y_1y_2 + 0.93x_1x_2^5y_1^2 - 0.54x_1^2x_2^4y_2^2 + 0.66x_1^2x_2^4y_1y_2 + 0.69x_1^2x_2^4y_1^2 + 0.97x_1^3x_2^3y_2^2 + 0.28x_1^3x_2^3y_1y_2 + 0.32x_1^3x_2^3y_1^2 + 0.34x_1^4x_2^2y_2^2 - 0.82x_1^4x_2^2y_1y_2 + 0.49x_1^4x_2^2y_1^2 + 0.60x_1^5x_2y_2^2 - 0.95x_1^5x_2y_1y_2 + 0.14x_1^5x_2y_1^2 + 0.96x_1^6y_2^2 - 0.39x_1^6y_1y_2 - 0.28x_1^6y_1^2
       $. }

Similarly, the payoff function for P2 is:

\parbox[b]{0.99\textwidth}{\raggedright\hangafter=1\hangindent=2em $ \displaystyle
       u_2(x,y) =  -0.99 + 0.85y_2 + 0.77y_1 - 0.62x_2 - 0.88x_1 - 0.06y_2^2 - 0.32y_1y_2 - 0.57y_1^2 - 0.94x_2y_2 - 0.76x_2y_1 + 0.66x_2^2 - 0.11x_1y_2 - 0.32x_1y_1 - 0.53x_1x_2 + 0.47x_1^2 + 0.78x_2y_2^2 + 0.79x_2y_1y_2 - 0.98x_2y_1^2 + 0.65x_2^2y_2 - 0.58x_2^2y_1 + 0.01x_2^3 - 0.65x_1y_2^2 - 0.35x_1y_1y_2 + 0.95x_1y_1^2 - 0.86x_1x_2y_2 - 0.57x_1x_2y_1 + 0.76x_1x_2^2 + 0.64x_1^2y_2 + 0.28x_1^2y_1 + 0.86x_1^2x_2 - 0.74x_1^3 + 0.51x_2^2y_2^2 - 0.72x_2^2y_1y_2 - 0.41x_2^2y_1^2 + 0.39x_2^3y_2 - 0.70x_2^3y_1 + 0.37x_2^4 + 0.17x_1x_2y_2^2 - 0.12x_1x_2y_1y_2 - 0.43x_1x_2y_1^2 + 0.80x_1x_2^2y_2 + 0.34x_1x_2^2y_1 + 0.91x_1x_2^3 + 0.77x_1^2y_2^2 + 0.69x_1^2y_1y_2 + 0.64x_1^2y_1^2 + 0.84x_1^2x_2y_2 - 0.41x_1^2x_2y_1 - 0.01x_1^2x_2^2 - 0.46x_1^3y_2 + 0.94x_1^3y_1 - 0.33x_1^3x_2 + 0.65x_1^4 + 0.06x_2^3y_2^2 - 0.53x_2^3y_1y_2 - 0.81x_2^3y_1^2 + 0.44x_2^4y_2 + 0.32x_2^4y_1 + 0.74x_2^5 + 0.63x_1x_2^2y_2^2 + 0.96x_1x_2^2y_1y_2 - 0.21x_1x_2^2y_1^2 + 0.84x_1x_2^3y_2 + 0.13x_1x_2^3y_1 - 0.13x_1x_2^4 + 0.15x_1^2x_2y_2^2 - 0.46x_1^2x_2y_1y_2 - 0.90x_1^2x_2y_1^2 - 0.40x_1^2x_2^2y_2 - 0.07x_1^2x_2^2y_1 + 0.93x_1^2x_2^3 + 0.07x_1^3y_2^2 - 0.56x_1^3y_1y_2 + 0.33x_1^3y_1^2 + 0.08x_1^3x_2y_2 + 0.97x_1^3x_2y_1 + 0.86x_1^3x_2^2 - 0.50x_1^4y_2 - 0.44x_1^4y_1 + 0.59x_1^4x_2 + 0.98x_1^5 + 0.32x_2^4y_2^2 - 0.27x_2^4y_1y_2 + 0.81x_2^4y_1^2 - 0.39x_2^5y_2 - 0.47x_2^5y_1 - 0.23x_2^6 + 0.51x_1x_2^3y_2^2 + 0.64x_1x_2^3y_1y_2 + 0.25x_1x_2^3y_1^2 + 0.44x_1x_2^4y_2 + 0.89x_1x_2^4y_1 - 0.99x_1x_2^5 - 0.11x_1^2x_2^2y_2^2 + 0.14x_1^2x_2^2y_1y_2 + 0.98x_1^2x_2^2y_1^2 - 0.94x_1^2x_2^3y_2 + 0.53x_1^2x_2^3y_1 + 0.82x_1^2x_2^4 + 0.29x_1^3x_2y_2^2 - 0.68x_1^3x_2y_1y_2 + 0.28x_1^3x_2y_1^2 + 0.69x_1^3x_2^2y_2 + 0.08x_1^3x_2^2y_1 - 0.16x_1^3x_2^3 - 0.20x_1^4y_2^2 - 0.27x_1^4y_1y_2 - 0.23x_1^4y_1^2 + 0.62x_1^4x_2y_2 - 0.98x_1^4x_2y_1 - 0.35x_1^4x_2^2 + 0.39x_1^5y_2 + 0.33x_1^5y_1 - 0.59x_1^5x_2 - 0.23x_1^6 + 0.49x_2^5y_2^2 - 0.69x_2^5y_1y_2 + 0.93x_2^5y_1^2 + 0.54x_2^6y_2 + 0.38x_2^6y_1 - 0.28x_1x_2^4y_2^2 - 0.69x_1x_2^4y_1y_2 - 0.22x_1x_2^4y_1^2 - 0.32x_1x_2^5y_2 + 0.58x_1x_2^5y_1 + 0.60x_1^2x_2^3y_2^2 - 0.99x_1^2x_2^3y_1y_2 + 0.64x_1^2x_2^3y_1^2 + 0.69x_1^2x_2^4y_2 + 0.79x_1^2x_2^4y_1 + 0.45x_1^3x_2^2y_2^2 - 0.58x_1^3x_2^2y_1y_2 + 0.59x_1^3x_2^2y_1^2 + 0.39x_1^3x_2^3y_2 - 0.95x_1^3x_2^3y_1 + 0.68x_1^4x_2y_2^2 - 0.50x_1^4x_2y_1y_2 - 0.02x_1^4x_2y_1^2 + 0.60x_1^4x_2^2y_2 - 0.54x_1^4x_2^2y_1 - 0.80x_1^5y_2^2 - 0.22x_1^5y_1y_2 + 1.00x_1^5y_1^2 + 0.99x_1^5x_2y_2 + 0.82x_1^5x_2y_1 - 0.17x_1^6y_2 + 0.32x_1^6y_1 - 0.56x_2^6y_2^2 + 0.61x_2^6y_1y_2 + 0.98x_2^6y_1^2 + 0.76x_1x_2^5y_2^2 + 0.38x_1x_2^5y_1y_2 - 0.16x_1x_2^5y_1^2 - 0.16x_1^2x_2^4y_2^2 - 0.23x_1^2x_2^4y_1y_2 - 0.19x_1^2x_2^4y_1^2 + 0.43x_1^3x_2^3y_2^2 + 0.88x_1^3x_2^3y_1y_2 + 0.33x_1^3x_2^3y_1^2 + 0.09x_1^4x_2^2y_2^2 - 0.46x_1^4x_2^2y_1y_2 + 0.60x_1^4x_2^2y_1^2 + 0.17x_1^5x_2y_2^2 + 0.81x_1^5x_2y_1y_2 - 0.24x_1^5x_2y_1^2 + 0.05x_1^6y_2^2 - 0.72x_1^6y_1y_2 + 0.31x_1^6y_1^2
       $. }

We run our program in Eq.~\eqref{eq:ClosestSOSMonotoneGameProgram} at level 8 of the hierarchy to find the closest SOS-monotone game with an additional constraint that the information structure of the original EFG (i.e., the monomial basis) has to be preserved. This results in a modified game with the following payoff functions:

\fontsize{10pt}{10pt}\selectfont {
\parbox[b]{0.99\textwidth}{\raggedright\hangafter=1\hangindent=2em $ \displaystyle
       u'_1(x,y) =  -0.92 - 0.96y_2 - 1.49y_1 + 0.42x_2 - 0.61x_1 - 1.28y_2^2 - 1.35y_1y_2 - 0.72y_1^2 + 1.03x_2y_2 + 0.76x_2y_1 - 0.21x_2^2 + 0.07x_1y_2 - 0.70x_1y_1 + 0.32x_1x_2 - 0.94x_1^2 - 0.41x_2y_2^2 - 0.33x_2y_1y_2 + 0.38x_2y_1^2 - 0.47x_2^2y_2 - 0.86x_2^2y_1 + 0.29x_2^3 - 0.39x_1y_2^2 - 0.29x_1y_1y_2 + 0.27x_1y_1^2 - 0.34x_1x_2y_2 + 0.05x_1x_2y_1 - 0.81x_1x_2^2 + 0.40x_1^2y_2 + 0.35x_1^2y_1 - 0.60x_1^2x_2 + 0.43x_1^3 + 0.35x_2^2y_2^2 + 0.19x_2^2y_1y_2 - 0.53x_2^2y_1^2 + 0.08x_2^3y_2 - 0.58x_2^3y_1 - 0.39x_2^4 + 0.51x_1x_2y_2^2 - 0.02x_1x_2y_1y_2 + 0.21x_1x_2y_1^2 - 0.12x_1x_2^2y_2 - 0.44x_1x_2^2y_1 - 0.14x_1x_2^3 - 0.08x_1^2y_2^2 + 0.24x_1^2y_1y_2 - 0.78x_1^2y_1^2 - 0.67x_1^2x_2y_2 + 0.44x_1^2x_2y_1 + 0.19x_1^2x_2^2 - 0.02x_1^3y_2 - 0.30x_1^3y_1 + 0.14x_1^3x_2 + 0.42x_1^4 - 0.46x_2^3y_2^2 - 0.49x_2^3y_1y_2 - 0.29x_2^3y_1^2 + 0.30x_2^4y_2 - 0.46x_2^4y_1 - 0.56x_2^5 - 0.43x_1x_2^2y_2^2 + 0.50x_1x_2^2y_1y_2 - 0.40x_1x_2^2y_1^2 + 0.17x_1x_2^3y_2 + 0.41x_1x_2^3y_1 - 0.20x_1x_2^4 - 0.11x_1^2x_2y_2^2 + 0.26x_1^2x_2y_1y_2 - 0.39x_1^2x_2y_1^2 - 0.49x_1^2x_2^2y_2 + 0.10x_1^2x_2^2y_1 + 0.14x_1^2x_2^3 - 0.07x_1^3y_2^2 + 0.30x_1^3y_1y_2 - 0.18x_1^3y_1^2 + 0.48x_1^3x_2y_2 - 0.07x_1^3x_2y_1 + 0.44x_1^3x_2^2 - 0.81x_1^4y_2 - 0.02x_1^4y_1 + 0.33x_1^4x_2 - 0.82x_1^5 - 0.23x_2^4y_2^2 - 0.17x_2^4y_1y_2 - 0.60x_2^4y_1^2 + 0.21x_2^5y_2 - 0.02x_2^5y_1 - 0.35x_2^6 + 0.55x_1x_2^3y_2^2 + 0.47x_1x_2^3y_1y_2 + 0.22x_1x_2^3y_1^2 + 0.14x_1x_2^4y_2 - 0.18x_1x_2^4y_1 - 0.03x_1x_2^5 + 0.41x_1^2x_2^2y_2^2 - 0.57x_1^2x_2^2y_1y_2 - 0.32x_1^2x_2^2y_1^2 - 0.59x_1^2x_2^3y_2 - 0.22x_1^2x_2^3y_1 - 0.01x_1^2x_2^4 - 0.33x_1^3x_2y_2^2 - 0.53x_1^3x_2y_1y_2 - 0.51x_1^3x_2y_1^2 + 0.10x_1^3x_2^2y_2 + 0.33x_1^3x_2^2y_1 - 0.07x_1^3x_2^3 - 0.43x_1^4y_2^2 - 0.22x_1^4y_1y_2 + 0.00x_1^4y_1^2 + 0.08x_1^4x_2y_2 - 0.62x_1^4x_2y_1 + 0.40x_1^4x_2^2 + 0.02x_1^5y_2 + 0.02x_1^5y_1 - 0.17x_1^5x_2 + 0.11x_1^6 - 0.17x_2^5y_2^2 + 0.34x_2^5y_1y_2 - 0.12x_2^5y_1^2 + 0.09x_2^6y_2 + 0.33x_2^6y_1 + 0.01x_1x_2^4y_2^2 + 0.50x_1x_2^4y_1y_2 - 0.21x_1x_2^4y_1^2 - 0.25x_1x_2^5y_2 + 0.22x_1x_2^5y_1 + 0.05x_1^2x_2^3y_2^2 + 0.05x_1^2x_2^3y_1y_2 - 0.43x_1^2x_2^3y_1^2 + 0.05x_1^2x_2^4y_2 - 0.22x_1^2x_2^4y_1 + 0.07x_1^3x_2^2y_2^2 - 0.02x_1^3x_2^2y_1y_2 - 0.61x_1^3x_2^2y_1^2 - 0.24x_1^3x_2^3y_2 - 0.07x_1^3x_2^3y_1 - 0.43x_1^4x_2y_2^2 + 0.04x_1^4x_2y_1y_2 + 0.06x_1^4x_2y_1^2 - 0.62x_1^4x_2^2y_2 + 0.44x_1^4x_2^2y_1 - 0.47x_1^5y_2^2 - 0.43x_1^5y_1y_2 - 0.28x_1^5y_1^2 - 0.01x_1^5x_2y_2 + 0.32x_1^5x_2y_1 + 0.35x_1^6y_2 - 0.12x_1^6y_1 + 0.04x_2^6y_2^2 + 0.41x_2^6y_1y_2 - 0.16x_2^6y_1^2 - 0.43x_1x_2^5y_2^2 - 0.30x_1x_2^5y_1y_2 + 0.51x_1x_2^5y_1^2 - 0.15x_1^2x_2^4y_2^2 + 0.27x_1^2x_2^4y_1y_2 + 0.28x_1^2x_2^4y_1^2 + 0.52x_1^3x_2^3y_2^2 - 0.08x_1^3x_2^3y_1y_2 + 0.15x_1^3x_2^3y_1^2 + 0.06x_1^4x_2^2y_2^2 - 0.40x_1^4x_2^2y_1y_2 + 0.12x_1^4x_2^2y_1^2 + 0.26x_1^5x_2y_2^2 - 0.53x_1^5x_2y_1y_2 + 0.09x_1^5x_2y_1^2 + 0.46x_1^6y_2^2 + 0.02x_1^6y_1y_2 + 0.10x_1^6y_1^2
       $, }
}

and 

\parbox[b]{0.99\textwidth}{\raggedright\hangafter=1\hangindent=2em $ \displaystyle
       u'_2(x,y) =  -1.5 + 0.34y_2 + 0.26y_1 - 1.12x_2 - 1.38x_1 - 0.55y_2^2 - 0.15y_1y_2 - 1.06y_1^2 - 0.83x_2y_2 - 0.48x_2y_1 + 0.16x_2^2 + 0.32x_1y_2 - 0.06x_1y_1 - 1.03x_1x_2 - 0.04x_1^2 + 0.30x_2y_2^2 + 0.65x_2y_1y_2 - 0.94x_2y_1^2 + 0.37x_2^2y_2 - 0.23x_2^2y_1 - 0.49x_2^3 - 0.40x_1y_2^2 - 0.05x_1y_1y_2 + 0.46x_1y_1^2 - 0.90x_1x_2y_2 - 0.60x_1x_2y_1 + 0.25x_1x_2^2 + 0.46x_1^2y_2 + 0.22x_1^2y_1 + 0.35x_1^2x_2 - 1.24x_1^3 + 0.03x_2^2y_2^2 - 0.45x_2^2y_1y_2 - 0.68x_2^2y_1^2 + 0.18x_2^3y_2 - 0.41x_2^3y_1 - 0.13x_2^4 - 0.21x_1x_2y_2^2 - 0.08x_1x_2y_1y_2 - 0.14x_1x_2y_1^2 + 0.70x_1x_2^2y_2 + 0.34x_1x_2^2y_1 + 0.41x_1x_2^3 + 0.32x_1^2y_2^2 + 0.50x_1^2y_1y_2 + 0.17x_1^2y_1^2 + 0.57x_1^2x_2y_2 - 0.38x_1^2x_2y_1 - 0.51x_1^2x_2^2 - 0.30x_1^3y_2 + 0.82x_1^3y_1 - 0.83x_1^3x_2 + 0.15x_1^4 - 0.41x_2^3y_2^2 - 0.23x_2^3y_1y_2 - 0.95x_2^3y_1^2 + 0.31x_2^4y_2 + 0.49x_2^4y_1 + 0.24x_2^5 + 0.27x_1x_2^2y_2^2 + 0.68x_1x_2^2y_1y_2 - 0.11x_1x_2^2y_1^2 + 0.58x_1x_2^3y_2 - 0.06x_1x_2^3y_1 - 0.63x_1x_2^4 - 0.15x_1^2x_2y_2^2 - 0.28x_1^2x_2y_1y_2 - 0.56x_1^2x_2y_1^2 - 0.34x_1^2x_2^2y_2 - 0.01x_1^2x_2^2y_1 + 0.42x_1^2x_2^3 - 0.30x_1^3y_2^2 - 0.40x_1^3y_1y_2 - 0.13x_1^3y_1^2 - 0.27x_1^3x_2y_2 + 0.85x_1^3x_2y_1 + 0.36x_1^3x_2^2 - 0.18x_1^4y_2 - 0.33x_1^4y_1 + 0.09x_1^4x_2 + 0.48x_1^5 - 0.12x_2^4y_2^2 - 0.06x_2^4y_1y_2 + 0.43x_2^4y_1^2 - 0.36x_2^5y_2 - 0.22x_2^5y_1 - 0.73x_2^6 + 0.11x_1x_2^3y_2^2 + 0.34x_1x_2^3y_1y_2 + 0.09x_1x_2^3y_1^2 + 0.26x_1x_2^4y_2 + 0.73x_1x_2^4y_1 - 1.50x_1x_2^5 - 0.33x_1^2x_2^2y_2^2 + 0.18x_1^2x_2^2y_1y_2 + 0.88x_1^2x_2^2y_1^2 - 0.84x_1^2x_2^3y_2 + 0.43x_1^2x_2^3y_1 + 0.32x_1^2x_2^4 - 0.01x_1^3x_2y_2^2 - 0.44x_1^3x_2y_1y_2 + 0.52x_1^3x_2y_1^2 + 0.46x_1^3x_2^2y_2 - 0.00x_1^3x_2^2y_1 - 0.66x_1^3x_2^3 + 0.00x_1^4y_2^2 + 0.01x_1^4y_1y_2 - 0.69x_1^4y_1^2 + 0.28x_1^4x_2y_2 - 0.78x_1^4x_2y_1 - 0.86x_1^4x_2^2 + 0.64x_1^5y_2 + 0.16x_1^5y_1 - 1.09x_1^5x_2 - 0.73x_1^6 + 0.05x_2^5y_2^2 - 0.39x_2^5y_1y_2 + 0.52x_2^5y_1^2 + 0.22x_2^6y_2 + 0.19x_2^6y_1 - 0.47x_1x_2^4y_2^2 - 0.85x_1x_2^4y_1y_2 - 0.06x_1x_2^4y_1^2 - 0.34x_1x_2^5y_2 + 0.26x_1x_2^5y_1 + 0.30x_1^2x_2^3y_2^2 - 0.82x_1^2x_2^3y_1y_2 + 0.56x_1^2x_2^3y_1^2 + 0.59x_1^2x_2^4y_2 + 0.67x_1^2x_2^4y_1 + 0.18x_1^3x_2^2y_2^2 - 0.51x_1^3x_2^2y_1y_2 + 0.51x_1^3x_2^2y_1^2 + 0.24x_1^3x_2^3y_2 - 0.77x_1^3x_2^3y_1 + 0.38x_1^4x_2y_2^2 - 0.31x_1^4x_2y_1y_2 + 0.23x_1^4x_2y_1^2 + 0.60x_1^4x_2^2y_2 - 0.47x_1^4x_2^2y_1 - 0.34x_1^5y_2^2 + 0.12x_1^5y_1y_2 + 0.52x_1^5y_1^2 + 0.58x_1^5x_2y_2 + 0.51x_1^5x_2y_1 + 0.11x_1^6y_2 + 0.16x_1^6y_1 - 0.34x_2^6y_2^2 + 0.19x_2^6y_1y_2 + 0.56x_2^6y_1^2 + 0.41x_1x_2^5y_2^2 + 0.10x_1x_2^5y_1y_2 - 0.38x_1x_2^5y_1^2 - 0.13x_1^2x_2^4y_2^2 - 0.21x_1^2x_2^4y_1y_2 - 0.04x_1^2x_2^4y_1^2 + 0.04x_1^3x_2^3y_2^2 + 0.55x_1^3x_2^3y_1y_2 + 0.05x_1^3x_2^3y_1^2 + 0.08x_1^4x_2^2y_2^2 - 0.18x_1^4x_2^2y_1y_2 + 0.23x_1^4x_2^2y_1^2 - 0.15x_1^5x_2y_2^2 + 0.60x_1^5x_2y_1y_2 - 0.03x_1^5x_2y_1^2 - 0.04x_1^6y_2^2 - 0.29x_1^6y_1y_2 - 0.14x_1^6y_1^2
       $. }

\section{Application: Economic Markets}
\label{app:markets}


\fontsize{11pt}{13pt}\selectfont
Fisher markets are a special case of Arrow-Debreu markets~\cite{arrow1954existence} where competitive equilibria can be efficiently computed for specific classes of utility functions. In particular, Fisher markets are markets with
$n$ buyers and $m$ divisible goods, and a certain amount of each good $j$ in the market, denoted $c_j>0$. Each buyer $i$ comes to the market with a budget $w_i>0$, and their objective is to obtain a bundle of goods $b_i\in \mathbb{R}^m_+$ that maximizes their utility function $u_i:\mathbb{R}^m_+\to \mathbb{R}_+$.

Computing competitive equilibria in Fisher markets is known to be PPAD-complete~\cite{chen2009spending}, but the works of Eisenberg and Gale~\cite{eisenberg1961aggregation,eisenberg1959consensus} showed that equilibrium computation is efficient
if the buyers' utilities are continuous, concave and homogeneous. In recent years, many works have also leveraged techniques from algorithmic game theory to design algorithms that can compute competitive equilibria in Fisher markets in a decentralized fashion~\cite{devanur2002market,jain2005market,devanur2008market,gao2020first,goktas2021convex}. A majority of these prior works focus on Fisher markets where buyers' utilities are linear, quasilinear or Leontief.

In an effort to model more complex utility structures in markets, 
\cite{gao2023infinite} initiated the first study on linear Fisher markets with a continuum of items. Subsequently,~\cite{zhao2023fisher} introduced a variant of Fisher markets which captures the impact of social influence on buyers' utilities, showing that these markets can be viewed as \emph{pseudo-games}, a construction from~\cite{arrow1954existence} which led directly to Rosen's definition of concave games.~\cite{datar2025stability} also utilized a variational inequality approach to study monotone variants of these games, and presented decentralized algorithms that converge to equilibria. Indeed, SOS-concave and SOS-monotone games allow the study of Fisher markets with social influence for which concavity/monotonicity is verifiable. In particular, an economist who is constructing a model for a market can use our proposed methods in two ways. First, they can use the hierarchy in Eq.~\eqref{eq:SOSMaximumEigennvalue} to verify whether a game is concave or monotone. They can also use the hierarchy in Eq.~\eqref{eq:ClosestSOSMonotoneGameProgram} to search within the class of SOS-concave/monotone games in order to ensure that their market model satisfies equilibrium existence and even uniqueness.



\end{document}